\documentclass[a4paper]{article} \def\nodraft {} \pdfoutput=1

\usepackage[a4paper]{geometry} \usepackage{amsmath} \usepackage{amssymb} \usepackage{subcaption}

\usepackage{amsthm}

\usepackage{tikz} \usetikzlibrary{topaths,calc} \usetikzlibrary{positioning,shapes.geometric} \usetikzlibrary{decorations.markings} \usetikzlibrary{decorations.pathmorphing} \usetikzlibrary{patterns}

\newtheorem{thm}{Theorem} \newtheorem{lmm}{Lemma} \newtheorem{crll}{Corollary} 

\theoremstyle{definition} \newtheorem{defn}{Definition} \newtheorem{prop}{Proposition} \newtheorem{expl}{Example}

\newcommand{\Hext}{H_\text{\normalfont{ext}}} \newcommand{\Hint}{H_\text{\normalfont{int}}}

 \DeclareMathOperator{\res}{res} \DeclareMathOperator{\id}{id} \DeclareMathOperator{\im}{im} \DeclareMathOperator{\Aut}{Aut} 

\newcommand{\meet}{\wedge} \newcommand{\join}{\vee}

\newcommand{\hopffg}{\mathcal{H}^\text{\normalfont{fg}}} \newcommand{\hopffgs}{\widetilde{\mathcal{H}}^\text{\normalfont{fg}}} \newcommand{\hopfpos}{\mathcal{H}^\text{\normalfont{P}}} \newcommand{\hopflat}{\mathcal{H}^\text{\normalfont{L}}} \newcommand{\hopflats}{\widetilde{\mathcal{H}}^\text{\normalfont{L}}} \newcommand{\unit}{\text{\normalfont{u}}} \newcommand{\counit}{\epsilon}

\newcommand{\subdiags}{\mathcal{P}} \newcommand{\sdsubdiags}{\mathcal{P}^\text{s.d.}} \newcommand{\sdsubdiagsn}{\widetilde{\mathcal{P}}^\text{s.d.}} 

\newcommand{\fourvtx}{ \ifdefined\nodraft     {     \begin{tikzpicture}[x=1ex,y=1ex,baseline={([yshift=-.5ex]current bounding box.center)}] \coordinate (v) ; \def \n {4}; \def \rad {1}; \foreach \s in {1,...,5} { \def \angle {360/\n*(\s - 1)+45}; \coordinate (u) at ([shift=({\angle}:\rad)]v); \draw (v) -- (u); } \filldraw (v) circle (1pt); \end{tikzpicture}     } \else X
\fi }

\newcommand{\fourvtxgluon}{ \ifdefined\nodraft     {     \begin{tikzpicture}[x=1.5ex,y=1.5ex,baseline={([yshift=-.5ex]current bounding box.center)}] \coordinate (v) ; \def \n {4}; \def \rad {1}; \foreach \s in {1,...,5} { \def \angle {360/\n*(\s - 1)+45}; \coordinate (u) at ([shift=({\angle}:\rad)]v); \draw[gluon] (v) -- (u); } \filldraw (v) circle (1pt); \end{tikzpicture}     } \else X
\fi }

\newcommand{\twothreevtxgluon}{ \ifdefined\nodraft     {     \begin{tikzpicture}[x=1.5ex,y=1.5ex,baseline={([yshift=-.5ex]current bounding box.center)}] \coordinate (i1); \coordinate[below=1 of i1] (i2); \coordinate[right=1 of i1] (v1); \coordinate[right=1 of i2] (v2); \coordinate[right=1 of v1] (o1); \coordinate[right=1 of v2] (o2); \draw[gluon] (i1) -- (v1); \draw[gluon] (i2) -- (v2); \draw[gluon] (v1) -- (o1); \draw[gluon] (v1) -- (v2); \draw[gluon] (v2) -- (o2); \filldraw (v1) circle (1pt); \filldraw (v2) circle (1pt); \end{tikzpicture}     } \else I
\fi }

\newcommand{\simpleprop}{ \ifdefined\nodraft     {     \begin{tikzpicture}[x=1ex,y=1ex,baseline={([yshift=-.5ex]current bounding box.center)}] \coordinate (v) ; \coordinate [right=1 of v] (u); \draw (v) -- (u); \end{tikzpicture}     } \else -
\fi }

\newcommand{\oneloopprop}{ \ifdefined\nodraft     {     \begin{tikzpicture}[x=2ex,y=2ex,baseline={([yshift=-.5ex]current bounding box.center)}] \coordinate (i0) ; \coordinate [right=.5 of i0] (v0); \coordinate [right=1 of v0] (v1); \coordinate [right=.5 of v1] (o0); \coordinate [right=.5 of v0] (vm); \draw (vm) circle(.5); \draw (i0) -- (v0); \draw (o0) -- (v1); \filldraw (v0) circle(1pt); \filldraw (v1) circle(1pt); \end{tikzpicture}     } \else -o- \fi }

\pagestyle{headings}

\pgfdeclaredecoration{complete sines}{initial} {
    \state{initial}[         width=+0pt,         next state=sine,         persistent precomputation={\pgfmathsetmacro\matchinglength{             \pgfdecoratedinputsegmentlength / int(\pgfdecoratedinputsegmentlength/\pgfdecorationsegmentlength)}             \setlength{\pgfdecorationsegmentlength}{\matchinglength pt}         }] {}     \state{sine}[width=\pgfdecorationsegmentlength]{         \pgfpathsine{\pgfpoint{0.25\pgfdecorationsegmentlength}{0.5\pgfdecorationsegmentamplitude}}         \pgfpathcosine{\pgfpoint{0.25\pgfdecorationsegmentlength}{-0.5\pgfdecorationsegmentamplitude}}         \pgfpathsine{\pgfpoint{0.25\pgfdecorationsegmentlength}{-0.5\pgfdecorationsegmentamplitude}}         \pgfpathcosine{\pgfpoint{0.25\pgfdecorationsegmentlength}{0.5\pgfdecorationsegmentamplitude}} }
    \state{final}{} }

\tikzset{     photon/.style={         decoration={complete sines, amplitude=0.15cm, segment length=0.2cm},         decorate         },     fermion/.style={         decoration={             markings,             mark=at position 0.5 with {\node[transform shape, xshift=-0.5mm, fill=black, inner sep=1pt, draw, isosceles triangle]{};}         },         postaction=decorate     },     gluon/.style={         decoration={coil, aspect=0.75, mirror, amplitude=.4mm, segment length=.8mm},         decorate     },      meson/.style={         dashed     },      left/.style={         bend left=90,         looseness=1.75     },      leftsoft/.style={         bend left=45,         looseness=1.25     } }

\tikzset{every loop/.style={looseness=12,min distance=1cm}}

\title{Algebraic lattices in QFT renormalization} \author{Michael Borinsky\footnote{borinsky@physik.hu-berlin.de}\\ Institute of Physics,  Humboldt University \\ Newton Str. 15,  D-12489 Berlin, Germany }
\date{}

\begin{document} \maketitle

\bibliographystyle{plain}

\begin{abstract} The structure of overlapping subdivergences, which appear in the perturbative expansions of quantum field theory, is analyzed using algebraic lattice theory. It is shown that for specific QFTs the sets of subdivergences of Feynman diagrams form algebraic lattices. This class of QFTs includes the Standard model. In kinematic renormalization schemes, in which \textit{tadpole} diagrams vanish, these lattices are semimodular. This implies that the Hopf algebra of Feynman diagrams is graded by the coradical degree or equivalently that every maximal forest has the same length in the scope of BPHZ renormalization. As an application of this framework a formula for the counter terms in zero-dimensional QFT is given together with some examples of the enumeration of primitive or skeleton diagrams.

\smallskip \noindent \textbf{Keywords.} quantum field theory, renormalization, Hopf algebra of Feynman diagrams, algebraic lattices, zero-dimensional QFT

\smallskip \noindent \textbf{MSC.} 81T18, 81T15, 81T16, 06B99 \end{abstract}

\section{Introduction}

Calculations of observable quantities in quantum field theory rely almost always on perturbation theory. The integrals in the perturbation expansions can be depicted as Feynman diagrams. Usually these integrals require \textit{renormalization} to give meaningful results. The renormalization procedure can be performed using the Hopf algebra of Feynman graphs \cite{connes2001renormalization}, which organizes the classic BPHZ procedure into an algebraic framework. 

The motivation for this paper was to obtain insights on the \textit{coradical filtration} of this Hopf algebra and thereby on the structure of the \textit{subdivergences} of the Feynman diagrams in the perturbation expansion. 

The perturbative expansions in QFT are divergent series themselves as first pointed out by \cite{dyson1952divergence}. Except for objects called \textit{renormalons} \cite{lautrup1977high}, this divergence is believed to be dominated by the growth of the number of Feynman diagrams. The coradical filtration describes the hierarchy in which diagrams become important in the large-order regime. Dyson-Schwinger equations exploit this hierarchy to give non-perturbative results \cite{kruger2015filtrations}.  This work also aims to extend the effectiveness of these methods.

Notation and preliminaries on Feynman diagrams and Hopf algebras are covered in section \ref{sec:pre} in a form suitable for a combinatorial analysis. Feynman diagrams are defined as a special case of hyper graphs as in \cite{Yeats2008}. This definition was used to clarify the role of external legs and isomorphisms of diagrams. Based on this, Kreimer's Hopf algebra of Feynman diagrams is defined.

The starting point in section \ref{sec:posetsandlattices} for the analysis of algebraic lattices in renormalization is the basic fact that \textit{subdivergences} of Feynman diagrams form a partially ordered set or poset ordered by inclusion. 

The idea to search for more properties of the subdivergence posets was inspired by the work of Berghoff \cite{berghoff2014wonderful}, who studied the posets of subdivergences in the context of Epstein-Glaser renormalization and proved that the subdivergences of diagrams with only logarithmic divergent subdivergences form distributive lattices. Distributive lattices have already been used in \cite[Part III]{figueroa2005combinatorial} to describe subdivergences of Feynman diagrams.

It is straightforward to carry over the Hopf algebra structure to the posets and lattices by using the the incidence Hopf algebra on posets \cite{Schmitt1994} as shown in section \ref{sec:posetsandlattices}.

\paragraph{Statement of results} Only in distinguished renormalizable quantum field theories a \textit{join} and a \textit{meet} can be generally defined on the posets of subdivergences of Feynman diagrams, promoting the posets to \textit{algebraic lattices}. These distinguished renormalizable QFTs will be called \textit{join-meet-renormalizable}. It is shown that a broad class of QFTs including the standard model falls into this category. $\phi^6$-theory in $3$-dimensions is examined as an example of a QFT, which is renormalizable, but not join-meet-renormalizable. 

A further analysis in section \ref{sec:properties} demonstrates that in QFTs with only three-or-less-valent vertices, which are thereby join-meet-renormalizable, these lattices are \textit{semimodular}. This implies that the Hopf algebra is bigraded by the loop number of the Feynman diagram and its \textit{coradical degree}. In the language of BPHZ this means that every \textit{complete forest} has the same length. Generally, this structure cannot be found in join-meet-renormalizable theories with also four-valent vertices as QCD or $\phi^4$. An explicit counter example of a non-graded and non-semimodular lattice, which appears in $\phi^4$ and Yang-Mills theories, is given. The semimodularity of the subdivergence lattices can be resurrected in these cases by dividing out \textit{tadpole} (also snail or seagull) diagrams. This quotient can always be formed in kinematic renormalization schemes. 

This whole framework is used in section \ref{sec:applications} to illuminate some results of zero-dimensional QFTs \cite{cvitanovic1978number,argyres2001zero} from the perspective of the Hopf algebra of decorated lattices. A closed formula for the counter term calculation is given using the \textit{Moebius function} on the lattices. It is used to enumerate primitive diagrams in a variety of cases.

\section{Preliminaries} \label{sec:pre} \subsection{Combinatorial quantum field theory}     In what follows a quantum field theory (QFT) will be characterized by its field content, its interactions, associated `weights' for these interactions and a given dimension of spacetime $D$. Let $\Phi$ denote the set of fields, $\mathcal{R}_v$ the set of allowed interactions, represented as monomials in the fields and $\mathcal{R}_e \subset \mathcal{R}_v$ the set of propagators, a set of distinguished interactions between two fields only. $\mathcal{R}_e$ consists of monomials of degree two and $\mathcal{R}_v$ of monomials of degree two or higher in the fields $\Phi$. Additionally, a map $\omega: \mathcal{R}_e \cup \mathcal{R}_v \rightarrow \mathbb{Z}$ is given associating a weight to each interaction. 

This approach for the definition of Feynman diagrams based on half-edges is well-known. See for instance \cite[Sec. 2.3]{Yeats2008} or \cite[Sec. 2.1]{gurau2014renormalization}.

The requirement $\mathcal{R}_e \subset \mathcal{R}_v$ ensures that there is a two-valent vertex for every allowed edge-type. This is not necessary for the definition of the Hopf algebra of Feynman diagrams, but it results in a simpler formula for contractions among other simplifications. Of course, this does not introduce a restriction to the underlying QFT: A propagator is always associated to the formal inverse of the corresponding two-valent vertex and a two-valent vertex always comes with an additional propagator in a diagram. The two valent vertex of the same type as the propagator can be canceled with the additional propagator.

In physical terms, the interactions correspond to summands in the Lagrangian of the QFT and the weights are the number of derivatives in the respective summand. Having clarified the important properties of a QFT for a combinatorial treatment, we can proceed to the definition of the central object of perturbative QFTs: \subsection{Feynman diagrams as hypergraphs} \label{sec:feynmandiagrams}

\begin{defn}[Feynman diagram] \label{def:feynmandiagram}   A \textit{Feynman diagram} $\Gamma$ is a tuple $\left( H,E,V,\eta \right)$.    Consisting of  \begin{enumerate}     \item a set $H$ of half-edges, \label{itm:firsthg}     \item a coloring of the half-edges by fields in $\Phi$:      \begin{align} \eta: H \rightarrow \Phi. \end{align}     This coloring also induces an additional map, the \textit{residue},     \begin{align} \res&: 2^{H} \rightarrow (\mathbb{N}_0)^{\Phi}, & a &\mapsto \prod \limits_{h\in a} \eta(h), \end{align}     of subsets of half-edges, or arbitrary adjacency relations, to monomials in the fields,     \item a set $E$ of edges, adjacency relations between two half-edges with a residue in $\mathcal{R}_e$,     \begin{align} E \subset \left\{ e \subset H : \res(e) \in \mathcal{R}_e \right\} \text{ and} \end{align}     \item a set $V$ of vertices or corollas - adjacency relations between any other number of half-edges with a residue in $\mathcal{R}_v$:      \begin{align} V \subset \left\{ v \subset H : \res(v) \in \mathcal{R}_v \right\}. \end{align}     \label{itm:lasthg}

With the conditions: 

    \item All the edges and all the vertices are each pairwise disjoint sets: $\forall v_1, v_2 \in V,v_1 \neq v_2: v_1 \cap v_2 = \emptyset$ as well as $\forall e_1, e_2 \in E,e_1 \neq e_2: e_1 \cap e_2 = \emptyset$. \label{cond:disjedges}      \item Every half-edges is in at least one vertex: $\bigcup \limits _{v \in V} v = H$. \label{cond:disjvtcs}  \end{enumerate} \end{defn}

Conditions \ref{itm:firsthg}-\ref{itm:lasthg} form the definition of a \textit{colored hyper graph}. The other conditions \ref{cond:disjedges} and \ref{cond:disjvtcs} almost restrict these hyper graphs to \textit{multigraphs} with the exception that half-edges that are not in any edge are still allowed. These half-edges will play the role of the external legs of the Feynman diagram. Feynman diagrams will also be called diagrams or graphs in this article.  If the reference to the diagram is ambiguous, the sets in the tuple $\Gamma = \left( H,E,V,\eta \right)$ will be denoted as $H(\Gamma)$, $E(\Gamma)$ and $V(\Gamma)$. To clarify the above definition an example is given, in which different depictions of Feynman diagrams are discussed. \begin{figure} \ifdefined\nodraft   \subcaptionbox{Typical graph representation of a Feynman diagram.\label{fig:traditionalfg}}   [.45\linewidth]{     \begin{tikzpicture} \coordinate (v1); \coordinate[right=.5 of v1] (v2); \coordinate[right=of v2] (v3); \coordinate[right=.5 of v3] (v4); \draw[fermion] (v1) -- (v2); \draw[meson] (v2) to[left] (v3); \draw[fermion] (v2) -- (v3); \draw[fermion] (v3) -- (v4); \end{tikzpicture}   }   \subcaptionbox{Hyper graph representation of a Feynman diagram.\label{fig:hyperfeynmangraph}}   [.45\linewidth]{     \begin{tikzpicture}[scale=.8] \node (p1) at (0,0.75) {$\bar \psi$}; \node (f1) at (1,1.5) {$\varphi$}; \node (pb1) at (1,0) {$\psi$}; \node (p2) at (5,0) {$\bar \psi$}; \node (f2) at (5,1.5) {$\varphi$}; \node (pb2) at (6,0.75) {$\psi$}; \node (c1) at ($1/3*(p1)+1/3*(pb1)+1/3*(f1)$) {}; \node (c2) at ($1/3*(p2)+1/3*(pb2)+1/3*(f2)$) {}; \node at ($(c1) + (0.35,0)$) {$\bar \psi \varphi \psi$}; \node at ($(c2) - (0.35,0)$) {$\bar \psi \varphi \psi$}; \node (e1) at ($1/2*(pb1) + 1/2*(p2)$) {$\bar{\psi} \psi$}; \node (e2) at ($1/2*(f1) + 1/2*(f2)$) {$\varphi \varphi$}; \draw (p1) circle (0.25); \draw (p2) circle (0.25); \draw (pb1) circle (0.25); \draw (pb2) circle (0.25); \draw (f1) circle (0.25); \draw (f2) circle (0.25); \draw (c1) circle (1.5); \draw (c2) circle (1.5); \draw (e1) ellipse [x radius=2.5, y radius=0.5]; \draw (e2) ellipse [x radius=2.5, y radius=0.5]; \end{tikzpicture}   }   \subcaptionbox{Bipartite graph representation of the Feynman diagram.\label{fig:bipartiefeynmangraph}}   [\linewidth]{     \begin{tikzpicture}[main node/.style={circle,draw}] \node[main node] (h1) {$\bar \psi$}; \node[main node] (h2) [right=of h1] {$\varphi$}; \node[main node] (h3) [right=of h2] {$\psi$}; \node[main node] (h4) [right=of h3] {$\varphi$}; \node[main node] (h5) [right=of h4] {$\bar \psi$}; \node[main node] (h6) [right=of h5] {$\psi$}; \node[main node] (a1) [below=of h2] {$\bar \psi \varphi \psi$}; \node[main node] (a2) [below=of h3] {$\varphi \varphi$}; \node[main node] (a3) [below=of h4] {$\bar{\psi} \psi$}; \node[main node] (a4) [below=of h5] {$\bar \psi \varphi \psi$}; \draw (h1) -- (a1); \draw (h2) -- (a1); \draw (h3) -- (a1); \draw (h4) -- (a4); \draw (h5) -- (a4); \draw (h6) -- (a4); \draw[draw=white,double=black,very thick] (h2) -- (a2); \draw[draw=white,double=black,very thick] (h4) -- (a2); \draw[draw=white,double=black,very thick] (h3) -- (a3); \draw[draw=white,double=black,very thick] (h5) -- (a3); \end{tikzpicture}   } \else

MISSING IN DRAFT MODE

\fi   \caption{Equivalent diagrammatic representations of Feynman graphs}\label{fig:diagramaticfeynmangraphs} \end{figure}
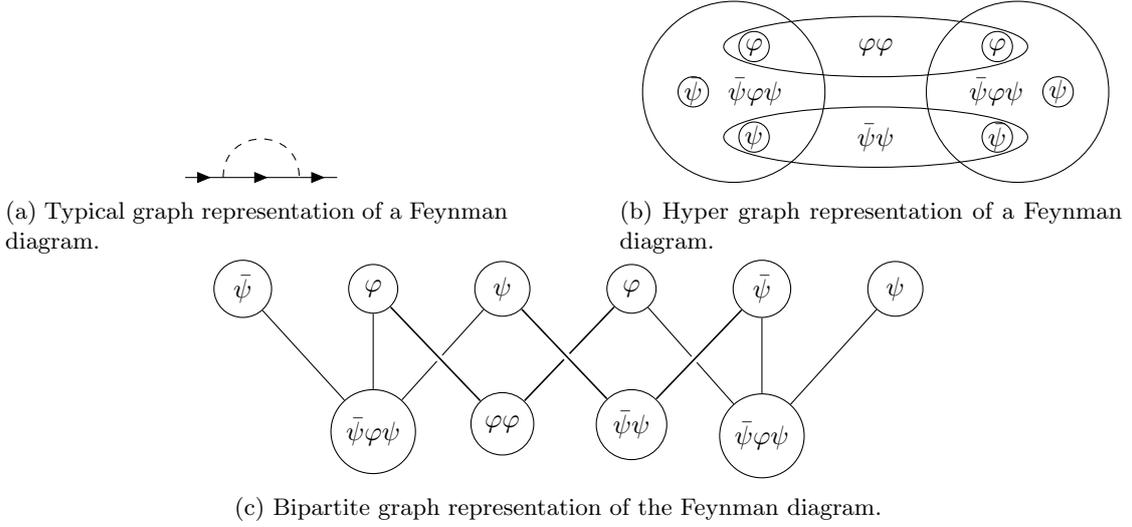 \begin{expl}[Yukawa theory] Let $\Phi = \left\{ \bar \psi, \psi, \phi \right\}$, $\mathcal R_v = \left\{ \bar \psi \psi, \phi^2, \bar \psi \psi \phi \right\}$ and $\mathcal R_e =\left\{ \bar \psi \psi, \phi^2 \right\}$ . Fig. \ref{fig:diagramaticfeynmangraphs} shows different graphical representations for a simple Feynman diagram in this theory. 

The usual Feynman diagram representation is given in fig. \ref{fig:traditionalfg}. The adjacency relations $E$ are represented as edges and the adjacency relations $V$ as vertices. The half-edges are omitted. 

Fig. \ref{fig:hyperfeynmangraph} shows a hypergraph representation of the diagram. Its half-edges are drawn as little circles. They are colored by the corresponding field.  The adjacency relations are shown as big ellipses, enclosing the adjacent half-edges. The adjacency relations, $a\in E\cup V$ can be colored by the different allowed residues, $\res(a)$ in $\mathcal{R}_e$ and $\mathcal{R}_v$.

In fig. \ref{fig:bipartiefeynmangraph}, the same diagram is depicted as a bipartite graph. Half-edges are identified with the first class of vertices of the graph and adjacency relations with the second. Half-edges and an adjacency relation are connected if the half-edge appears in the respective adjacency relation. This representation is useful for computational treatments of Feynman diagrams, because colored bipartite graphs are common and well studied in computational graph theory. \end{expl}

\paragraph{Isomorphisms of diagrams}   An \textit{isomorphism} $j: \Gamma \rightarrow \Gamma'$ between one Feynman diagram $\Gamma=(H,V,E,\eta)$ and another $\Gamma'=(H',V',E',\eta')$ is an bijection   $j: H \rightarrow H'$,  which preserves the adjacency structure    $j\left( E \cup V \right) = E' \cup V'$ and the coloring   $\eta = \eta' \circ j$,  where $j$ is extended canonically to $E \cup V \subset 2^H$. If there is an isomorphism from one diagram $\Gamma$ to another diagram $\Gamma'$, the two diagrams are called isomorphic.  An automorphism is an isomorphism of the diagram onto itself.  The group of automorphisms of $\Gamma$ is denoted as $\Aut \Gamma$. The inverse of the cardinality of the automorphism group, $\frac{1}{|\Aut \Gamma|}$, is the \textit{symmetry factor} of the diagram $\Gamma$.

\paragraph{External legs} The half-edges which are not included in an edge are called \textit{external legs} or \textit{legs}, $\Hext := H \setminus \bigcup \limits_{e \in E} e$. $\Hint=\bigcup \limits_{e \in E} e$ is the set of half-edges which are not external. Feynman diagrams can be classified by the colors of the half-edges in $\Hext$. This classification is called the \textit{residue} of the diagram.

\paragraph{Residues of a diagrams} Using the external half-edges, $\Hext$, a monomial $\prod \limits_{h \in H_\text{ext}} \eta\left( h \right)$ in the fields can be associated to a diagram.  This map will also be called $\res$: \begin{align} \res: \left( H,E,V,\eta \right) \rightarrow (\mathbb{N}_0)^\Phi, \Gamma \mapsto \prod \limits_{h \in H_\text{ext}(\Gamma)} \eta\left( h \right). \end{align} The choice of the same name for this map and the coloring of the adjacency relations will be justified by a certain compatibility of $\res$ for adjacency relations and diagrams in the scope of \textit{renormalizable} QFTs (see definition \ref{def:renormalizable}). Furthermore, this should underline the operadic structure of Feynman diagrams.

\paragraph{(Non)-fixed external legs} In the above definition the external legs of a diagram do not have a specific labeling as usual. The legs are not `fixed' and their permutations can result in additional automorphisms.  For instance, the diagrams $ { \def \scale {2ex} \begin{tikzpicture}[x=\scale,y=\scale,baseline={([yshift=-.5ex]current bounding box.center)}] \begin{scope}[node distance=1] \coordinate (v0); \coordinate [right=.5 of v0] (vm); \coordinate [right=.5 of vm] (v1); \coordinate [above left=.5 of v0] (i0); \coordinate [below left=.5 of v0] (i1); \coordinate [above right=.5 of v1] (o0); \coordinate [below right=.5 of v1] (o1); \draw (vm) circle(.5); \draw (i0) -- (v0); \draw (i1) -- (v0); \draw (o0) -- (v1); \draw (o1) -- (v1); \filldraw (v0) circle(1pt); \filldraw (v1) circle(1pt); \end{scope} \end{tikzpicture} } \simeq { \def \scale {2ex} \begin{tikzpicture}[x=\scale,y=\scale,baseline={([yshift=-.5ex]current bounding box.center)}] \begin{scope}[node distance=1] \coordinate (v0); \coordinate [below=1 of v0] (v1); \coordinate [above left=.5 of v0] (i0); \coordinate [above right=.5 of v0] (i1); \coordinate [below left=.5 of v1] (o0); \coordinate [below right=.5 of v1] (o1); \coordinate [above=.5 of v1] (vm); \draw (vm) circle(.5); \draw (i0) -- (v0); \draw (i1) -- (v0); \draw (o0) -- (v1); \draw (o1) -- (v1); \filldraw (v0) circle(1pt); \filldraw (v1) circle(1pt); \end{scope} \end{tikzpicture} } \simeq { \def \scale {2ex} \begin{tikzpicture}[x=\scale,y=\scale,baseline={([yshift=-.5ex]current bounding box.center)}] \begin{scope}[node distance=1] \coordinate (v0); \coordinate [below=1 of v0] (v1); \coordinate [above left=.5 of v0] (i0); \coordinate [above right=.5 of v0] (i1); \coordinate [right= of i1] (i11); \coordinate [below left=.5 of v1] (o0); \coordinate [below right=.5 of v1] (o1); \coordinate [right= of o1] (o11); \coordinate [above=.5 of v1] (vm); \draw (i0) -- (v0); \draw (o11) -- (v0); \draw (o0) -- (v1); \draw[color=white,line width=2pt] (i11) -- (v1); \draw (i11) -- (v1); \filldraw (v0) circle(1pt); \filldraw (v1) circle(1pt); \draw (v0) to[bend left=45] (v1); \draw (v0) to[bend right=45] (v1); \end{scope} \end{tikzpicture} } $ are isomorphic with respect to each other in the present scope. The symmetry factor of  ${ \def \scale {2ex} \begin{tikzpicture}[x=\scale,y=\scale,baseline={([yshift=-.5ex]current bounding box.center)}] \begin{scope}[node distance=1] \coordinate (v0); \coordinate [right=.5 of v0] (vm); \coordinate [right=.5 of vm] (v1); \coordinate [above left=.5 of v0] (i0); \coordinate [below left=.5 of v0] (i1); \coordinate [above right=.5 of v1] (o0); \coordinate [below right=.5 of v1] (o1); \draw (vm) circle(.5); \draw (i0) -- (v0); \draw (i1) -- (v0); \draw (o0) -- (v1); \draw (o1) -- (v1); \filldraw (v0) circle(1pt); \filldraw (v1) circle(1pt); \end{scope} \end{tikzpicture} }$ is $\frac{1}{|\Aut { \def \scale {2ex} \begin{tikzpicture}[x=\scale,y=\scale,baseline={([yshift=-.5ex]current bounding box.center)}] \begin{scope}[node distance=1] \coordinate (v0); \coordinate [right=.5 of v0] (vm); \coordinate [right=.5 of vm] (v1); \coordinate [above left=.5 of v0] (i0); \coordinate [below left=.5 of v0] (i1); \coordinate [above right=.5 of v1] (o0); \coordinate [below right=.5 of v1] (o1); \draw (vm) circle(.5); \draw (i0) -- (v0); \draw (i1) -- (v0); \draw (o0) -- (v1); \draw (o1) -- (v1); \filldraw (v0) circle(1pt); \filldraw (v1) circle(1pt); \end{scope} \end{tikzpicture} } |} = \frac{1}{16}$.

The labeling of the legs was omitted, because it makes the definition of the Hopf algebra of Feynman diagrams more involved without any gain in generality. 

Multiplying by the number of possible permutations of the external legs, which is $4!=24$ in the case of scalar four-leg diagrams, we reobtain the symmetry factors for the leg-fixed case: \begin{align} 4! \cdot \frac{1}{16} = \frac{3}{2} = \frac{1}{|\Aut_\text{lf} { \def \scale {2ex} \begin{tikzpicture}[x=\scale,y=\scale,baseline={([yshift=-.5ex]current bounding box.center)}] \begin{scope}[node distance=1] \coordinate (v0); \coordinate [right=.5 of v0] (vm); \coordinate [right=.5 of vm] (v1); \coordinate [above left=.5 of v0] (i0); \coordinate [below left=.5 of v0] (i1); \coordinate [above right=.5 of v1] (o0); \coordinate [below right=.5 of v1] (o1); \draw (vm) circle(.5); \draw (i0) -- (v0); \draw (i1) -- (v0); \draw (o0) -- (v1); \draw (o1) -- (v1); \filldraw (v0) circle(1pt); \filldraw (v1) circle(1pt); \end{scope} \end{tikzpicture} } |} + \frac{1}{|\Aut_\text{lf} { \def \scale {2ex} \begin{tikzpicture}[x=\scale,y=\scale,baseline={([yshift=-.5ex]current bounding box.center)}] \begin{scope}[node distance=1] \coordinate (v0); \coordinate [below=1 of v0] (v1); \coordinate [above left=.5 of v0] (i0); \coordinate [above right=.5 of v0] (i1); \coordinate [below left=.5 of v1] (o0); \coordinate [below right=.5 of v1] (o1); \coordinate [above=.5 of v1] (vm); \draw (vm) circle(.5); \draw (i0) -- (v0); \draw (i1) -- (v0); \draw (o0) -- (v1); \draw (o1) -- (v1); \filldraw (v0) circle(1pt); \filldraw (v1) circle(1pt); \end{scope} \end{tikzpicture} } |} + \frac{1}{|\Aut_\text{lf} { \def \scale {2ex} \begin{tikzpicture}[x=\scale,y=\scale,baseline={([yshift=-.5ex]current bounding box.center)}] \begin{scope}[node distance=1] \coordinate (v0); \coordinate [below=1 of v0] (v1); \coordinate [above left=.5 of v0] (i0); \coordinate [above right=.5 of v0] (i1); \coordinate [right= of i1] (i11); \coordinate [below left=.5 of v1] (o0); \coordinate [below right=.5 of v1] (o1); \coordinate [right= of o1] (o11); \coordinate [above=.5 of v1] (vm); \draw (i0) -- (v0); \draw (o11) -- (v0); \draw (o0) -- (v1); \draw[color=white,line width=2pt] (i11) -- (v1); \draw (i11) -- (v1); \filldraw (v0) circle(1pt); \filldraw (v1) circle(1pt); \draw (v0) to[bend left=45] (v1); \draw (v0) to[bend right=45] (v1); \end{scope} \end{tikzpicture} } |}, \end{align}     where $\Aut_\text{lf}$ are the automorphisms which leave the legs fixed. In the same way, we can always replace diagrams with non-fixed legs by a formal sum of diagrams with fixed legs and reobtain the usual notation.

\paragraph{Subdiagrams} A subdiagram of a Feynman diagram $\Gamma= (H,E,V,\eta)$ is a Feynman diagram $\gamma = (H',E',V',\eta')$ such that $E'\cup V' \subset E \cup V$ and $\eta' = \left. \eta \right|_{H'}$. The relation $\gamma$ is a subdiagram of $\Gamma$ is denoted as $\gamma \subset \Gamma$. 

The union and the intersection of two subdiagrams $\gamma_1=(H_1,E_1,V_1,\eta_1), \gamma_2=(H_2,E_2,V_2,\eta_2) \subset \Gamma=(H,E,V,\eta)$ is defined as \begin{align} \gamma_1 \cup \gamma_2 &= (H_1 \cup H_2, E_1 \cup E_2, V_1 \cup V_2, \left. \eta \right|_{H_1 \cup H_2}) \\ \gamma_1 \cap \gamma_2 &= (H_1 \cap H_2, E_1 \cap E_2, V_1 \cap V_2, \left. \eta \right|_{H_1 \cap H_2}). \end{align} $\gamma_1 \cup \gamma_2$ and $\gamma_1 \cap \gamma_2$ are also subdiagrams of $\Gamma$.

\paragraph{Connectedness} A diagram $\Gamma$ is \textit{connected} if it does not have two subdiagrams $\gamma_1$ and $\gamma_2$ such that $\gamma_1 \cup \gamma_2 = \Gamma$ and $\gamma_1 \cap \gamma_2 = \emptyset$, where $\emptyset$ is the diagram with no half-edges, $|H(\emptyset)| = 0$. The \textit{connected components} of a diagram are its connected subdiagrams which are mutually disjoint and whose union is the whole diagram.

The diagram $\Gamma$ is \textit{one-particle-irreducible} or \textit{1PI} if it is still connected after the removal of an arbitrary edge $e \in E(\Gamma)$.

\begin{align} \subdiags_{\text{1PI}}(\Gamma) := \left\{\gamma \subset \Gamma \text{ such that } \gamma \text{ is 1PI}\right\} \end{align} is the set of 1PI subdiagrams of $\Gamma$. If $\Gamma$ is 1PI, then $\Gamma \in \subdiags_{\text{1PI}}(\Gamma)$.

\begin{expl}[1PI subdiagrams of a diagram in $\phi^4$ theory] \label{expl:subdiagrams1PI} For the diagram  ${ \def \scale {2ex} 
 } \right\}, \end{gather*} \else MISSING IN DRAFT MODE \fi where subdiagrams are drawn with thick lines. Note, that we do not need to fix external legs for the subdiagrams. Therefore, a subdiagram is determined entirely by its edge set. \end{expl} \paragraph{Singular homology and loop number} As a consequence of the singular homology of a Feynman diagram $\Gamma$, the first Betti number $h_1(\Gamma)$, which is often called \textit{loop number}, can be calculated. By the Euler characteristic, $h_1(\Gamma)- |E(\Gamma)| + |V(\Gamma)| - h_0(\Gamma) = 0$. Where $h_0(\Gamma)$, the zeroth Betti number, is the number of connected components of $\Gamma$. $h_1(\Gamma)$ is called loop number because it counts the number of independent circuits in the diagram. \paragraph{Contractions} A subdiagram $\gamma=(H',E',V',\eta')$ of a diagram $\Gamma=(H,E,V,\eta)$, $\gamma \subset \Gamma$ can be contracted. The result is denoted as $\Gamma / \gamma $. Explicitly: \begin{align} \label{eqn:contraction} \Gamma / \gamma &:= ( (H \setminus H') \cup H'_\text{ext}, E \setminus E', (V \setminus V') \cup \left\{ H'_\text{ext} \right\}, \eta'\big|_{(H \setminus H') \cup H'_\text{ext}} ). \end{align} The contraction of a subdiagram involves the removal of all the internal half-edges, all the edges and vertices of the subdiagram. Only the legs of the subdiagram are kept and `tied together' to form a new vertex of the resulting diagram. For a propagator-type subdiagram this means that an additional vertex with the same residue as the contracted graph is added. The new vertex in the produced diagram, $v^* \in \Gamma/\gamma$ has the residue as the contracted subdiagram, $\res(v^*) = \res(\gamma)$. For $\Gamma/\gamma$ to be a valid Feynman diagram with vertex-types in $\mathcal{R}_v$, the residues of contracted subdiagrams need to be restricted in a certain way. This will lead to the notion of superficial divergent subdiagrams. Moreover, this will explain the requirement $\mathcal{R}_e \subset \mathcal{R}_v$. \paragraph{Superficial degree of divergence} Using the map $\omega$, which is provided by the QFT, to assign a weight to every vertex and edge-type, an additional map $\omega_D$ can be defined, which assigns a weight to a Feynman diagram. This weight is called \textit{superficial degree of divergence} in the sense of \cite{Weinberg1960}: \begin{align} \label{eqn:omega_D} \omega_D\left(\Gamma\right) &:= \sum \limits_{e\in E} \omega(\text{\normalfont{res}}(e)) - \sum \limits_{v\in V} \omega(\text{\normalfont{res}}(v)) - D h_1 \left(\Gamma\right) \end{align} Neglecting possible infrared divergences, the value of $\omega_D$ coincides with the degree of divergence of the integral associated to the diagram in the perturbation expansion of the underlying QFT in $D$-dimensions. A 1PI diagram $\Gamma$ with $\omega_D(\Gamma) \leq 0$ is \textit{superficially divergent} (s.d.) in $D$ dimensions. For notational simplicity, the weight $0$ is assigned to the empty diagram, $\omega_D\left(\emptyset\right) = 0$, even though it is not divergent. \begin{defn}[Renormalizable Quantum Field Theory] \label{def:renormalizable} A QFT is \textit{renormalizable in $D$ dimensions} if $\omega_D(\Gamma)$ depends only on the external structure of $\Gamma$ and the superficial degree of divergence agrees with the weight assigned to the residue of the diagram: $\omega_D(\Gamma) = \omega(\res \Gamma)$. This can be expressed as the commutativity of the diagram: \begin{center} \begin{tikzpicture} \node (tl) {$\mathcal T$}; \node [right=of tl] (tr) {$\mathbb{Z}$}; \node [below =of tl] (bl) {$\mathbb{N}^\Phi$}; \draw[->] (tl) -- node[above] {$\omega_D$} (tr); \draw[->] (tl) to node[auto] {$\res$} (bl); \draw[->] (bl) -- node[right] {$\omega$} (tr); \end{tikzpicture} \end{center} \end{defn} where $\mathcal T$ is the set of all connected admissible Feynman diagrams of the renormalizable QFT. Specifically, $\omega_D(\Gamma)$ needs to be independent of $h_1(\Gamma)$. Working with a renormalizable QFT, we need to keep track of \textit{subdivergences} or \textit{superficially divergent subdiagrams} appearing in the integrals of the perturbation expansion. The tools needed are the set of 1PI subdiagrams and the superficial degree of divergence. The compatibility of the vertex and edge-weights and the superficial degree of divergence of the diagrams is exactly what is necessary to contract these subdivergences without leaving the space of allowed Feynman diagrams. \paragraph{Superficially divergent subdiagrams} The set of \textit{superficially divergent subdiagrams} or s.d. subdiagrams, \begin{align} \label{eqn:sdsubdiags} \sdsubdiags_D(\Gamma):= \left\{\gamma \subset \Gamma \text{ such that } \gamma = \prod \limits_i \gamma_i, \gamma_i \in \subdiags_{\text{1PI}}(\Gamma) \text{ and } \omega_D( \gamma_i ) \leq 0 \right\}, \end{align} of subdiagrams, whose connected components are s.d. 1PI diagrams, is the object of main interest for the combinatorics of renormalization. The renormalizability of the QFT guarantees that for every $\gamma \in \sdsubdiags_D(\Gamma)$ the diagram resulting from the contraction $\Gamma/\gamma$ is still a valid Feynman diagram of the underlying QFT. \begin{expl}[Superficially divergent subdiagrams of a diagram in $\phi^4$ theory] \label{expl:subdiagrams1PIsd} Consider the same diagram as in example \ref{expl:subdiagrams1PI} in $\phi^4$ theory with the weights $\omega(\phi^2) = \omega( \simpleprop ) = 2$ and $\omega(\phi^4) = \omega( \fourvtx ) = 0$. The superficially divergent subdiagrams for $D=4$ are \ifdefined\nodraft \def \thickness {2pt} \begin{gather*} \sdsubdiags_4 \left( { \def \scale {3ex} \begin{tikzpicture}[x=\scale,y=\scale,baseline={([yshift=-.5ex]current bounding box.center)}] \begin{scope}[node distance=1] \coordinate (v0); \coordinate[right=.5 of v0] (v4); \coordinate[above right= of v4] (v2); \coordinate[below right= of v4] (v3); \coordinate[below right= of v2] (v5); \coordinate[right=.5 of v5] (v1); \coordinate[above right= of v2] (o1); \coordinate[below right= of v2] (o2); \coordinate[below left=.5 of v0] (i1); \coordinate[above left=.5 of v0] (i2); \coordinate[below right=.5 of v1] (o1); \coordinate[above right=.5 of v1] (o2); \draw (v0) -- (i1); \draw (v0) -- (i2); \draw (v1) -- (o1); \draw (v1) -- (o2); \draw (v0) to[bend left=20] (v2); \draw (v0) to[bend right=20] (v3); \draw (v1) to[bend left=20] (v3); \draw (v1) to[bend right=20] (v2); \draw (v2) to[bend right=60] (v3); \draw (v2) to[bend left=60] (v3); \filldraw (v0) circle(1pt); \filldraw (v1) circle(1pt); \filldraw (v2) circle(1pt); \filldraw (v3) circle(1pt); \ifdefined\cvl \draw[line width=1.5pt] (v0) to[bend left=20] (v2); \draw[line width=1.5pt] (v0) to[bend right=20] (v3); \fi \ifdefined\cvr \draw[line width=1.5pt] (v1) to[bend left=20] (v3); \draw[line width=1.5pt] (v1) to[bend right=20] (v2); \fi \ifdefined\cvml \draw[line width=1.5pt] (v2) to[bend left=60] (v3); \fi \ifdefined\cvmr \draw[line width=1.5pt] (v2) to[bend right=60] (v3); \fi \end{scope} \end{tikzpicture} } \right) = \left\{ { \def \scale {3ex} \def \cvmr {} \def \cvml {} \begin{tikzpicture}[x=\scale,y=\scale,baseline={([yshift=-.5ex]current bounding box.center)}] \begin{scope}[node distance=1] \coordinate (v0); \coordinate[right=.5 of v0] (v4); \coordinate[above right= of v4] (v2); \coordinate[below right= of v4] (v3); \coordinate[below right= of v2] (v5); \coordinate[right=.5 of v5] (v1); \coordinate[above right= of v2] (o1); \coordinate[below right= of v2] (o2); \coordinate[below left=.5 of v0] (i1); \coordinate[above left=.5 of v0] (i2); \coordinate[below right=.5 of v1] (o1); \coordinate[above right=.5 of v1] (o2); \draw (v0) -- (i1); \draw (v0) -- (i2); \draw (v1) -- (o1); \draw (v1) -- (o2); \draw (v0) to[bend left=20] (v2); \draw (v0) to[bend right=20] (v3); \draw (v1) to[bend left=20] (v3); \draw (v1) to[bend right=20] (v2); \draw (v2) to[bend right=60] (v3); \draw (v2) to[bend left=60] (v3); \filldraw (v0) circle(1pt); \filldraw (v1) circle(1pt); \filldraw (v2) circle(1pt); \filldraw (v3) circle(1pt); \ifdefined\cvl \draw[line width=1.5pt] (v0) to[bend left=20] (v2); \draw[line width=1.5pt] (v0) to[bend right=20] (v3); \fi \ifdefined\cvr \draw[line width=1.5pt] (v1) to[bend left=20] (v3); \draw[line width=1.5pt] (v1) to[bend right=20] (v2); \fi \ifdefined\cvml \draw[line width=1.5pt] (v2) to[bend left=60] (v3); \fi \ifdefined\cvmr \draw[line width=1.5pt] (v2) to[bend right=60] (v3); \fi \end{scope} \end{tikzpicture} } , { \def \scale {3ex} \def \cvmr {} \def \cvml {} \def \cvl {} \begin{tikzpicture}[x=\scale,y=\scale,baseline={([yshift=-.5ex]current bounding box.center)}] \begin{scope}[node distance=1] \coordinate (v0); \coordinate[right=.5 of v0] (v4); \coordinate[above right= of v4] (v2); \coordinate[below right= of v4] (v3); \coordinate[below right= of v2] (v5); \coordinate[right=.5 of v5] (v1); \coordinate[above right= of v2] (o1); \coordinate[below right= of v2] (o2); \coordinate[below left=.5 of v0] (i1); \coordinate[above left=.5 of v0] (i2); \coordinate[below right=.5 of v1] (o1); \coordinate[above right=.5 of v1] (o2); \draw (v0) -- (i1); \draw (v0) -- (i2); \draw (v1) -- (o1); \draw (v1) -- (o2); \draw (v0) to[bend left=20] (v2); \draw (v0) to[bend right=20] (v3); \draw (v1) to[bend left=20] (v3); \draw (v1) to[bend right=20] (v2); \draw (v2) to[bend right=60] (v3); \draw (v2) to[bend left=60] (v3); \filldraw (v0) circle(1pt); \filldraw (v1) circle(1pt); \filldraw (v2) circle(1pt); \filldraw (v3) circle(1pt); \ifdefined\cvl \draw[line width=1.5pt] (v0) to[bend left=20] (v2); \draw[line width=1.5pt] (v0) to[bend right=20] (v3); \fi \ifdefined\cvr \draw[line width=1.5pt] (v1) to[bend left=20] (v3); \draw[line width=1.5pt] (v1) to[bend right=20] (v2); \fi \ifdefined\cvml \draw[line width=1.5pt] (v2) to[bend left=60] (v3); \fi \ifdefined\cvmr \draw[line width=1.5pt] (v2) to[bend right=60] (v3); \fi \end{scope} \end{tikzpicture} } , { \def \scale {3ex} \def \cvmr {} \def \cvml {} \def \cvr {} \begin{tikzpicture}[x=\scale,y=\scale,baseline={([yshift=-.5ex]current bounding box.center)}] \begin{scope}[node distance=1] \coordinate (v0); \coordinate[right=.5 of v0] (v4); \coordinate[above right= of v4] (v2); \coordinate[below right= of v4] (v3); \coordinate[below right= of v2] (v5); \coordinate[right=.5 of v5] (v1); \coordinate[above right= of v2] (o1); \coordinate[below right= of v2] (o2); \coordinate[below left=.5 of v0] (i1); \coordinate[above left=.5 of v0] (i2); \coordinate[below right=.5 of v1] (o1); \coordinate[above right=.5 of v1] (o2); \draw (v0) -- (i1); \draw (v0) -- (i2); \draw (v1) -- (o1); \draw (v1) -- (o2); \draw (v0) to[bend left=20] (v2); \draw (v0) to[bend right=20] (v3); \draw (v1) to[bend left=20] (v3); \draw (v1) to[bend right=20] (v2); \draw (v2) to[bend right=60] (v3); \draw (v2) to[bend left=60] (v3); \filldraw (v0) circle(1pt); \filldraw (v1) circle(1pt); \filldraw (v2) circle(1pt); \filldraw (v3) circle(1pt); \ifdefined\cvl \draw[line width=1.5pt] (v0) to[bend left=20] (v2); \draw[line width=1.5pt] (v0) to[bend right=20] (v3); \fi \ifdefined\cvr \draw[line width=1.5pt] (v1) to[bend left=20] (v3); \draw[line width=1.5pt] (v1) to[bend right=20] (v2); \fi \ifdefined\cvml \draw[line width=1.5pt] (v2) to[bend left=60] (v3); \fi \ifdefined\cvmr \draw[line width=1.5pt] (v2) to[bend right=60] (v3); \fi \end{scope} \end{tikzpicture} } , { \def \scale {3ex} \def \cvmr {} \def \cvml {} \def \cvl {} \def \cvr {} \begin{tikzpicture}[x=\scale,y=\scale,baseline={([yshift=-.5ex]current bounding box.center)}] \begin{scope}[node distance=1] \coordinate (v0); \coordinate[right=.5 of v0] (v4); \coordinate[above right= of v4] (v2); \coordinate[below right= of v4] (v3); \coordinate[below right= of v2] (v5); \coordinate[right=.5 of v5] (v1); \coordinate[above right= of v2] (o1); \coordinate[below right= of v2] (o2); \coordinate[below left=.5 of v0] (i1); \coordinate[above left=.5 of v0] (i2); \coordinate[below right=.5 of v1] (o1); \coordinate[above right=.5 of v1] (o2); \draw (v0) -- (i1); \draw (v0) -- (i2); \draw (v1) -- (o1); \draw (v1) -- (o2); \draw (v0) to[bend left=20] (v2); \draw (v0) to[bend right=20] (v3); \draw (v1) to[bend left=20] (v3); \draw (v1) to[bend right=20] (v2); \draw (v2) to[bend right=60] (v3); \draw (v2) to[bend left=60] (v3); \filldraw (v0) circle(1pt); \filldraw (v1) circle(1pt); \filldraw (v2) circle(1pt); \filldraw (v3) circle(1pt); \ifdefined\cvl \draw[line width=1.5pt] (v0) to[bend left=20] (v2); \draw[line width=1.5pt] (v0) to[bend right=20] (v3); \fi \ifdefined\cvr \draw[line width=1.5pt] (v1) to[bend left=20] (v3); \draw[line width=1.5pt] (v1) to[bend right=20] (v2); \fi \ifdefined\cvml \draw[line width=1.5pt] (v2) to[bend left=60] (v3); \fi \ifdefined\cvmr \draw[line width=1.5pt] (v2) to[bend right=60] (v3); \fi \end{scope} \end{tikzpicture} } , { \def \scale {3ex} \begin{tikzpicture}[x=\scale,y=\scale,baseline={([yshift=-.5ex]current bounding box.center)}] \begin{scope}[node distance=1] \coordinate (v0); \coordinate[right=.5 of v0] (v4); \coordinate[above right= of v4] (v2); \coordinate[below right= of v4] (v3); \coordinate[below right= of v2] (v5); \coordinate[right=.5 of v5] (v1); \coordinate[above right= of v2] (o1); \coordinate[below right= of v2] (o2); \coordinate[below left=.5 of v0] (i1); \coordinate[above left=.5 of v0] (i2); \coordinate[below right=.5 of v1] (o1); \coordinate[above right=.5 of v1] (o2); \draw (v0) -- (i1); \draw (v0) -- (i2); \draw (v1) -- (o1); \draw (v1) -- (o2); \draw (v0) to[bend left=20] (v2); \draw (v0) to[bend right=20] (v3); \draw (v1) to[bend left=20] (v3); \draw (v1) to[bend right=20] (v2); \draw (v2) to[bend right=60] (v3); \draw (v2) to[bend left=60] (v3); \filldraw (v0) circle(1pt); \filldraw (v1) circle(1pt); \filldraw (v2) circle(1pt); \filldraw (v3) circle(1pt); \ifdefined\cvl \draw[line width=1.5pt] (v0) to[bend left=20] (v2); \draw[line width=1.5pt] (v0) to[bend right=20] (v3); \fi \ifdefined\cvr \draw[line width=1.5pt] (v1) to[bend left=20] (v3); \draw[line width=1.5pt] (v1) to[bend right=20] (v2); \fi \ifdefined\cvml \draw[line width=1.5pt] (v2) to[bend left=60] (v3); \fi \ifdefined\cvmr \draw[line width=1.5pt] (v2) to[bend right=60] (v3); \fi \end{scope} \end{tikzpicture} } \right\}. \end{gather*} \else MISSING IN DRAFT MODE \fi \end{expl} \subsection{Hopf algebra structure of Feynman diagrams} The basis for the analysis of the lattice structure in QFTs is Kreimer's Hopf algebra of Feynman diagrams. It captures the BPHZ renormalization procedure which is necessary to obtain finite amplitudes from perturbative calculations in an algebraic framework \cite{ConnesKreimer2000}. In this section, the basic definitions of the Hopf algebra of Feynman diagrams will be repeated. For a more detailed exposition consult \cite{manchon2004} for mathematical details of Hopf algebras or \cite{borinsky2014feynman} for more computational aspects. \begin{defn} Let ${\hopffg_D}$ be the $\mathbb{Q}$-algebra generated by all mutually non-isomorphic Feynman diagrams, whose connected components are superficially divergent 1PI diagrams of a certain QFT in $D$-dimensions. The multiplication of generators is given by the disjoint union: $m: \hopffg_D \otimes \hopffg_D \rightarrow \hopffg_D, \gamma_1 \otimes \gamma_2 \mapsto \gamma_1 \cup \gamma_2$. It is extended linearly to all elements in the vector space $\hopffg_D$. ${\hopffg_D}$ has a unit $\unit:\mathbb{Q} \mapsto \mathbb{I} \mathbb{Q} \subset \hopffg_D $, where $\mathbb{I}$ is associated to the empty diagram and a counit $\counit: {\hopffg_D} \rightarrow \mathbb{Q}$, which vanishes on every generator of ${\hopffg_D}$ except $\mathbb{I}$: $\counit \circ \mathbb{I} := 1$. The coproduct on the generators is defined as follows: \begin{align} \label{eqn:def_cop} &\Delta_D \Gamma := \sum \limits_{ \gamma \in \sdsubdiags_D(\Gamma) } \gamma \otimes \Gamma/\gamma& &:& &{\hopffg_D} \rightarrow {\hopffg_D} \otimes {\hopffg_D}, \end{align} where the complete contraction $\Gamma/\Gamma$ is set to $\mathbb{I}$ and the vacuous contraction $\Gamma/\emptyset$ to $\Gamma$. The notion of superficial degree of divergence, $\omega_D$, hidden in $\sdsubdiags_D(\Gamma)$ (eq. \eqref{eqn:sdsubdiags}) is the only input to the Hopf algebra structure which depends on the dimension $D$ of spacetime. \end{defn} \begin{expl}[Coproduct of a diagram in $\phi^4$-theory] \label{expl:coproducthopffg} Take the same diagram of $\phi^4$-theory as in examples \ref{expl:subdiagrams1PI} and \ref{expl:subdiagrams1PIsd}. The coproduct is calculated using the set $\sdsubdiags_4(\Gamma)$ and the definition of the contraction in eq. \eqref{eqn:contraction}: \ifdefined\nodraft \begin{align*} \Delta_4 { \def \scale {2ex} 
 }$ are the same diagrams up to a permutation of the legs. \else MISSING IN DRAFT MODE \fi \end{expl} Extended linearly to all elements of $\hopffg_D$, the coproduct $\Delta_D: {\hopffg_D} \rightarrow {\hopffg_D} \otimes {\hopffg_D}$ is an algebra morphism. This implies $\Delta_D \mathbb{I} = \mathbb{I} \otimes \mathbb{I}$ and $\Delta_D \circ m = (m \otimes m) \circ \tau_{23} \circ ( \Delta_D \otimes \Delta_D )$, where $\tau_{23}$ switches the second and the third entry of the tensor product. The coproduct is coassociative: $(\Delta_D \otimes \id) \otimes \Delta_D = (\id \otimes \Delta_D) \otimes \Delta_D$. ${\hopffg_D}$ is graded by the loop number, $h_1(\Gamma)$, of the diagrams, \begin{align} {\hopffg_D} &= \bigoplus \limits_{L \geq 0} {\hopffg_D}^{(L)}\text{ and} \label{grading_loop_1} \\ m &: {\hopffg_D}^{(L_1)} \otimes {\hopffg_D}^{(L_2)} \rightarrow {\hopffg_D}^{(L_1 + L_2)} \label{grading_loop_2} \\ \Delta_D &: {\hopffg_D}^{(L)} \rightarrow \bigoplus \limits_{\substack{L_1, L_2 \ge 0 \\ L_1 + L_2 = L} } {\hopffg_D}^{(L_1)} \otimes {\hopffg_D}^{(L_2)}, \label{grading_loop_3} \end{align} where ${\hopffg_D}^{(L)} \subset {\hopffg_D}$ is the subspace of ${\hopffg_D}$ which is generated by diagrams $\Gamma$ with $h_1(\Gamma) = L$. Because ${\hopffg_D}$ is coassociative, it makes sense to define iterations of $\Delta_D$: \begin{align} \Delta_D^{0}&:= \counit, & \Delta_D^1 &:= \id ,& \Delta_D^{n} &:= ( \Delta_D \otimes \id^{\otimes n-2} ) \circ \Delta_D^{n-1} & \text{ for } n \geq 2, \end{align} where $\Delta_D^{k}: \hopffg_D \rightarrow {\hopffg_D}^{\otimes k}$. ${\hopffg_D}^{\otimes 0} \simeq \mathbb{Q}$, because the unit $\mathbb{I}$ is the only generator with $0$ loops. For this reason, $\hopffg_D$ is a connected Hopf algebra. The reduced coproduct is defined as $\widetilde{\Delta}_D:= P^{ \otimes 2} \circ \Delta_D $, where $P:= \id - \unit \circ \counit$ projects into the augmentation ideal, $P: {\hopffg_D} \rightarrow \bigoplus \limits_{L \geq 1} {\hopffg_D}^{(L)}$. \begin{expl}[Reduced coproduct of a non-primitive diagram in $\phi^4$-theory] \ifdefined\nodraft \begin{align} \widetilde \Delta_4 { \def \scale {2ex} \begin{tikzpicture}[x=\scale,y=\scale,baseline={([yshift=-.5ex]current bounding box.center)}] \begin{scope}[node distance=1] \coordinate (v0); \coordinate[right=.5 of v0] (v4); \coordinate[above right= of v4] (v2); \coordinate[below right= of v4] (v3); \coordinate[below right= of v2] (v5); \coordinate[right=.5 of v5] (v1); \coordinate[above right= of v2] (o1); \coordinate[below right= of v2] (o2); \coordinate[below left=.5 of v0] (i1); \coordinate[above left=.5 of v0] (i2); \coordinate[below right=.5 of v1] (o1); \coordinate[above right=.5 of v1] (o2); \draw (v0) -- (i1); \draw (v0) -- (i2); \draw (v1) -- (o1); \draw (v1) -- (o2); \draw (v0) to[bend left=20] (v2); \draw (v0) to[bend right=20] (v3); \draw (v1) to[bend left=20] (v3); \draw (v1) to[bend right=20] (v2); \draw (v2) to[bend right=60] (v3); \draw (v2) to[bend left=60] (v3); \filldraw (v0) circle(1pt); \filldraw (v1) circle(1pt); \filldraw (v2) circle(1pt); \filldraw (v3) circle(1pt); \ifdefined\cvl \draw[line width=1.5pt] (v0) to[bend left=20] (v2); \draw[line width=1.5pt] (v0) to[bend right=20] (v3); \fi \ifdefined\cvr \draw[line width=1.5pt] (v1) to[bend left=20] (v3); \draw[line width=1.5pt] (v1) to[bend right=20] (v2); \fi \ifdefined\cvml \draw[line width=1.5pt] (v2) to[bend left=60] (v3); \fi \ifdefined\cvmr \draw[line width=1.5pt] (v2) to[bend right=60] (v3); \fi \end{scope} \end{tikzpicture} } &= 2 { \def \scale {2ex} \begin{tikzpicture}[x=\scale,y=\scale,baseline={([yshift=-.5ex]current bounding box.center)}] \begin{scope}[node distance=1] \coordinate (v0); \coordinate[right=.5 of v0] (v4); \coordinate[above right= of v4] (v2); \coordinate[below right= of v4] (v3); \coordinate[above right=.5 of v2] (o1); \coordinate[below right=.5 of v3] (o2); \coordinate[below left=.5 of v0] (i1); \coordinate[above left=.5 of v0] (i2); \draw (v0) -- (i1); \draw (v0) -- (i2); \draw (v2) -- (o1); \draw (v3) -- (o2); \draw (v0) to[bend left=20] (v2); \draw (v0) to[bend right=20] (v3); \draw (v2) to[bend right=60] (v3); \draw (v2) to[bend left=60] (v3); \filldraw (v0) circle(1pt); \filldraw (v2) circle(1pt); \filldraw (v3) circle(1pt); \end{scope} \end{tikzpicture} } \otimes { \def \scale {2ex} \begin{tikzpicture}[x=\scale,y=\scale,baseline={([yshift=-.5ex]current bounding box.center)}] \begin{scope}[node distance=1] \coordinate (v0); \coordinate [right=.5 of v0] (vm); \coordinate [right=.5 of vm] (v1); \coordinate [above left=.5 of v0] (i0); \coordinate [below left=.5 of v0] (i1); \coordinate [above right=.5 of v1] (o0); \coordinate [below right=.5 of v1] (o1); \draw (vm) circle(.5); \draw (i0) -- (v0); \draw (i1) -- (v0); \draw (o0) -- (v1); \draw (o1) -- (v1); \filldraw (v0) circle(1pt); \filldraw (v1) circle(1pt); \end{scope} \end{tikzpicture} } + { \def \scale {2ex} \begin{tikzpicture}[x=\scale,y=\scale,baseline={([yshift=-.5ex]current bounding box.center)}] \begin{scope}[node distance=1] \coordinate (v0); \coordinate [below=1 of v0] (v1); \coordinate [above left=.5 of v0] (i0); \coordinate [above right=.5 of v0] (i1); \coordinate [below left=.5 of v1] (o0); \coordinate [below right=.5 of v1] (o1); \coordinate [above=.5 of v1] (vm); \draw (vm) circle(.5); \draw (i0) -- (v0); \draw (i1) -- (v0); \draw (o0) -- (v1); \draw (o1) -- (v1); \filldraw (v0) circle(1pt); \filldraw (v1) circle(1pt); \end{scope} \end{tikzpicture} } \otimes { \def \scale {2ex} \begin{tikzpicture}[x=\scale,y=\scale,baseline={([yshift=-.5ex]current bounding box.center)}] \begin{scope}[node distance=1] \coordinate (v0); \coordinate [right=.5 of v0] (vm1); \coordinate [right=.5 of vm1] (v1); \coordinate [right=.5 of v1] (vm2); \coordinate [right=.5 of vm2] (v2); \coordinate [above left=.5 of v0] (i0); \coordinate [below left=.5 of v0] (i1); \coordinate [above right=.5 of v2] (o0); \coordinate [below right=.5 of v2] (o1); \draw (vm1) circle(.5); \draw (vm2) circle(.5); \draw (i0) -- (v0); \draw (i1) -- (v0); \draw (o0) -- (v2); \draw (o1) -- (v2); \filldraw (v0) circle(1pt); \filldraw (v1) circle(1pt); \filldraw (v2) circle(1pt); \end{scope} \end{tikzpicture} } \end{align} \else MISSING IN DRAFT MODE \fi \end{expl} The kernel of the reduced coproduct, is the space of \textit{primitive} elements of the Hopf algebra, $\text{Prim} \hopffg_D := \ker\widetilde{\Delta}_D$. Primitive 1PI diagrams $\Gamma$ with $\Gamma \in \ker\widetilde{\Delta}_D$ are exactly those diagrams, which do not contain any subdivergences. They are also called \textit{skeleton diagrams}. More general, we can define $\widetilde{\Delta}_D^{n} = P^{\otimes n} \circ \Delta_D^{n}$. These morphisms give rise to an increasing filtration of $\hopffg_D$, the \textit{coradical filtration}: \begin{gather} \begin{aligned} \label{eqn:coradfilt} ~^{(n)}\hopffg_D &:= \ker \widetilde{\Delta}_D^{n+1} &\forall& n \geq 0 \end{aligned}\\ \mathbb{Q} \simeq ~^{(0)}\hopffg_D \subset ~^{(1)}\hopffg_D \subset \ldots \subset ~^{(n)}\hopffg_D \subset \ldots \subset \hopffg_D. \end{gather}

${\hopffg_D}$ is equipped with an \textit{antipode} $S_D: {\hopffg_D} \rightarrow {\hopffg_D}$, implicitly defined by the identity, $m \circ ( \text{id} \otimes S_D ) \circ \Delta_D = \unit \circ \counit$. Because $\hopffg_D$ is connected, there is always a unique solution for $S_D$, which can be calculated recursively.

\paragraph{Group of characters} The characters of $\hopffg_D$ are linear algebra morphisms from ${\hopffg_D}$ to some commutative, unital algebra $\mathcal{A}$. The set of all these morphisms forms a group and is denoted as $G^{{\hopffg_D}}_{\mathcal{A}}$. The product on this group, called the \textit{convolution product}, is given by $\phi * \psi = m_\mathcal{A} \circ ( \phi \otimes \psi ) \circ \Delta_D$ for $\phi, \psi \in G^{{\hopffg_D}}_{\mathcal{A}}$ and the unit of the group is the morphism $\unit_\mathcal{A} \circ \counit_{{\hopffg_D}}$.

\paragraph{Hopf ideals} A \textit{Hopf ideal} $I$ of a Hopf algebra $\mathcal{H}$ is an ideal of the algebra and a coideal of the coalgebra.  Explicitly, it is a subspace of a Hopf algebra, $I \subset \mathcal{H}$, such that  \begin{align} m ( I \otimes \mathcal{H} ) &\subset I \\ \Delta I &\subset I \otimes \mathcal{H} + \mathcal{H} \otimes I. \end{align} For every ideal $I$, the quotient Hopf algebra $\mathcal{H}/I$ can be defined, where two elements $x,y \in \mathcal{H}$ are equivalent if their difference is in the ideal $x-y \in I$. The quotient $\mathcal{H}/I$ is also a Hopf algebra.

\paragraph{Hopf algebra morphisms} Algebra morphisms which also respect the coalgebraic structure of the Hopf algebra are called \textit{Hopf algebra morphisms}. They are linear maps from a Hopf algebra $\mathcal{H}_1$ to another Hopf algebra $\mathcal{H}_2$. Explicitly, this means that $\psi : \mathcal{H}_1 \rightarrow \mathcal{H}_2$ is a Hopf algebra morphism iff $m_{\mathcal{H}_2} \circ ( \psi \otimes \psi ) = \psi \circ m_{\mathcal{H}_1}$ and $\Delta_{\mathcal{H}_2} \circ \psi = (\psi \otimes \psi) \circ \Delta_{\mathcal{H}_1}$. A Hopf algebra morphism as $\psi$ always gives rise to a Hopf ideal in $\mathcal{H}_1$. This ideal is the kernel of the morphism denoted as $\ker \psi$. The quotient Hopf algebra $\mathcal{H}_1/\ker\psi$ is isomorphic to $\mathcal{H}_2$.

\begin{figure} \ifdefined\nodraft \begin{center} {
\def\scale {4ex} \begin{tikzpicture}[x=\scale,y=\scale,baseline={([yshift=-.5ex]current bounding box.center)}] \begin{scope}[node distance=1] \coordinate (i); \coordinate[above right= of i] (v); \coordinate[above=.5 of v] (vm); \coordinate[above=.5 of vm] (vt); \coordinate[below right= of v] (o); \draw (vm) circle(.5); \draw (i) -- (v); \draw (o) -- (v); \filldraw (v) circle(1pt); \draw [fill=white] (vt) circle(1ex); \draw [fill=white,thick,pattern=north west lines] (vt) circle (1ex); \end{scope} \end{tikzpicture} }
\end{center} \else

MISSING IN DRAFT MODE

\fi \caption{Characteristic subdiagram of every tadpole diagram} \label{fig:snail} \end{figure}
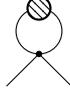

An useful example of a Hopf algebra morphism is the projection of $\hopffg_D$ to the Hopf algebra of Feynman diagrams without `tadpoles' (also snails or seagulls). Tadpoles are diagrams which can be split into two connected components by removing a single vertex such that one component does not contain any external leg. A tadpole diagram always has a subdiagram of a topology as depicted in fig. \ref{fig:snail}. The Hopf algebra of Feynman diagrams without tadpoles is denoted as $\hopffgs_D$. \begin{defn} \label{def:snailfreehopf} We define $\hopffgs_D$ as $\hopffg$ with the difference that no tadpole diagrams are allowed as generators and replace $\sdsubdiags_D(\Gamma)$ in the formula for the coproduct, eq. \eqref{eqn:def_cop}, with \begin{align} \sdsubdiagsn_D (\Gamma) := \left\{ \gamma \in \sdsubdiags_D(\Gamma) : \text{such that } \Gamma/\gamma \text{ is no tadpole diagram} \right\}. \end{align} \end{defn} Only the s.d.\ subdiagrams which do not result in a tadpole diagram upon contraction are elements of $\sdsubdiagsn_D (\Gamma)$.

A Hopf algebra morphism from $\hopffg_D$ to $\hopffgs_D$ is easy to set up: \begin{align} \label{eqn:morphismpsi} &\psi& &:& &\hopffg_D & &\rightarrow& &\hopffgs_D, \\ && && &\Gamma& &\mapsto& & \begin{cases} &0 \text{ if } \Gamma \text{ is a tadpole diagram.} \\ &\Gamma \text{ else.} \end{cases} \end{align} This map fulfills the requirements for a Hopf algebra morphism. The associated ideal $\ker\psi \subset \hopffg_D$ is the subspace of $\hopffg_D$ spanned by all tadpole diagrams. This ideal and the map $\psi$ are very useful, because the elements in $\ker \psi$ evaluate to zero after renormalization in kinematic subtraction schemes \cite{brown2013angles} and in minimal subtraction schemes for the massless case. \section{Algebraic lattice structure of subdivergences} \label{sec:posetsandlattices} To highlight the lattice structure of renormalizable QFTs, it is necessary to have more information on the structure in which subdivergences can appear. The classic notion of overlapping divergence, which addresses the case of two or more subdiagrams which are neither contained in each other nor disjoint, is central for this analysis.

\subsection{Overlapping divergences} \begin{defn} Two diagrams $\gamma_1, \gamma_2 \in \mathcal{P}_{\text{1PI}}(\Gamma)$ are called overlapping if $\gamma_1 \not \subset \gamma_2$, $\gamma_2 \not \subset \gamma_1$ and $\gamma_1 \cap \gamma_2 \neq \emptyset$.  \end{defn}

To prove the important properties of overlapping divergences exploited in the following sections, we need the following two lemmas: \begin{lmm} \label{lmm:sgidentities} For two arbitrary subdiagrams $\gamma_1, \gamma_2 \subset \Gamma$: \begin{align} \label{eqn:Hextid1} \Hext( \gamma_1 \cap \gamma_2 ) &= \left( \Hext( \gamma_1 ) \cup \Hext ( \gamma_2 ) \right) \cap H(\gamma_1) \cap H(\gamma_2) \\ \label{eqn:Hextid2} \Hext( \gamma_1 \cup \gamma_2 ) &= \left( \Hext( \gamma_1 ) \cup \Hext ( \gamma_2 ) \right) \setminus \Hint(\gamma_1) \setminus \Hint(\gamma_2) \\ | \Hext( \gamma_1 ) | + | \Hext( \gamma_2 ) | &= | \Hext( \gamma_1 \cup \gamma_2 ) | + | \Hext( \gamma_1 \cap \gamma_2 ) |. \end{align} \end{lmm}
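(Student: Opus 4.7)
The plan is to reduce all three identities to elementary set-theoretic manipulations on $H$, $\Hint$ and $\Hext$, after first establishing how $\Hint$ behaves under union and intersection of subdiagrams. For subdiagrams $\gamma_i = (H_i, E_i, V_i, \eta|_{H_i})$ the definitions give immediately $H(\gamma_1\cup\gamma_2)=H_1\cup H_2$ and $H(\gamma_1\cap\gamma_2)=H_1\cap H_2$. The union case for $\Hint$ is direct: $\Hint(\gamma_1\cup\gamma_2)=\bigcup_{e\in E_1\cup E_2} e = \Hint(\gamma_1)\cup\Hint(\gamma_2)$. For the intersection case, the crucial input is condition (5) of Definition~\ref{def:feynmandiagram}, which forces the edges of $\Gamma$ to be pairwise disjoint, so every half-edge lies in at most one edge of $\Gamma$. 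Consequently a half-edge $h$ belongs to both $\Hint(\gamma_1)$ and $\Hint(\gamma_2)$ iff the unique edge of $\Gamma$ containing it lies in $E_1\cap E_2$, giving $\Hint(\gamma_1\cap\gamma_2)=\Hint(\gamma_1)\cap\Hint(\gamma_2)$.

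With these two facts the first two identities become short set-theoretic verifications. For \eqref{eqn:Hextid2}, I would compute
\[
\Hext(\gamma_1\cup\gamma_2)=(H_1\cup H_2)\setminus(\Hint(\gamma_1)\cup\Hint(\gamma_2)),
\]
and observe that on the right-hand side of \eqref{eqn:Hextid2} the complement $\setminus\Hint(\gamma_1)\setminus\Hint(\gamma_2)$ turns $\Hext(\gamma_1)\cup\Hext(\gamma_2)$ into $(H_1\cup H_2)\setminus(\Hint(\gamma_1)\cup\Hint(\gamma_2))$ since $\Hext(\gamma_i)\subset H_i$ and $\Hext(\gamma_i)\cap\Hint(\gamma_i)=\emptyset$. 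For \eqref{eqn:Hextid1}, I would similarly start from $\Hext(\gamma_1\cap\gamma_2)=(H_1\cap H_2)\setminus(\Hint(\gamma_1)\cap\Hint(\gamma_2))$ and check, by splitting into cases according to whether $h\in\Hint(\gamma_1)$ and $h\in\Hint(\gamma_2)$, that this equals $(\Hext(\gamma_1)\cup\Hext(\gamma_2))\cap H_1\cap H_2$.

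For the cardinality identity, the cleanest route is to use that both $\Hint(\gamma_i)\subset H_i$ and the union/intersection behaviour just established, so
\[
|\Hext(\gamma_1\cup\gamma_2)| = |H_1\cup H_2|-|\Hint(\gamma_1)\cup\Hint(\gamma_2)|,
\]
\[
|\Hext(\gamma_1\cap\gamma_2)| = |H_1\cap H_2|-|\Hint(\gamma_1)\cap\Hint(\gamma_2)|.
\]
Adding these and applying ordinary inclusion-exclusion to both $H_1,H_2$ and $\Hint(\gamma_1),\Hint(\gamma_2)$ collapses the right-hand side to $|H_1|+|H_2|-|\Hint(\gamma_1)|-|\Hint(\gamma_2)|=|\Hext(\gamma_1)|+|\Hext(\gamma_2)|$.

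The only genuinely non-formal step is the identity $\Hint(\gamma_1\cap\gamma_2)=\Hint(\gamma_1)\cap\Hint(\gamma_2)$; this is where the pairwise-disjointness of edges is indispensable, since in a general hypergraph the inclusion $\bigcup_{e\in E_1\cap E_2}e\subsetneq(\bigcup_{e\in E_1}e)\cap(\bigcup_{e\in E_2}e)$ can be strict. Everything else is routine bookkeeping.
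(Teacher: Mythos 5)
Your proof is correct and follows essentially the same route as the paper's: the single non-trivial input in both is that the pairwise disjointness of edges (condition \ref{cond:disjedges} of Definition \ref{def:feynmandiagram}) gives $\bigcup_{e\in E_1\cap E_2}e=\Hint(\gamma_1)\cap\Hint(\gamma_2)$, after which both set identities are routine bookkeeping. Your derivation of the cardinality identity is marginally more streamlined---you apply inclusion--exclusion directly to $|H(\gamma_i)|-|\Hint(\gamma_i)|$ using the union/intersection formulas for $H$ and $\Hint$, rather than to the right-hand sides of the first two identities as the paper does---but it is the same computation in substance.
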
 \begin{proof} We abbreviate $H(\gamma_1) =: H^1$, $\Hint(\gamma_2)=:H_i^2$, $\Hext( \gamma_1 \cup \gamma_2 ) =: H_e^{1 \cup 2}$, $\Hext( \gamma_1 \cap \gamma_2 )=: H_e^{1 \cap 2}$ and so on. The first and the second identity follow from the definition of $\Hext(\gamma)$ and the condition that edges of a diagram $\Gamma$ are either disjoint or equal: \begin{align} H_e^{1 \cap 2} & = \left( H^1 \cap H^2 \right) \setminus \left( \bigcup \limits_{e\in E(\gamma_1) \cap E(\gamma_2)} e \right) \end{align} using condition \ref{cond:disjedges} of definition \ref{def:feynmandiagram}, \begin{align} = \left( H^1 \cap H^2 \right) \setminus \left( H_i^1 \cap H_i^2 \right) = \left( H_e^1 \cap H^2 \right) \cup \left( H_e^2 \cap H^1 \right) = \left( H_e^1 \cup H_e^2 \right) \cap H^1 \cap H^2. \end{align} Analogously, the second identity follows. The third identity follows from the other two by applying inclusion-exclusion on the right-hand side of eq. \eqref{eqn:Hextid1} and eq. \eqref{eqn:Hextid2},  $| H_e^{1 \cap 2} | = | H_e^1 \cap H^2 | + | H_e^2 \cap H^1 | - | H_e^1 \cap H_e^2 |$,  $| H_e^{1 \cup 2} | = | H_e^1 \setminus H_i^2 | + | H_e^2 \setminus H_i^1 | - | H_e^1 \cap H_e^2 |$,  adding both and using inclusion-exclusion again: \begin{align} \begin{split} | H_e^{1 \cap 2} | + | H_e^{1 \cup 2} | &= | (H_e^1 \cap H^2 ) \cup ( H_e^1 \setminus H_i^2 ) | + | (H_e^1 \cap H^2 ) \cap ( H_e^1 \setminus H_i^2 ) | \\ &+ | (H_e^2 \cap H^1 ) \cup ( H_e^2 \setminus H_i^1 ) | + | (H_e^2 \cap H^1 ) \cap ( H_e^2 \setminus H_i^1 ) | \\ &- 2 | H_e^1 \cap H_e^2 | = | H_e^1 | + | H_e^2 | \end{split} \end{align} \end{proof}

\begin{lmm} \label{lmm:sdHext} For every proper subdiagram $\gamma$ of a connected diagram $\Gamma$: \begin{align} |\Hext(\gamma) \cap \Hint(\Gamma) | \geq 1 \end{align} \end{lmm}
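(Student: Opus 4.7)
The plan is to argue by contradiction using the connectedness of $\Gamma$. Suppose that $\Hext(\gamma) \cap \Hint(\Gamma) = \emptyset$; since $H(\gamma) \subseteq H(\Gamma) = \Hext(\Gamma) \sqcup \Hint(\Gamma)$, this is equivalent to $\Hext(\gamma) \subseteq \Hext(\Gamma)$. The goal is to build a "complementary" subdiagram that, together with $\gamma$, realises $\Gamma$ as a disjoint union, contradicting connectedness.

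I would define $\gamma' := \bigl( H(\Gamma) \setminus H(\gamma),\, E(\Gamma) \setminus E(\gamma),\, V(\Gamma) \setminus V(\gamma),\, \eta|_{H(\Gamma) \setminus H(\gamma)} \bigr)$ and check that, under the standing hypothesis, this is actually a valid Feynman subdiagram of $\Gamma$. The two things to verify are that its half-edge set equals $\bigcup_{v \in V(\gamma')} v$ (which follows from the disjointness of vertices, condition \ref{cond:disjvtcs} of Definition \ref{def:feynmandiagram}, giving $H(\Gamma) \setminus H(\gamma) = \bigcup_{v \in V(\Gamma) \setminus V(\gamma)} v$), and that every edge $e \in E(\Gamma) \setminus E(\gamma)$ has both half-edges in $V(\gamma')$. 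This second point is the crux: if one half-edge $h \in e$ were instead in $H(\gamma)$, then since $e \notin E(\gamma)$ the half-edge $h$ would be external in $\gamma$, while $e \in E(\Gamma)$ would force $h \in \Hint(\Gamma)$, contradicting the hypothesis. Once this is in hand, one verifies directly from the definitions that $\gamma \cap \gamma' = \emptyset$ and $\gamma \cup \gamma' = \Gamma$.

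It remains to show that both $\gamma$ and $\gamma'$ are non-empty. That $\gamma$ is non-empty I would take to be part of the convention "proper subdiagram" (otherwise the statement is vacuously false for $\gamma = \emptyset$). For $\gamma'$, I would split on whether $V(\gamma) = V(\Gamma)$ or $V(\gamma) \subsetneq V(\Gamma)$. If $V(\gamma) \subsetneq V(\Gamma)$, then $V(\gamma') \neq \emptyset$, so $\gamma' \neq \emptyset$, and connectedness of $\Gamma$ is contradicted. If instead $V(\gamma) = V(\Gamma)$, then properness of $\gamma$ forces $E(\gamma) \subsetneq E(\Gamma)$; picking any $e \in E(\Gamma) \setminus E(\gamma)$, both of its half-edges lie in $H(\Gamma) = H(\gamma)$, are external in $\gamma$ (since $e \notin E(\gamma)$), and are internal in $\Gamma$ (since $e \in E(\Gamma)$), directly producing an element of $\Hext(\gamma) \cap \Hint(\Gamma)$ and again contradicting the hypothesis.

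The main obstacle, as I see it, is not a deep one but a bookkeeping one: making sure that the complement $\gamma'$ genuinely qualifies as a Feynman (sub)diagram under Definition \ref{def:feynmandiagram} — in particular that no "stray" edge of $E(\Gamma) \setminus E(\gamma)$ straddles the partition $V(\gamma) \sqcup V(\gamma')$. This is exactly what the hypothesis $\Hext(\gamma) \cap \Hint(\Gamma) = \emptyset$ rules out, so the argument threads through cleanly.
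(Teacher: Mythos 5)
Your proposal is correct and follows essentially the same route as the paper: assume $\Hext(\gamma)\cap\Hint(\Gamma)=\emptyset$, form the complement $\Gamma\setminus\gamma$, check it is a valid subdiagram, and contradict connectedness. You simply spell out the details the paper leaves implicit (that no edge of $E(\Gamma)\setminus E(\gamma)$ can straddle the partition, and the non-emptiness of both pieces), which is a faithful elaboration rather than a different argument.
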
 \begin{proof} Suppose $|\Hext(\gamma) \cap \Hint(\Gamma) | = 0$, that means there are no external legs of $\gamma$, which are internal in $\Gamma$. As a consequence, all half-edges of $\gamma$ can be removed from $\Gamma$ without breaking an edge. That means, $\Gamma \setminus \gamma=(H(\Gamma) \setminus H(\gamma), E(\Gamma) \setminus E(\gamma), V(\Gamma) \setminus V(\gamma), \eta \big|_{H(\Gamma) \setminus H(\gamma)} )$ is also a Feynman diagram. But $(\Gamma \setminus \gamma) \cup \gamma = \Gamma$ and $(\Gamma \setminus \gamma) \cap \gamma = \emptyset$, thus $\Gamma$ must be disconnected. \end{proof} \begin{crll} \label{crll:1PIsdHext} For every proper subdiagram $\gamma$ of a 1PI diagram $\Gamma$: \begin{align} |\Hext(\gamma) \cap \Hint(\Gamma) | \geq 2 \end{align} \end{crll}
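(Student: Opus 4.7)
The plan is to argue by contradiction, bootstrapping from the previous lemma. Assume $|\Hext(\gamma) \cap \Hint(\Gamma)| \leq 1$. Lemma \ref{lmm:sdHext} applied to the connected diagram $\Gamma$ already forces the size to be at least $1$, so the only case to rule out is $|\Hext(\gamma) \cap \Hint(\Gamma)| = 1$. Let $h$ denote the unique half-edge in this set, and let $e = \{h, h'\} \in E(\Gamma)$ be the edge containing $h$. Note $e \notin E(\gamma)$, since $h$ is external in $\gamma$ and edges of $\gamma$ are pairwise disjoint with all other edges of $\Gamma$.

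First I would show that $h' \notin H(\gamma)$. Indeed, if $h' \in H(\gamma)$, then since $e \notin E(\gamma)$ and edges of $\gamma$ are disjoint, $h'$ cannot lie in any edge of $\gamma$, so $h' \in \Hext(\gamma)$. But $h' \in e$ gives $h' \in \Hint(\Gamma)$ as well, contradicting uniqueness of $h$.

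Next I would pass to the diagram $\Gamma' := \Gamma \setminus e = (H(\Gamma), E(\Gamma) \setminus \{e\}, V(\Gamma), \eta)$, which is a Feynman diagram (the half-edges $h, h'$ simply become external legs of $\Gamma'$). Since $\Gamma$ is 1PI, $\Gamma'$ is connected. Moreover $\gamma \subset \Gamma'$ (we already have $E(\gamma) \subset E(\Gamma) \setminus \{e\}$ and $V(\gamma) \subset V(\Gamma)$), and $\gamma$ is a proper subdiagram of $\Gamma'$ because $h' \in H(\Gamma') \setminus H(\gamma)$, so the unique vertex of $\Gamma'$ containing $h'$ lies outside $V(\gamma)$.

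Finally I would apply Lemma \ref{lmm:sdHext} to the pair $\gamma \subset \Gamma'$ to obtain $|\Hext(\gamma) \cap \Hint(\Gamma')| \geq 1$. However, $\Hint(\Gamma') = \Hint(\Gamma) \setminus \{h, h'\}$, and by assumption $\Hext(\gamma) \cap \Hint(\Gamma) = \{h\}$, giving $\Hext(\gamma) \cap \Hint(\Gamma') = \emptyset$ — the desired contradiction. The only subtle point, and the one I would be most careful to verify, is that removing $e$ from $\Gamma$ really does leave a well-defined Feynman diagram in which $\gamma$ remains a proper subdiagram; once that is in place, the deduction from Lemma \ref{lmm:sdHext} is immediate.
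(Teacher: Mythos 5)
Your proposal is correct and follows essentially the same route as the paper's own proof: assume $|\Hext(\gamma)\cap\Hint(\Gamma)|=1$, delete the edge containing that half-edge, observe the result is still connected because $\Gamma$ is 1PI, and derive a contradiction with Lemma \ref{lmm:sdHext}. Your version is simply more careful about the details the paper leaves implicit (that $h'\notin H(\gamma)$, that $\gamma$ remains a proper subdiagram of $\Gamma'$, and that the case of empty intersection is already excluded).
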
 \begin{proof} Suppose $|\Hext(\gamma) \cap \Hint(\Gamma) | = 1$ and let $h \in \Hext(\gamma) \cap \Hint(\Gamma)$. As $h$ belongs to some edge $e \in E$ of $\Gamma=(H, E, V, \eta)$ such that $h \in e$ , this edge can be removed from $\Gamma$. Let $\Gamma' = (H, E \setminus e, V, \eta)$ which is still a connected diagram, because $\Gamma$ is 1PI. Accordingly, $|\Hext(\gamma) \cap \Hint(\Gamma') | = 0$, which contradicts the result of lemma \ref{lmm:sdHext}. \end{proof}

\begin{prop} \label{prop:union_4ext} If $\gamma_1$ and $\gamma_2$ are overlapping subdiagrams in $\mathcal{P}_{\text{1PI}}(\Gamma)$, and $\mu \subset \gamma_1 \cap \gamma_2$ some connected component of $\gamma_1 \cap \gamma_2$, then $|H_\text{ext}(\mu)| \geq 4$. \end{prop}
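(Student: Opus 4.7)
The plan is to apply Corollary \ref{crll:1PIsdHext} to $\mu$ viewed as a subdiagram of $\gamma_1$ and of $\gamma_2$ separately, obtaining at least two external legs of $\mu$ internal to each, and then argue that these two sets of legs are disjoint so that their counts add.

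First I would check that $\mu$ really is a \emph{proper} subdiagram of each $\gamma_i$, so that the corollary applies. Since $\gamma_1$ and $\gamma_2$ overlap, neither contains the other, and in particular $\gamma_1 \cap \gamma_2 \subsetneq \gamma_1$ and $\gamma_1 \cap \gamma_2 \subsetneq \gamma_2$. As $\mu \subseteq \gamma_1 \cap \gamma_2$, it is a proper subdiagram of both $\gamma_i$. Corollary \ref{crll:1PIsdHext} then yields
\begin{align*}
|\Hext(\mu) \cap \Hint(\gamma_1)| \geq 2 \qquad \text{and} \qquad |\Hext(\mu) \cap \Hint(\gamma_2)| \geq 2.
\end{align*}

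The crucial step is to show these two subsets of $\Hext(\mu)$ are disjoint. Suppose $h \in \Hext(\mu) \cap \Hint(\gamma_1) \cap \Hint(\gamma_2)$. Then $h$ lies in some edge $e_1 \in E(\gamma_1)$ and some edge $e_2 \in E(\gamma_2)$; by condition \ref{cond:disjedges} of Definition \ref{def:feynmandiagram}, edges are pairwise disjoint within a diagram, so $e_1 = e_2 =: e$, and hence $e \in E(\gamma_1) \cap E(\gamma_2) = E(\gamma_1 \cap \gamma_2)$. Now $e$ contains $h$ and one other half-edge $h'$; since $h \in H(\mu)$ and $\mu$ is a connected component of $\gamma_1 \cap \gamma_2$, the edge $e$ places $h'$ in the same connected component as $h$, so $e \subset H(\mu)$ and therefore $e \in E(\mu)$. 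But then $h \in \Hint(\mu)$, contradicting $h \in \Hext(\mu)$.

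Combining these, the sets $\Hext(\mu) \cap \Hint(\gamma_1)$ and $\Hext(\mu) \cap \Hint(\gamma_2)$ are disjoint subsets of $\Hext(\mu)$, so
\begin{align*}
|\Hext(\mu)| \geq |\Hext(\mu) \cap \Hint(\gamma_1)| + |\Hext(\mu) \cap \Hint(\gamma_2)| \geq 4.
\end{align*}
The main subtlety is the disjointness argument in step three: one needs to use both the edge-disjointness axiom and the fact that $\mu$ is a \emph{connected component} (not merely a subdiagram) of the intersection, so that an edge of $\gamma_1 \cap \gamma_2$ touching $\mu$ is actually contained in $\mu$. The rest is bookkeeping.
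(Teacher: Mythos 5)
Your proof is correct, and its skeleton is the same as the paper's: observe that $\mu$ is a proper subdiagram of each $\gamma_i$ (proper because overlapping diagrams contain neither one another nor their intersection in full), apply Corollary \ref{crll:1PIsdHext} twice to get $|\Hext(\mu)\cap\Hint(\gamma_1)|\geq 2$ and $|\Hext(\mu)\cap\Hint(\gamma_2)|\geq 2$, and then show these two subsets of $\Hext(\mu)$ are disjoint so the counts add. The only place you diverge is the disjointness step. The paper obtains it by citing eq.~\eqref{eqn:Hextid1} of Lemma \ref{lmm:sgidentities}: since $\mu$ is a connected component, $\Hext(\mu)\subset\Hext(\gamma_1\cap\gamma_2)\subset\Hext(\gamma_1)\cup\Hext(\gamma_2)$, and an external half-edge of $\gamma_i$ is never internal to $\gamma_i$, so $\Hext(\mu)\cap\Hint(\gamma_1)\cap\Hint(\gamma_2)=\emptyset$. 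You instead rederive the same fact from first principles: a half-edge internal to both $\gamma_1$ and $\gamma_2$ lies, by the edge-disjointness axiom (condition \ref{cond:disjedges}), in a single edge belonging to $E(\gamma_1)\cap E(\gamma_2)=E(\gamma_1\cap\gamma_2)$, and that edge must sit inside the component $\mu$, forcing the half-edge to be internal to $\mu$. Both arguments are sound; yours is self-contained and makes explicit where the connected-component hypothesis is used (an edge of the intersection meeting $\mu$ lies in $\mu$), while the paper's is shorter because it reuses the set identity it has already established --- your direct computation is in effect a special case of the manipulation inside the proof of Lemma \ref{lmm:sgidentities}.
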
 \begin{proof} $\gamma_1$ and $\gamma_2$ are overlapping and $\mu$ is a proper subdiagram of both. In this case corollary \ref{crll:1PIsdHext} applies: $|\Hext(\mu) \cap \Hint(\gamma_1)| \geq 2$ and $|\Hext(\mu) \cap \Hint(\gamma_2)| \geq 2$. It remains to be proven that the sets $\Hext(\mu) \cap \Hint(\gamma_1)$ and $\Hext(\mu) \cap \Hint(\gamma_2)$ are mutually disjoint. $\mu$ is a connected component of $\gamma_1 \cap \gamma_2$, consequently $\Hext(\mu) \subset \Hext(\gamma_1 \cap \gamma_2)$. Lemma \ref{lmm:sgidentities} implies $\Hext(\gamma_1 \cap \gamma_2) \subset \Hext(\gamma_1) \cup \Hext(\gamma_2) \subset \overline{ \Hint(\gamma_1) \cap \Hint(\gamma_2) }$. Hence, $\Hext(\mu) \cap \Hint(\gamma_1) \cap \Hint(\gamma_2) = \emptyset$.  \end{proof} \subsection{Posets and algebraic lattices} The set of subdivergences of a Feynman diagram is obviously partially ordered by inclusion. These posets are quite constrained for some renormalizable QFTs: They additionally carry a lattice structure. In \cite[Part III]{figueroa2005combinatorial} this was studied in the light of distributive lattices. 

In this section, we will elaborate on the conditions a QFT must fulfill for these posets to be lattices. The term \textit{join-meet-renormalizability} will be defined which characterizes QFTs in which all Feynman diagrams whose set of subdivergencies form lattices. It will be shown that this is a special property of QFTs with only four-or-less-valent vertices. A counter example of a Feynman diagram with six-valent vertices which does not carry a lattice structure is given.

The definitions will be illustrated by the application to the set of subdivergences of a Feynman diagram. Additionally, we will introduce the corresponding Hopf algebra for these lattices based on an incidence Hopf algebra \cite{Schmitt1994}.

First, the necessary definitions of poset and lattice theory will be repeated:

\begin{defn}[Poset] A \textit{partially ordered set} or \textit{poset} is a finite set $P$ endowed with a partial order $\leq$.  An interval $[x,y]$ is a subset $\left\{ z \in P : x \leq z \leq y \right\} \subset P$. If $[x,y] = \left\{x,y\right\}$, x covers y and y is covered by x. \end{defn} For a more detailed exposition of poset and lattice theory we refer to \cite{stanley1997}.

\paragraph{Hasse diagram} A Hasse diagram of a poset $P$ is the graph with the elements of $P$ as vertices and the cover relations as edges. Larger elements are always drawn above smaller elements. 

\begin{expl} The set of superficially divergent subdiagrams $\sdsubdiags_D(\Gamma)$ of a Feynman diagram $\Gamma$ is a poset ordered by inclusion:  $\gamma_1 \leq \gamma_2 \Leftrightarrow \gamma_1 \subset \gamma_2$ for all $\gamma_1, \gamma_2 \in \sdsubdiags_D(\Gamma)$.

The statement that a subdiagram $\gamma_1$ covers $\gamma_2$ in $\sdsubdiags_D(\Gamma)$ is equivalent to the statement that $\gamma_1 / \gamma_2$ is primitive. The elements that are covered by the full diagram $\Gamma \in \sdsubdiags_D(\Gamma)$ are called \textit{maximal forests}; whereas, a maximal chain $\emptyset \subset \gamma_1 \subset \ldots \subset \gamma_n \subset \Gamma$, where each element is covered by the next, is a \textit{complete forest} of $\Gamma$. \end{expl}

The Hasse diagram of a s.d.\ diagram $\Gamma$ can be constructed by the following procedure: Draw the diagram and find all the maximal forests $\gamma_i \in \sdsubdiags_D(\Gamma)$ such that $\Gamma/\gamma_i$ is primitive. Draw the diagrams $\gamma_i$ under $\Gamma$ and draw lines from $\Gamma$ to the $\gamma_i$. Subsequently, determine all the maximal forests $\mu_i$ of the $\gamma_i$ and draw them under the $\gamma_i$. Draw a line from $\gamma_i$ to $\mu_i$ if $\mu_i \subset \gamma_i$. Repeat this until only primitive diagrams are left. Then draw lines from the primitive subdiagrams to an additional $\emptyset$-diagram underneath them. In the end, replace diagrams by vertices.

\begin{expl} For instance, the set of superficially divergent subdiagrams for $D=4$ of the diagram, \ifdefined\nodraft $ { \def\scale{2ex} \begin{tikzpicture}[x=\scale,y=\scale,baseline={([yshift=-.5ex]current bounding box.center)}] \begin{scope}[node distance=1] \coordinate (v0); \coordinate[right=.5 of v0] (v4); \coordinate[above right= of v4] (v2); \coordinate[below right= of v4] (v3); \coordinate[below right= of v2] (v5); \coordinate[right=.5 of v5] (v1); \coordinate[above right= of v2] (o1); \coordinate[below right= of v2] (o2); \coordinate[below left=.5 of v0] (i1); \coordinate[above left=.5 of v0] (i2); \coordinate[below right=.5 of v1] (o1); \coordinate[above right=.5 of v1] (o2); \draw (v0) -- (i1); \draw (v0) -- (i2); \draw (v1) -- (o1); \draw (v1) -- (o2); \draw (v0) to[bend left=20] (v2); \draw (v0) to[bend right=20] (v3); \draw (v1) to[bend left=20] (v3); \draw (v1) to[bend right=20] (v2); \draw (v2) to[bend right=60] (v3); \draw (v2) to[bend left=60] (v3); \filldraw (v0) circle(1pt); \filldraw (v1) circle(1pt); \filldraw (v2) circle(1pt); \filldraw (v3) circle(1pt); \ifdefined\cvl \draw[line width=1.5pt] (v0) to[bend left=20] (v2); \draw[line width=1.5pt] (v0) to[bend right=20] (v3); \fi \ifdefined\cvr \draw[line width=1.5pt] (v1) to[bend left=20] (v3); \draw[line width=1.5pt] (v1) to[bend right=20] (v2); \fi \ifdefined\cvml \draw[line width=1.5pt] (v2) to[bend left=60] (v3); \fi \ifdefined\cvmr \draw[line width=1.5pt] (v2) to[bend right=60] (v3); \fi \end{scope} \end{tikzpicture} }$ can be represented as the Hasse diagram $ { \def\scale{2ex} \begin{tikzpicture}[x=\scale,y=\scale,baseline={([yshift=-.5ex]current bounding box.center)}] \begin{scope}[node distance=1] \coordinate (top) ; \coordinate [below left= of top] (v1); \coordinate [below right= of top] (v2); \coordinate [below left= of v2] (v3); \coordinate [below= of v3] (bottom); \draw (top) -- (v1); \draw (top) -- (v2); \draw (v1) -- (v3); \draw (v2) -- (v3); \draw (v3) -- (bottom); \filldraw[fill=white, draw=black] (top) circle(2pt); \filldraw[fill=white, draw=black] (v1) circle(2pt); \filldraw[fill=white, draw=black] (v2) circle(2pt); \filldraw[fill=white, draw=black] (v3) circle(2pt); \filldraw[fill=white, draw=black] (bottom) circle(2pt); \end{scope} \end{tikzpicture} } $ \else

MISSING IN DRAFT MODE

\fi , where the vertices represent the subdiagrams in the set given in example \ref{expl:subdiagrams1PIsd}. \end{expl}

\begin{defn}[Lattice] A lattice is a poset $L$ for which an unique least upper bound (\textit{join}) and an unique greatest lower bound (\textit{meet}) exists for any combination of two elements in $L$. The join of two elements $x,y \in L$ is denoted as $x \vee y$ and the meet as $x \wedge y$. Every lattice has a unique greatest element denoted as $\hat{1}$ and a unique smallest element $\hat{0}$. Every interval of a lattice is also a lattice. \end{defn} In many QFTs $\sdsubdiags_D(\Gamma)$ is a lattice for every s.d.\ diagram $\Gamma$: \begin{defn}[Join-meet-renormalizable quantum field theory] A renormalizable QFT is called join-meet-renormalizable if $\sdsubdiags_D(\Gamma)$, ordered by inclusion, is a lattice for every s.d.\ Feynman diagram $\Gamma$. \end{defn} \begin{thm} A renormalizable QFT is join-meet-renormalizable if $\sdsubdiags_D(\Gamma)$ is closed under taking unions: $\gamma_1, \gamma_2 \in \sdsubdiags_D(\Gamma) \Rightarrow \gamma_1 \cup \gamma_2 \in \sdsubdiags_D(\Gamma)$ for all s.d.\ diagrams $\Gamma$. \end{thm}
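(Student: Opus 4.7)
The strategy is to show that closure under unions, which directly provides binary joins, is by itself enough to guarantee binary meets as well, so that $\sdsubdiags_D(\Gamma)$ becomes a lattice. Throughout, I would work in the finite poset $\sdsubdiags_D(\Gamma)$ ordered by inclusion; it has a minimum element $\emptyset$ (the empty product of 1PI s.d.\ diagrams) and a maximum element $\Gamma$.

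For the join of $\gamma_1,\gamma_2\in\sdsubdiags_D(\Gamma)$, the hypothesis places $\gamma_1\cup\gamma_2$ in $\sdsubdiags_D(\Gamma)$. Since any upper bound $\nu$ contains both $\gamma_1$ and $\gamma_2$, it contains their set-theoretic union, so $\gamma_1\vee\gamma_2=\gamma_1\cup\gamma_2$. For the meet, the key step is to form the set $L(\gamma_1,\gamma_2) := \{\mu\in\sdsubdiags_D(\Gamma) : \mu\subset\gamma_1 \text{ and } \mu\subset\gamma_2\}$ of common lower bounds. This set is nonempty (it contains $\emptyset$), finite, and closed under pairwise union: for $\mu,\mu'\in L(\gamma_1,\gamma_2)$, the hypothesis gives $\mu\cup\mu'\in\sdsubdiags_D(\Gamma)$, and set-theoretically $\mu\cup\mu'$ is still contained in each $\gamma_i$. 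Iterating the binary union across the finitely many members of $L(\gamma_1,\gamma_2)$ yields an element of $L(\gamma_1,\gamma_2)$ that contains every other element, so it is the unique maximum element and hence the desired $\gamma_1\wedge\gamma_2$.

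There is no real obstacle in the argument; it is essentially the standard fact that a finite poset with a maximum element in which all binary joins exist is automatically a lattice. The entire substance of the theorem is therefore packaged into the hypothesis — that $\sdsubdiags_D(\Gamma)$ is stable under taking unions of s.d.\ subdiagrams — and the proof reduces to verifying that this single closure property already suffices to define meets as the union of all common lower bounds.
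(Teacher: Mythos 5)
Your proof is correct and follows essentially the same route as the paper: the paper also takes $\gamma_1\vee\gamma_2=\gamma_1\cup\gamma_2$, notes the bounds $\hat 0=\emptyset$ and $\hat 1=\Gamma$, invokes the standard fact that a finite join-semilattice with a least element is a lattice (citing Stanley, Prop.~3.3.1, where you prove it inline), and writes the meet as the union of all common lower bounds, exactly your $L(\gamma_1,\gamma_2)$ construction. The only nitpick is your closing remark, where the relevant bound for nonemptiness of the set of common lower bounds is the minimum $\emptyset$, not a maximum element — but your actual argument uses $\emptyset$ correctly.
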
 \begin{proof} $\sdsubdiags_D(\Gamma)$ is ordered by inclusion $\gamma_1 \leq \gamma_2 \Leftrightarrow \gamma_1 \subset \gamma_2$. The join is given by taking the union of diagrams: $\gamma_1 \join \gamma_2 := \gamma_1 \cup \gamma_2$. $\sdsubdiags_D(\Gamma)$ has a unique greatest element $\hat{1} := \Gamma$ and a unique smallest element $\hat{0} := \emptyset$. Therefore $\sdsubdiags_D(\Gamma)$ is a lattice \cite[Prop. 3.3.1]{stanley1997}. The unique meet is given by the formula, $\gamma_1 \meet \gamma_2 := \bigcup \limits_{\mu \leq \gamma_1 \text{ and } \mu \leq \gamma_2} \mu$.  \end{proof}

A broad class of renormalizable QFTs is join-meet-renormalizable including the standard model: \begin{thm} If all diagrams with four or more legs in a renormalizable QFT are superficially logarithmic divergent or superficially convergent, then the underlying QFT is join-meet-renormalizable. \end{thm}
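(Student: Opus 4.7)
By the previous theorem, it suffices to show that $\sdsubdiags_D(\Gamma)$ is closed under taking unions for every s.d.\ diagram $\Gamma$. Writing each $\gamma_i \in \sdsubdiags_D(\Gamma)$ as a disjoint union of its connected 1PI components, a connected component of $\gamma_1\cup\gamma_2$ is built from a chain of such pieces that are pairwise overlapping in the chain. The plan is therefore to reduce the problem, by an induction that adds one component at a time, to the following core claim: if $\alpha,\beta$ are overlapping connected 1PI subdiagrams of $\Gamma$ with $\omega_D(\alpha),\omega_D(\beta)\leq 0$, then $\alpha\cup\beta$ is 1PI and satisfies $\omega_D(\alpha\cup\beta)\leq 0$.

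The 1PI property of $\alpha\cup\beta$ is a standard graph-theoretic observation: since $\alpha,\beta$ share at least one vertex and each is two-edge-connected, removing any edge of the union leaves a connected graph (one chains paths in $\alpha$ and $\beta$ through the shared vertex). For the superficial degree of divergence, I would derive an inclusion-exclusion identity for $\omega_D$ from the definition \eqref{eqn:omega_D} together with the Euler relation $h_1(\gamma)=|E(\gamma)|-|V(\gamma)|+h_0(\gamma)$ and the additivity of edge and vertex sums under union/intersection. This gives
\begin{align*}
\omega_D(\alpha)+\omega_D(\beta) = \omega_D(\alpha\cup\beta)+\omega_D(\alpha\cap\beta) + D\bigl(h_0(\alpha\cup\beta)+h_0(\alpha\cap\beta)-h_0(\alpha)-h_0(\beta)\bigr),
\end{align*}
which specialises, since $\alpha$, $\beta$ and $\alpha\cup\beta$ are connected, to $\omega_D(\alpha\cup\beta)=\omega_D(\alpha)+\omega_D(\beta)-\omega_D(\alpha\cap\beta)-D(h_0(\alpha\cap\beta)-1)$.

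Now I would invoke Proposition \ref{prop:union_4ext}: every connected component $\mu$ of $\alpha\cap\beta$ satisfies $|\Hext(\mu)|\geq 4$. By renormalizability, $\omega_D(\mu)=\omega(\res\mu)$, and the hypothesis of the theorem asserts that $\omega$ is non-negative on any residue with four or more field factors; hence $\omega_D(\alpha\cap\beta)\geq 0$. Inserting $\omega_D(\alpha),\omega_D(\beta)\leq 0$, $\omega_D(\alpha\cap\beta)\geq 0$, $h_0(\alpha\cap\beta)\geq 1$ and $D\geq 0$ into the displayed formula yields $\omega_D(\alpha\cup\beta)\leq 0$, which is the core claim.

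The main obstacle is the bookkeeping of the induction: at each step one adds a 1PI component of $\gamma_1$ or $\gamma_2$ to the current intermediate union $U$, and one must check that $U$ remains 1PI so that Proposition \ref{prop:union_4ext} continues to apply, while treating separately the trivial cases $\beta_j\subset U$ and $U\subset\beta_j$ in which $\omega_D$ is automatically under control. A minor subtlety is that $\res\mu$ need not itself be an allowed interaction in $\mathcal{R}_v$; in that case one extends $\omega$ by $+\infty$ (fully convergent), so the inequality $\omega_D(\mu)\geq 0$ still holds and the argument goes through.
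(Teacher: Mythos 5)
Your proposal is correct and follows essentially the same route as the paper: reduce to closure of $\sdsubdiags_D(\Gamma)$ under unions, apply inclusion--exclusion to $\omega_D$, and use Proposition \ref{prop:union_4ext} together with the hypothesis to get $\omega_D(\gamma_1\cap\gamma_2)\geq 0$ and hence $\omega_D(\gamma_1\cup\gamma_2)\leq 0$. You are in fact somewhat more careful than the paper, which states the inclusion--exclusion identity without the $-D\bigl(h_0(\gamma_1\cap\gamma_2)-1\bigr)$ correction, asserts without argument that the union of overlapping 1PI subdiagrams is 1PI, and does not spell out the reduction to connected components; since your correction term is non-positive it only strengthens the inequality, so both arguments go through.
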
 \begin{proof} From $\gamma_1, \gamma_2 \in \sdsubdiags_D(\Gamma)$ immediately follows $\gamma_1 \cup \gamma_2 \in \sdsubdiags_D(\Gamma)$ if $\gamma_1$ and $\gamma_2$ are disjoint or contained in each other.  The statement only needs to be validated if $\gamma_1$ and $\gamma_2$ are overlapping. In this case, we have $\gamma_1 \cup \gamma_2 \in \subdiags_{\text{1PI}}(\Gamma)$.

From eq. \eqref{eqn:omega_D}, the definition of $\omega_D$, and inclusion-exclusion: \begin{align} \omega_D(\gamma_1 \cup \gamma_2) &= \omega_D(\gamma_1) + \omega_D(\gamma_2) - \omega_D(\gamma_1 \cap \gamma_2) \end{align} $\gamma_1 \cap \gamma_2$ has four or more external legs as was proven in proposition \ref{prop:union_4ext}. Consequently, $\omega_D(\gamma_1 \cap \gamma_2) \geq 0$ and if $\omega_D(\gamma_1)\leq0$ and $\omega_D(\gamma_2)\leq 0$, then $\omega_D(\gamma_1 \cup \gamma_2) \leq 0$. For this reason $\gamma_1 \cup \gamma_2$ is superficially divergent, $\gamma_1 \cup \gamma_2 \in \sdsubdiags_D(\Gamma)$ and $\sdsubdiags_D(\Gamma)$ is closed under taking unions. \end{proof} \begin{crll} All renormalizable QFTs with only four-or-less-valent vertices are join-meet-renormalizable. \end{crll}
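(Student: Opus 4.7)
The plan is to deduce the corollary directly from the preceding theorem by verifying its hypothesis: in a renormalizable QFT whose vertices are all at most four-valent, every 1PI diagram $\gamma$ with $|\Hext(\gamma)| \geq 4$ satisfies $\omega_D(\gamma) \geq 0$, i.e.\ is superficially logarithmic or convergent.

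To establish this, I would first invoke the renormalizability condition $\omega_D(\gamma) = \omega(\res \gamma)$, so that $\omega_D(\gamma)$ is a function of $\res \gamma$ alone. If $|\Hext(\gamma)| \geq 5$, then $\res \gamma$ is a monomial of degree at least $5$, which cannot lie in $\mathcal{R}_v$ since every allowed vertex has at most four legs; the absence of any counter-term of this residue forces such $\gamma$ to be superficially convergent, so $\omega_D(\gamma) > 0$. If $|\Hext(\gamma)| = 4$, then $\res \gamma \in \mathcal{R}_v$ is necessarily a four-valent vertex, and $\omega(\res \gamma)$ equals the number of derivatives in the associated Lagrangian monomial, hence is nonnegative. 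In either case $\omega_D(\gamma) \geq 0$, and the preceding theorem immediately yields join-meet-renormalizability.

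The main subtlety is interpretive rather than computational: one must import the conventions spelled out at the start of the preliminaries, namely that weights $\omega$ are the nonnegative integers counting derivatives of Lagrangian monomials, and that in a renormalizable theory residues outside $\mathcal{R}_v$ correspond to superficially convergent integrals (otherwise one would have an uncancellable divergence of a residue for which no counter-term is available). Once these physical inputs are acknowledged, the argument reduces to a one-line case split on $|\Hext(\gamma)|$ combined with an appeal to the preceding theorem.
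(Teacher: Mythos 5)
Your argument is correct and takes essentially the same route as the paper: it verifies the hypothesis of the preceding theorem by showing $\omega_D(\gamma)\geq 0$ whenever $|\Hext(\gamma)|\geq 4$, with diagrams having more than four legs forced to be superficially convergent (no matching residue in $\mathcal{R}_v$) and four-leg diagrams at worst logarithmically divergent. You actually supply more justification than the paper's two-sentence proof, which asserts these facts without elaboration; your only slight imprecision --- assuming a four-leg residue must lie in $\mathcal{R}_v$ --- is harmless, since otherwise the same convergence argument applies.
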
 \begin{proof} If there are only four-or-less-valent vertices, the diagrams with $|\Hext(\Gamma)| > 4$ must be superficially convergent. This also implies that $\omega_D(\Gamma) \geq 0$ for $|\Hext(\Gamma)|\geq 4$. \end{proof}

In general, renormalizable QFTs are not join-meet-renormalizable. Fig. \ref{fig:nolattice} shows an example of a s.d.\ diagram $\Gamma$, where $\sdsubdiags_D(\Gamma)$ is not a lattice . The diagram is depicted in fig. \ref{fig:phi6nolattice} and the corresponding poset in fig. \ref{fig:phi6nolatticeposet}. The diagram appears in $\phi^6$-theory, which is therefore renormalizable, but not join-meet-renormalizable, in $3$-dimensions.  {
\ifdefined\nodraft \begin{figure}   \subcaptionbox{Example of a diagram where $\sdsubdiags_3(\Gamma)$ is not a lattice.\label{fig:phi6nolattice}}   [.45\linewidth]{ \def \scale{2em} \begin{tikzpicture}[x=\scale,y=\scale,baseline={([yshift=-.5ex]current bounding box.center)}] \begin{scope}[node distance=1] \coordinate (v0); \coordinate[right=of v0] (v4); \coordinate[above right= of v4] (v2); \coordinate[below right= of v4] (v3); \coordinate[below right= of v2] (v5); \coordinate[right=of v5] (v1); \coordinate[above right= of v2] (o1); \coordinate[below right= of v2] (o2); \coordinate[below left=.5 of v0] (i1); \coordinate[above left=.5 of v0] (i2); \coordinate[below right=.5 of v1] (o1); \coordinate[above right=.5 of v1] (o2); \draw (v0) -- (i1); \draw (v0) -- (i2); \draw (v1) -- (o1); \draw (v1) -- (o2); \draw (v0) to[bend left=20] (v2); \draw (v0) to[bend left=45] (v2); \draw (v0) to[bend right=45] (v3); \draw (v0) to[bend right=20] (v3); \draw (v1) to[bend left=20] (v3); \draw (v1) to[bend left=45] (v3); \draw (v1) to[bend right=45] (v2); \draw (v1) to[bend right=20] (v2); \draw (v2) to[bend right=20] (v4); \draw (v2) to[bend left=20] (v5); \draw (v3) to[bend right=20] (v5); \draw (v3) to[bend left=20] (v4); \draw (v4) to[bend left=45] (v5); \draw (v4) to[bend left=15] (v5); \draw (v4) to[bend right=45] (v5); \draw (v4) to[bend right=15] (v5); \filldraw (v0) circle(1pt); \filldraw (v1) circle(1pt); \filldraw (v2) circle(1pt); \filldraw (v3) circle(1pt); \filldraw (v4) circle(1pt); \filldraw (v5) circle(1pt); \end{scope} \end{tikzpicture}   }   \subcaptionbox{The corresponding non-lattice poset. Trivial vertex multiplicities were omitted. \label{fig:phi6nolatticeposet}}   [.45\linewidth]{ \def \scale{1em} \begin{tikzpicture}[x=\scale,y=\scale,baseline={([yshift=-.5ex]current bounding box.center)}] \begin{scope}[node distance=1] \coordinate (top) ; \coordinate [below left=2 and 2 of top] (v1) ; \coordinate [below left=2 and 1 of top] (a2) ; \coordinate [below =2 of top] (a4) ; \coordinate [below =1 of a2] (v2) ; \coordinate [below =1 of a4] (v4) ; \coordinate [below right= 2 and 1 of top] (v5) ; \coordinate [below left = 4 and 1 of top] (w1) ; \coordinate [below= 4 of top] (w2) ; \coordinate [below left = 5 and .5 of top] (u1) ; \coordinate [below right =3 and 3 of top] (u2) ; \coordinate [below = 6 of top] (s) ; \coordinate [below = 7 of top] (t) ; \draw (top) -- (v1); \draw (top) -- (a2); \draw (top) -- (u2); \draw (top) -- (a4); \draw (top) -- (v5); \draw (a2) -- (v2); \draw (a4) -- (v4); \draw[color=white,line width=4pt] (v4) -- (w1); \draw (v4) -- (w1); \draw[color=white,line width=4pt] (v2) -- (w2); \draw (v2) -- (w1); \draw (v2) -- (w2); \draw (v4) -- (w2); \draw (v1) -- (w1); \draw (v5) -- (w2); \draw[color=white,line width=4pt] (w1) -- (u1); \draw[color=white,line width=4pt] (w2) -- (u1); \draw (w1) -- (u1); \draw (w2) -- (u1); \draw (u1) -- (s); \draw (u2) -- (s); \draw (s) -- (t); \filldraw[fill=white, draw=black] (top) circle(2pt); \filldraw[fill=white, draw=black] (v1) circle(2pt); \filldraw[fill=white, draw=black] (v2) circle(2pt); \filldraw[fill=white, draw=black] (v4) circle(2pt); \filldraw[fill=white, draw=black] (v5) circle(2pt); \filldraw[fill=white, draw=black] (a2) circle(2pt); \filldraw[fill=white, draw=black] (a4) circle(2pt); \filldraw[fill=white, draw=black] (w1) circle(2pt); \filldraw[fill=white, draw=black] (w2) circle(2pt); \filldraw[fill=white, draw=black] (u1) circle(2pt); \filldraw[fill=white, draw=black] (u2) circle(2pt); \filldraw[fill=white, draw=black] (s) circle(2pt); \filldraw[fill=white, draw=black] (t) circle(2pt); \end{scope} \end{tikzpicture}   }   \caption{Counter-example for a renormalizable but not join-meet-renormalizable QFT: $\phi^6$-theory in $3$ dimensions.}   \label{fig:nolattice} \end{figure}
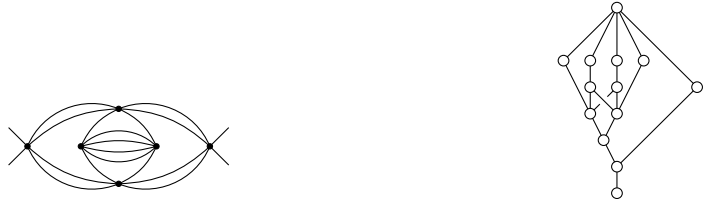 \else

MISSING IN DRAFT MODE

\fi }

To proceed to the Hopf algebra of decorated posets some additional notation of poset and lattice theory must be introduced:

\paragraph{Order preserving maps} A map $\sigma: P \rightarrow \mathbb{N}_0$ on a poset to the non-negative numbers is called strictly order preserving  if $x < y$ implies $\sigma(x) < \sigma(y)$ for all $x,y \in P$.  \paragraph{Cartesian product of posets} From two posets $P_1$ and $P_2$ a new poset $P_1 \times P_2 = \left\{ (s,t) : s \in P_1 \text{ and } t \in P_2 \right\}$, the Cartesian product, with the order relation, $(s,t) \leq (s',t')$ iff $s \leq s'$ and $t \leq t'$, is obtained.

The Cartesian product is commutative and if $P_1$ and $P_2$ are lattices $P_1 \times P_2$ is also a lattice \cite{stanley1997}. It is compatible with the notion of intervals: \begin{align} P_1 \times P_2 \supset [(s,t), (s',t')] =\left\{ (x,y) \in P_1 \times P_2 : s \leq x\leq s' \land t \leq y \leq t' \right\} = [s,s'] \times [t,t']. \end{align}

\paragraph{Isomorphisms of posets} An isomorphism between two posets $P_1$ and $P_2$ is a bijection $j: P_1 \rightarrow P_2$, which preserves the order relation: $j(x) \leq j(y) \Leftrightarrow x \leq y$. 

\subsection{The Hopf algebra of decorated posets} Using the preceding notions a new Hopf algebra structure on posets, suitable for the description of the subdivergences, can be defined. This structure is essentially the one of an incidence Hopf algebra \cite{Schmitt1994} augmented by a strictly order preserving map as a decoration. This is a standard procedure as most applications of posets and lattices require an combinatorial interpretation of the elements of the posets \cite{stanley1997} - analogous to the applications of the Hopf algebras \cite{joni1979coalgebras}. \begin{defn}[Hopf algebra of decorated posets] Let $\mathcal{D}$ be the set of tuples $(P, \nu)$, where $P$ is a finite poset with a unique lower bound $\hat{0}$ and a unique upper bound $\hat{1}$ and a strictly order preserving map $\nu: P \rightarrow \mathbb{N}_0$ with $\nu(\hat{0}) = 0$.  One can think of $\mathcal{D}$ as the set of bounded posets augmented by a strictly order preserving decoration. An equivalence relation is set up on $\mathcal{D}$ by relating $(P_1, \nu_1) \sim (P_2, \nu_2)$ if there is an isomorphism $j: P_1 \rightarrow P_2$, which respects the decoration $\nu$: $\nu_1 = \nu_2 \circ j$. 

Let $\hopfpos$ be the $\mathbb{Q}$-algebra generated by all the elements in the quotient $\mathcal{P}/\sim$ with the commutative multiplication:  \begin{align} &m_{\hopfpos}: & \hopfpos &\otimes \hopfpos & &\rightarrow& &\hopfpos, \\ && (P_1, \nu_1) &\otimes (P_2, \nu_2) & &\mapsto& &\left( P_1 \times P_2, \nu_1 + \nu_2 \right), \end{align} which takes the Cartesian product of the two posets and adds the decorations $\nu$. The sum of the two functions $\nu_1$ and $\nu_2$ is to be interpreted in the sense: $(\nu_1 + \nu_2)(x,y) = \nu_1(x) + \nu_2(y)$. The singleton poset $P=\left\{\hat{0}\right\}$ with $\hat{0}=\hat{1}$ and the trivial decoration $\nu(\hat{0}) = 0$ serves as a multiplicative unit: $\unit(1) = \mathbb{I}_{\hopfpos} := (\left\{\hat{0}\right\}, \hat{0} \mapsto 0)$.

Equipped with the coproduct, \begin{align} \label{eqn:poset_cop} &\Delta_{\hopfpos}: & &\hopfpos& &\rightarrow& \hopfpos &\otimes \hopfpos, \\ && &(P, \nu)& &\mapsto& \sum \limits_{ x \in P } ( [ \hat{0}, x ], \nu ) &\otimes \left( [x, \hat{1}], \nu - \nu(x) \right), \end{align} where $(\nu - \nu(x))(y) = \nu(y) - \nu(x)$  and the counit $\counit$ which vanishes on every generator except $\mathbb{I}_{\hopfpos}$, the algebra $\hopfpos$ becomes a counital coalgebra.  \end{defn} \begin{prop} $\hopfpos$ is a bialgebra. \end{prop}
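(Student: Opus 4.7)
The plan is to verify the four bialgebra axioms: coassociativity, counitality, and the multiplicativity of $\Delta_{\hopfpos}$ and $\counit$. Since $\hopfpos$ is already defined as an algebra with unit $\mathbb{I}_{\hopfpos}$ and is equipped with $\Delta_{\hopfpos}$ and $\counit$, it remains only to check compatibility. The guiding observation is that intervals in $P_1 \times P_2$ factor as $[(s,t),(s',t')] = [s,s'] \times [t,t']$, which is exactly what is needed to turn the multiplicativity checks into bookkeeping of decorations.

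First I would check coassociativity. Applying $\Delta_{\hopfpos}$ twice to $(P,\nu)$ yields, on one side, a double sum indexed by $x \in P$ and $y \in [\hat 0, x]$, with tensor factors $([\hat 0, y],\nu)$, $([y,x],\nu - \nu(y))$, $([x,\hat 1],\nu-\nu(x))$; on the other side, a double sum indexed by $x \in P$ and $z \in [x,\hat 1]$ with the analogous factors. Reindexing the first sum by the pair $(y,x)$ with $y \leq x$ and the second by $(x,z)$ with $x \leq z$ shows both sides agree as a sum over chains $a \leq b$ in $P$ with tensor factors $([\hat 0,a],\nu) \otimes ([a,b],\nu-\nu(a)) \otimes ([b,\hat 1],\nu-\nu(b))$. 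The decoration shift is consistent because $\nu - \nu(a) - (\nu(b)-\nu(a)) = \nu - \nu(b)$.

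Next I would verify the counit axiom. In the sum defining $\Delta_{\hopfpos}(P,\nu)$, the only term for which the right tensor factor has $\counit \neq 0$ is $x=\hat 1$, since $([\hat 1,\hat 1],\nu-\nu(\hat 1))$ is the singleton poset representing $\mathbb{I}_{\hopfpos}$; that term contributes $([\hat 0,\hat 1],\nu) = (P,\nu)$. Symmetrically for $x = \hat 0$, using that $\nu(\hat 0) = 0$ so the decoration is unchanged. Hence $(\id \otimes \counit)\Delta_{\hopfpos} = \id = (\counit \otimes \id)\Delta_{\hopfpos}$.

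For multiplicativity of $\Delta_{\hopfpos}$, I would compute $\Delta_{\hopfpos} m_{\hopfpos}((P_1,\nu_1)\otimes (P_2,\nu_2))$ as a sum over pairs $(x,y) \in P_1 \times P_2$. Using the interval factorization, the summand equals
\begin{equation*}
\bigl([\hat 0,x]\times[\hat 0,y],\nu_1+\nu_2\bigr) \otimes \bigl([x,\hat 1]\times[y,\hat 1],\nu_1+\nu_2-\nu_1(x)-\nu_2(y)\bigr),
\end{equation*}
which, after applying $m_{\hopfpos}^{\otimes 2}$ to split each factor as a Cartesian product of posets with summed decorations, is exactly the summand of $(m_{\hopfpos}\otimes m_{\hopfpos})\tau_{23}(\Delta_{\hopfpos}(P_1,\nu_1)\otimes \Delta_{\hopfpos}(P_2,\nu_2))$ indexed by the same $(x,y)$. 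Multiplicativity of $\counit$ is immediate since a Cartesian product of bounded posets is the singleton iff both factors are, and the only nontrivial check is $\counit(\mathbb{I}_{\hopfpos}\cdot \mathbb{I}_{\hopfpos}) = 1 = \counit(\mathbb{I}_{\hopfpos})^2$. The only mildly delicate point, and the place I would be most careful, is tracking the decoration shifts through the interval factorization in the last step; everything else reduces to the standard incidence-coalgebra argument.
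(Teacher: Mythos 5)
Your proposal is correct and its central step --- verifying $\Delta_{\hopfpos}\circ m_{\hopfpos}=(m_{\hopfpos}\otimes m_{\hopfpos})\circ\tau_{2,3}\circ(\Delta_{\hopfpos}\otimes\Delta_{\hopfpos})$ via the interval factorization $[(s,t),(s',t')]=[s,s']\times[t,t']$ with careful tracking of the decoration shifts --- is exactly the paper's own proof. The additional checks of coassociativity, counitality and multiplicativity of $\counit$ are also correct but are material the paper absorbs into the definition (where $\hopfpos$ is already declared a counital coalgebra), so no further comparison is needed.
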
 \begin{proof} The compatibility of the multiplication with the coproduct needs to be proven. Let $(P_1, \nu_1), (P_2, \nu_2) \in \mathcal{P}$. \begin{align} \begin{split} &\Delta_{\hopfpos} \circ m_{\hopfpos} ( (P_1, \nu_1) \otimes (P_2, \nu_2) ) = \Delta_{\hopfpos} \left( P_1 \times P_2, \nu_1 + \nu_2 \right) = \\ &\sum \limits_{ x \in P_1 \times P_2 } ( [ \hat{0}, x ], \nu_1 + \nu_2 ) \otimes \left( [x, \hat{1}], \nu_1 + \nu_2 - \nu_1(x) - \nu_2(x) \right) = \\ &\sum \limits_{ y \in P_1 } \sum \limits_{ z \in P_2 }( [ \hat{0}_{P_1}, y ] \times [ \hat{0}_{P_2}, z ], \nu_1 + \nu_2 ) \otimes \left( [y, \hat{1}_{P_1}] \times [z, \hat{1}_{P_2}], \nu_1 + \nu_2 - \nu_1(x) - \nu_2(x) \right) = \\ &(m_{\hopfpos} \otimes m_{\hopfpos}) \circ \sum \limits_{ y \in P_1 } \sum \limits_{ z \in P_2 } \big[ ( [ \hat{0}_{P_1}, y ], \nu_1) \otimes ( [ \hat{0}_{P_2}, z ], \nu_2 ) \\ &\otimes ( [y, \hat{1}_{P_1}], \nu_1 - \nu_1(x) ) \otimes ( [z, \hat{1}_{P_2}], \nu_2 - \nu_2(x) ) \big] = \\ &(m_{\hopfpos} \otimes m_{\hopfpos}) \circ \tau_{2,3} \circ (\Delta_{\hopfpos} \otimes \Delta_{\hopfpos}) ( (P_1, \nu_1) \otimes (P_2, \nu_2) ), \end{split} \end{align} where $\tau_{2,3}$ switches the second and the third factor of the tensor product. \end{proof} Note, that we also could have decorated the \textit{covers} of the lattices instead of the elements. We would have obtained a construction as in \cite{bergeron1999hopf} with certain restrictions on the edge-labels. \begin{crll} $\hopfpos$ is a connected Hopf algebra. \end{crll}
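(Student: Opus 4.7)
The plan is to upgrade the bialgebra $\hopfpos$ to a Hopf algebra by exhibiting a connected grading, after which the antipode is produced by the standard recursive formula for graded connected bialgebras.

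First I would equip $\hopfpos$ with a grading by declaring the degree of a generator $(P,\nu)$ to be $\nu(\hat{1})$, extended linearly. Because $\nu$ is strictly order preserving and $\nu(\hat{0})=0$, we have $\nu(\hat{1})\ge 0$ with equality iff $\hat{1}=\hat{0}$, i.e.\ iff $(P,\nu)=\mathbb{I}_{\hopfpos}$. This already gives connectedness: the degree-zero component is precisely $\mathbb{Q}\cdot \mathbb{I}_{\hopfpos}\simeq \mathbb{Q}$.

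Next I would check that multiplication and comultiplication respect this grading. For the product, the top element of $P_1\times P_2$ is $(\hat{1}_{P_1},\hat{1}_{P_2})$, and the decoration $\nu_1+\nu_2$ evaluates on it to $\nu_1(\hat{1}_{P_1})+\nu_2(\hat{1}_{P_2})$, so degrees add. For the coproduct \eqref{eqn:poset_cop}, a summand indexed by $x\in P$ is $([\hat{0},x],\nu)\otimes([x,\hat{1}],\nu-\nu(x))$, whose two factors have degrees $\nu(x)$ and $\nu(\hat{1})-\nu(x)$, summing to $\nu(\hat{1})$. Hence $\Delta_{\hopfpos}$ sends the degree-$n$ component into $\bigoplus_{i+j=n}\hopfpos^{(i)}\otimes \hopfpos^{(j)}$, as required.

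Once the grading and connectedness are in place, the antipode is obtained by the standard Milnor--Moore-style recursion: on $\mathbb{I}_{\hopfpos}$ set $S_{\hopfpos}(\mathbb{I}_{\hopfpos}):=\mathbb{I}_{\hopfpos}$, and on a homogeneous generator $x$ of positive degree write $\Delta_{\hopfpos}(x)=x\otimes \mathbb{I}_{\hopfpos}+\mathbb{I}_{\hopfpos}\otimes x+\widetilde{\Delta}_{\hopfpos}(x)$, noting that every tensor factor appearing in $\widetilde{\Delta}_{\hopfpos}(x)$ has strictly smaller degree than $x$, and then define
\begin{align*}
S_{\hopfpos}(x) := -x - m_{\hopfpos}\bigl((S_{\hopfpos}\otimes \id)\widetilde{\Delta}_{\hopfpos}(x)\bigr).
\end{align*}
Induction on degree shows this is well defined and satisfies the antipode axiom $m_{\hopfpos}\circ(S_{\hopfpos}\otimes\id)\circ\Delta_{\hopfpos}=\unit\circ\counit$; the opposite identity follows by the same recursion on the right. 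I expect no genuine obstacle here: the only point that requires care is verifying that $\nu(\hat 1)>0$ whenever $(P,\nu)\neq \mathbb{I}_{\hopfpos}$, which is exactly the strict order-preservation hypothesis built into the definition of $\mathcal{D}$. Thus $\hopfpos$ is a graded connected bialgebra, hence a connected Hopf algebra.
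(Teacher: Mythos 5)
Your proposal is correct and follows essentially the same route as the paper: grade $\hopfpos$ by $\nu(\hat{1})$, observe that strict order preservation forces the degree-zero component to be $\mathbb{Q}\cdot\mathbb{I}_{\hopfpos}$, and invoke the standard fact that a graded connected bialgebra admits a recursively defined antipode. You simply spell out the compatibility of the grading with $m_{\hopfpos}$ and $\Delta_{\hopfpos}$ and the antipode recursion explicitly, which the paper delegates to a citation.
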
 \begin{proof} $\hopfpos$ is graded by the value of $\nu(\hat{1})$. There is only one element of degree $0$ because $\nu$ must be strictly order preserving. It follows that $\hopfpos$ is a graded, connected bialgebra and therefore a Hopf algebra \cite{manchon2004}. \end{proof}

\subsection{A Hopf algebra morphism from Feynman diagrams to lattices}

\begin{thm} Let $\nu(\gamma) = h_1(\gamma)$. \label{thm:hopf_alg_morph} The map, \begin{align} &\chi_D:& &\hopffg_D& &\rightarrow& &\hopfpos, \\ && &\Gamma& &\mapsto& &( \sdsubdiags_D(\Gamma), \nu ), \end{align} which assigns to every diagram, its poset of s.d.\ subdiagrams decorated by the loop number of the subdiagram, is a Hopf algebra morphism. \end{thm}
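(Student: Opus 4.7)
The strategy is to verify the two defining conditions of a Hopf algebra morphism separately: that $\chi_D$ commutes with the product, and that it commutes with the coproduct. Since the bialgebra axioms in $\hopfpos$ already imply the antipode property once we have the Hopf algebra structure, no separate check on $S_D$ is needed.

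\emph{Step 1: $\chi_D$ is an algebra morphism.} The empty diagram has a trivial subdivergence poset consisting of $\{\emptyset\}$ with $\nu(\emptyset)=h_1(\emptyset)=0$, so $\chi_D(\mathbb{I})=\mathbb{I}_{\hopfpos}$. For the multiplicativity, observe that if $\Gamma=\Gamma_1\cup\Gamma_2$ is a disjoint union, then any $\gamma\in\sdsubdiags_D(\Gamma)$ splits uniquely as $\gamma=\gamma_1\cup\gamma_2$ with $\gamma_i=\gamma\cap\Gamma_i\in\sdsubdiags_D(\Gamma_i)$ (this follows from the defining equation \eqref{eqn:sdsubdiags} because the 1PI components of $\gamma$ each lie wholly in one $\Gamma_i$). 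The assignment $\gamma\mapsto(\gamma_1,\gamma_2)$ is an isomorphism of posets between $\sdsubdiags_D(\Gamma_1\cup\Gamma_2)$ and $\sdsubdiags_D(\Gamma_1)\times\sdsubdiags_D(\Gamma_2)$. Since loop number is additive on connected components, $h_1(\gamma)=h_1(\gamma_1)+h_1(\gamma_2)$, matching the additive decoration $\nu_1+\nu_2$ of the Cartesian product in $\hopfpos$. Hence $\chi_D\circ m = m_{\hopfpos}\circ(\chi_D\otimes\chi_D)$.

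\emph{Step 2: $\chi_D$ is a coalgebra morphism.} I need to show
\begin{align*}
\Delta_{\hopfpos}\bigl(\sdsubdiags_D(\Gamma),\nu\bigr) \;=\; \sum_{\gamma\in\sdsubdiags_D(\Gamma)} \chi_D(\gamma)\otimes\chi_D(\Gamma/\gamma).
\end{align*}
By definition \eqref{eqn:poset_cop}, the left-hand side is $\sum_\gamma([\emptyset,\gamma],\nu)\otimes([\gamma,\Gamma],\nu-\nu(\gamma))$, so it suffices to establish, for each $\gamma\in\sdsubdiags_D(\Gamma)$, two isomorphisms of decorated posets:
\begin{align*}
([\emptyset,\gamma],\nu) &\;\cong\; (\sdsubdiags_D(\gamma),\nu), \\
([\gamma,\Gamma],\nu-\nu(\gamma)) &\;\cong\; (\sdsubdiags_D(\Gamma/\gamma),\nu).
\end{align*}
The first identification is immediate: an element $\mu\subset\Gamma$ with $\emptyset\subset\mu\subset\gamma$ that lies in $\sdsubdiags_D(\Gamma)$ is precisely a subdiagram of $\gamma$ whose 1PI components are s.d., i.e.\ an element of $\sdsubdiags_D(\gamma)$; the inclusion order and the decoration $\mu\mapsto h_1(\mu)$ agree on the nose.

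\emph{Step 3: the interval $[\gamma,\Gamma]$.} This is the main obstacle. I set up the map $\mu\mapsto\mu/\gamma$ from $[\gamma,\Gamma]$ to $\sdsubdiags_D(\Gamma/\gamma)$. I must check four things: (i) $\mu/\gamma$ is well defined and lies in $\sdsubdiags_D(\Gamma/\gamma)$; (ii) the map is a bijection, with inverse obtained by pulling back a subdiagram $\widetilde\mu\subset\Gamma/\gamma$ to the unique $\mu\subset\Gamma$ containing $\gamma$ with $\mu/\gamma=\widetilde\mu$ (the extra vertex produced by contracting $\gamma$ is identified with $\gamma$ itself when pulling back); (iii) inclusion is preserved in both directions, since contracting $\gamma$ only relabels the half-edges and vertices outside $\gamma$; (iv) the decorations match, i.e.\ $h_1(\mu/\gamma)=h_1(\mu)-h_1(\gamma)$. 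The decoration identity is the key computational input and follows from the Euler characteristic: using $h_1=|E|-|V|+h_0$, a direct bookkeeping with the contraction formula \eqref{eqn:contraction} shows that the edge count drops by $|E(\gamma)|$, the vertex count changes by $-|V(\gamma)|+h_0(\gamma)$ (one new vertex per connected component of $\gamma$), while the number of connected components of $\mu/\gamma$ relative to $\mu$ differs by $h_0(\gamma)-h_0(\mu\cap\gamma\text{-structure})$; the algebra collapses to $h_1(\mu)-h_1(\gamma)$ because each 1PI component of $\gamma$ contracts to a single vertex.

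Assembling these identifications term by term gives $\Delta_{\hopfpos}\circ\chi_D=(\chi_D\otimes\chi_D)\circ\Delta_D$, completing the verification that $\chi_D$ is a Hopf algebra morphism. The anticipated difficulty is purely at Step 3: the bijectivity of $\mu\leftrightarrow\mu/\gamma$ on $[\gamma,\Gamma]$ and the loop-number bookkeeping under contraction; everything else is a direct unpacking of the definitions of $\sdsubdiags_D$, $\Delta_D$, and $\Delta_{\hopfpos}$.
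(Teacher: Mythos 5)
Your proposal is correct and follows essentially the same route as the paper: multiplicativity via the identification $\sdsubdiags_D(\Gamma_1\Gamma_2)\cong\sdsubdiags_D(\Gamma_1)\times\sdsubdiags_D(\Gamma_2)$ with additive loop-number decoration, and compatibility with the coproduct via the two interval identifications $[\hat 0,\gamma]\cong\sdsubdiags_D(\gamma)$ and $[\gamma,\Gamma]\cong\sdsubdiags_D(\Gamma/\gamma)$. The only difference is that you spell out the bijectivity and the Euler-characteristic bookkeeping $h_1(\mu/\gamma)=h_1(\mu)-h_1(\gamma)$ for the second interval, which the paper compresses into the remark that it is a direct consequence of the definition of contractions.
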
 \begin{proof} First, it needs to be shown that $\chi_D$ is an algebra morphism: $\chi_D \circ m_{\hopffg_D} = m_{\hopfpos} \circ ( \chi_D \otimes \chi_D )$. It is sufficient to prove this for the product of two generators $\Gamma_1, \Gamma_2 \in \hopffg_D$. Subdiagrams of the product $m( \Gamma_1 \otimes \Gamma_2 ) = \Gamma_1 \Gamma_2$, which is defined as the disjoint union of the two diagrams, can be represented as pairs $(\gamma_1, \gamma_2)$ where $(\gamma_1, \gamma_2) \subset \Gamma_1 \Gamma_2$ if $\gamma_1 \subset \Gamma_1$ and $\gamma_2 \subset \Gamma_2$. This corresponds to the Cartesian product regarding the poset structure of the subdivergences. The loop number of such a pair is the sum of the loop numbers of the components. On those grounds: \begin{align} \chi_D( \Gamma_1 \Gamma_2 ) = & ( \sdsubdiags_D( \Gamma_1 \Gamma_2 ), \nu ) = ( \sdsubdiags_D( \Gamma_1 ) \times \sdsubdiags_D( \Gamma_2 ), \nu_1 + \nu_2 ) \\ =&m_{\hopfpos} ( \chi_D(\Gamma_1) \otimes \chi_D(\Gamma_2) ). \end{align} To prove that $\chi_D$ is a coalgebra morphism, we need to verify that,  \begin{align} \label{eqn:comorphism} (\chi_D \otimes \chi_D) \circ \Delta_{\hopffg_D} = \Delta_{\hopfpos} \circ \chi_D. \end{align} Choosing some generator $\Gamma$ of $\hopffg_D$ and using the definition of $\Delta_D$: \begin{align} (\chi_D \otimes \chi_D) \circ \Delta_{\hopffg_D} \Gamma = \sum \limits_{ \gamma \in \sdsubdiags_D(\Gamma) } \chi_D( \gamma) \otimes \chi_D( \Gamma / \gamma ), \end{align} the statement follows from $\chi_D(\gamma) = ( [ \hat{0}, \gamma ], \nu(\gamma) )$ and  $\chi_D( \Gamma / \gamma ) = ( [ \emptyset, \Gamma / \gamma ], \nu ) \simeq ( [ \gamma, \Gamma ], \nu - \nu(\gamma) ) $, which is a direct consequence of the definition of contractions in  eq. \eqref{eqn:contraction}. \end{proof}

\begin{crll} \label{crll:joinmeethopflat} In a join-meet-renormalizable QFT, $\im(\chi_D) \subset \hopflat \subset \hopfpos$, where $\hopflat$ is the subspace of $\hopfpos$ which is generated by all elements $(L, \nu)$, where $L$ is a lattice. In other words: In a join-meet-renormalizable QFT, $\chi_D$ maps s.d.\ diagrams and products of them to decorated lattices.  \end{crll}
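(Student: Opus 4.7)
The plan is to use that $\chi_D$ is an algebra morphism (Theorem~\ref{thm:hopf_alg_morph}) to reduce the claim to checking it on a set of generators of $\hopffg_D$, and then to invoke the definition of join-meet-renormalizability together with the fact that the Cartesian product of lattices is a lattice.

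First I would observe that since $\hopfpos$ is generated by tuples $(P,\nu)$ and $\hopflat$ is the subspace spanned by those tuples in which $P$ is a lattice, it suffices to show that for every generator $\Gamma$ of $\hopffg_D$, the image $\chi_D(\Gamma)=(\sdsubdiags_D(\Gamma),\nu)$ has $\sdsubdiags_D(\Gamma)$ a lattice. Recall that generators of $\hopffg_D$ are disjoint unions $\Gamma=\Gamma_1\cup\cdots\cup\Gamma_k$ of 1PI superficially divergent diagrams.

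For the case $k=1$ (a single 1PI s.d.\ diagram $\Gamma$), the statement that $\sdsubdiags_D(\Gamma)$ is a lattice is exactly the defining property of a join-meet-renormalizable QFT, so there is nothing to prove. For a general $k$, I would use the algebra morphism property proved in Theorem~\ref{thm:hopf_alg_morph}: applied iteratively,
\begin{align*}
\chi_D(\Gamma_1\cup\cdots\cup\Gamma_k)=m_{\hopfpos}^{k-1}\bigl(\chi_D(\Gamma_1)\otimes\cdots\otimes\chi_D(\Gamma_k)\bigr)=\Bigl(\prod_{i=1}^{k}\sdsubdiags_D(\Gamma_i),\,\textstyle\sum_i\nu_i\Bigr),
\end{align*}
where the product of posets is the Cartesian product. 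Each factor $\sdsubdiags_D(\Gamma_i)$ is a lattice by the previous step, and the Cartesian product of finitely many lattices is again a lattice (with componentwise join and meet), as recorded just before the definition of $\hopfpos$. Hence $\chi_D(\Gamma)\in\hopflat$.

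Since this holds for every generator and $\chi_D$ is linear, $\im(\chi_D)\subset\hopflat$, which is the claim. There is no substantial obstacle: the argument is essentially bookkeeping, with all the genuine content already packaged into the definition of join-meet-renormalizability, the compatibility of $m_{\hopfpos}$ with Cartesian products, and the elementary fact that products of lattices are lattices.
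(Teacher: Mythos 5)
Your proposal is correct and fills in exactly the argument the paper leaves implicit (the corollary is stated without proof as an immediate consequence): single 1PI s.d.\ diagrams are handled by the very definition of join-meet-renormalizability, and disjoint unions by the algebra-morphism property of $\chi_D$ together with the fact that Cartesian products of lattices are lattices. No gaps.
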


\begin{expl} For any primitive diagram $\Gamma \in \text{Prim} \hopffg_D$, \ifdefined\nodraft \begin{align} \chi_D( \Gamma ) = ( \sdsubdiags_D(\Gamma), \nu ) = {\def \scale {4ex} \begin{tikzpicture}[x=\scale,y=\scale,baseline={([yshift=-.5ex]current bounding box.center)}] \begin{scope}[node distance=1] \coordinate (top) ; \coordinate [below= of top] (bottom); \draw (top) -- (bottom); \filldraw[fill=white, draw=black,circle] (top) node[fill,circle,draw,inner sep=1pt]{$L$}; \filldraw[fill=white, draw=black,circle] (bottom) node[fill,circle,draw,inner sep=1pt]{$0$}; \end{scope} \end{tikzpicture} }, \end{align} where the vertices in the Hasse diagram are decorated by the value of $\nu$ and  $L = h_1(\Gamma)$ is the loop number of the primitive diagram. 

The coproduct of $\chi_D(\Gamma)$ in $\hopfpos$ can be calculated using eq. \eqref{eqn:poset_cop}: \begin{align} \label{eqn:explcopposet} \Delta_{\hopfpos} {\def \scale {4ex} \begin{tikzpicture}[x=\scale,y=\scale,baseline={([yshift=-.5ex]current bounding box.center)}] \begin{scope}[node distance=1] \coordinate (top) ; \coordinate [below= of top] (bottom); \draw (top) -- (bottom); \filldraw[fill=white, draw=black,circle] (top) node[fill,circle,draw,inner sep=1pt]{$L$}; \filldraw[fill=white, draw=black,circle] (bottom) node[fill,circle,draw,inner sep=1pt]{$0$}; \end{scope} \end{tikzpicture} } = {\def \scale {4ex} \begin{tikzpicture}[x=\scale,y=\scale,baseline={([yshift=-.5ex]current bounding box.center)}] \begin{scope}[node distance=1] \coordinate (top) ; \coordinate [below= of top] (bottom); \draw (top) -- (bottom); \filldraw[fill=white, draw=black,circle] (top) node[fill,circle,draw,inner sep=1pt]{$L$}; \filldraw[fill=white, draw=black,circle] (bottom) node[fill,circle,draw,inner sep=1pt]{$0$}; \end{scope} \end{tikzpicture} } \otimes \mathbb{I} + \mathbb{I} \otimes {\def \scale {4ex} \begin{tikzpicture}[x=\scale,y=\scale,baseline={([yshift=-.5ex]current bounding box.center)}] \begin{scope}[node distance=1] \coordinate (top) ; \coordinate [below= of top] (bottom); \draw (top) -- (bottom); \filldraw[fill=white, draw=black,circle] (top) node[fill,circle,draw,inner sep=1pt]{$L$}; \filldraw[fill=white, draw=black,circle] (bottom) node[fill,circle,draw,inner sep=1pt]{$0$}; \end{scope} \end{tikzpicture} }. \end{align} As expected, these decorated posets are also primitive in $\hopfpos$. \else

MISSING IN DRAFT MODE

\fi \end{expl} \begin{expl} For the diagram \ifdefined\nodraft $ {\def \scale {1.5ex} \begin{tikzpicture}[x=\scale,y=\scale,baseline={([yshift=-.5ex]current bounding box.center)}] \begin{scope}[node distance=1] \coordinate (v0); \coordinate[right=.5 of v0] (v4); \coordinate[above right= of v4] (v2); \coordinate[below right= of v4] (v3); \coordinate[below right= of v2] (v5); \coordinate[right=.5 of v5] (v1); \coordinate[above right= of v2] (o1); \coordinate[below right= of v2] (o2); \coordinate[below left=.5 of v0] (i1); \coordinate[above left=.5 of v0] (i2); \coordinate[below right=.5 of v1] (o1); \coordinate[above right=.5 of v1] (o2); \draw (v0) -- (i1); \draw (v0) -- (i2); \draw (v1) -- (o1); \draw (v1) -- (o2); \draw (v0) to[bend left=20] (v2); \draw (v0) to[bend right=20] (v3); \draw (v1) to[bend left=20] (v3); \draw (v1) to[bend right=20] (v2); \draw (v2) to[bend right=60] (v3); \draw (v2) to[bend left=60] (v3); \filldraw (v0) circle(1pt); \filldraw (v1) circle(1pt); \filldraw (v2) circle(1pt); \filldraw (v3) circle(1pt); \ifdefined\cvl \draw[line width=1.5pt] (v0) to[bend left=20] (v2); \draw[line width=1.5pt] (v0) to[bend right=20] (v3); \fi \ifdefined\cvr \draw[line width=1.5pt] (v1) to[bend left=20] (v3); \draw[line width=1.5pt] (v1) to[bend right=20] (v2); \fi \ifdefined\cvml \draw[line width=1.5pt] (v2) to[bend left=60] (v3); \fi \ifdefined\cvmr \draw[line width=1.5pt] (v2) to[bend right=60] (v3); \fi \end{scope} \end{tikzpicture} } \in \hopffg_4$, $\chi_D$ gives the decorated poset, \begin{align} \chi_D\left( {\def \scale {4ex} \begin{tikzpicture}[x=\scale,y=\scale,baseline={([yshift=-.5ex]current bounding box.center)}] \begin{scope}[node distance=1] \coordinate (v0); \coordinate[right=.5 of v0] (v4); \coordinate[above right= of v4] (v2); \coordinate[below right= of v4] (v3); \coordinate[below right= of v2] (v5); \coordinate[right=.5 of v5] (v1); \coordinate[above right= of v2] (o1); \coordinate[below right= of v2] (o2); \coordinate[below left=.5 of v0] (i1); \coordinate[above left=.5 of v0] (i2); \coordinate[below right=.5 of v1] (o1); \coordinate[above right=.5 of v1] (o2); \draw (v0) -- (i1); \draw (v0) -- (i2); \draw (v1) -- (o1); \draw (v1) -- (o2); \draw (v0) to[bend left=20] (v2); \draw (v0) to[bend right=20] (v3); \draw (v1) to[bend left=20] (v3); \draw (v1) to[bend right=20] (v2); \draw (v2) to[bend right=60] (v3); \draw (v2) to[bend left=60] (v3); \filldraw (v0) circle(1pt); \filldraw (v1) circle(1pt); \filldraw (v2) circle(1pt); \filldraw (v3) circle(1pt); \ifdefined\cvl \draw[line width=1.5pt] (v0) to[bend left=20] (v2); \draw[line width=1.5pt] (v0) to[bend right=20] (v3); \fi \ifdefined\cvr \draw[line width=1.5pt] (v1) to[bend left=20] (v3); \draw[line width=1.5pt] (v1) to[bend right=20] (v2); \fi \ifdefined\cvml \draw[line width=1.5pt] (v2) to[bend left=60] (v3); \fi \ifdefined\cvmr \draw[line width=1.5pt] (v2) to[bend right=60] (v3); \fi \end{scope} \end{tikzpicture} } \right) = {\def \scale {4ex} \begin{tikzpicture}[x=\scale,y=\scale,baseline={([yshift=-.5ex]current bounding box.center)}] \begin{scope}[node distance=1] \coordinate (top) ; \coordinate [below left= of top] (v1); \coordinate [below right= of top] (v2); \coordinate [below left= of v2] (v3); \coordinate [below= of v3] (bottom); \draw (top) -- (v1); \draw (top) -- (v2); \draw (v1) -- (v3); \draw (v2) -- (v3); \draw (v3) -- (bottom); \filldraw[fill=white, draw=black,circle] (top) node[fill,circle,draw,inner sep=1pt]{$3$}; \filldraw[fill=white, draw=black,circle] (v1) node[fill,circle,draw,inner sep=1pt]{$2$}; \filldraw[fill=white, draw=black,circle] (v2) node[fill,circle,draw,inner sep=1pt]{$2$}; \filldraw[fill=white, draw=black,circle] (v3) node[fill,circle,draw,inner sep=1pt]{$1$}; \filldraw[fill=white, draw=black,circle] (bottom) node[fill,circle,draw,inner sep=1pt]{$0$}; \end{scope} \end{tikzpicture} }, \end{align} \else

MISSING IN DRAFT MODE

\fi \ifdefined\nodraft of which the reduced coproduct in $\hopfpos$ can be calculated, \begin{align} \label{eqn:explcopposet2} \widetilde{\Delta}_{\hopfpos} {\def \scale {4ex} \begin{tikzpicture}[x=\scale,y=\scale,baseline={([yshift=-.5ex]current bounding box.center)}] \begin{scope}[node distance=1] \coordinate (top) ; \coordinate [below left= of top] (v1); \coordinate [below right= of top] (v2); \coordinate [below left= of v2] (v3); \coordinate [below= of v3] (bottom); \draw (top) -- (v1); \draw (top) -- (v2); \draw (v1) -- (v3); \draw (v2) -- (v3); \draw (v3) -- (bottom); \filldraw[fill=white, draw=black,circle] (top) node[fill,circle,draw,inner sep=1pt]{$3$}; \filldraw[fill=white, draw=black,circle] (v1) node[fill,circle,draw,inner sep=1pt]{$2$}; \filldraw[fill=white, draw=black,circle] (v2) node[fill,circle,draw,inner sep=1pt]{$2$}; \filldraw[fill=white, draw=black,circle] (v3) node[fill,circle,draw,inner sep=1pt]{$1$}; \filldraw[fill=white, draw=black,circle] (bottom) node[fill,circle,draw,inner sep=1pt]{$0$}; \end{scope} \end{tikzpicture} } = 2~ {\def \scale {4ex} \begin{tikzpicture}[x=\scale,y=\scale,baseline={([yshift=-.5ex]current bounding box.center)}] \begin{scope}[node distance=1] \coordinate (top) ; \coordinate [below= of top] (v1); \coordinate [below= of v1] (bottom); \draw (top) -- (v1); \draw (v1) -- (bottom); \filldraw[fill=white, draw=black,circle] (top) node[fill,circle,draw,inner sep=1pt]{$2$}; \filldraw[fill=white, draw=black,circle] (v1) node[fill,circle,draw,inner sep=1pt]{$1$}; \filldraw[fill=white, draw=black,circle] (bottom) node[fill,circle,draw,inner sep=1pt]{$0$}; \end{scope} \end{tikzpicture} } \otimes { \def \scale {4ex} \begin{tikzpicture}[x=\scale,y=\scale,baseline={([yshift=-.5ex]current bounding box.center)}] \begin{scope}[node distance=1] \coordinate (top) ; \coordinate [below= of top] (bottom); \draw (top) -- (bottom); \filldraw[fill=white, draw=black,circle] (top) node[fill,circle,draw,inner sep=1pt]{$1$}; \filldraw[fill=white, draw=black,circle] (bottom) node[fill,circle,draw,inner sep=1pt]{$0$}; \end{scope} \end{tikzpicture} } + { \def \scale {4ex} \begin{tikzpicture}[x=\scale,y=\scale,baseline={([yshift=-.5ex]current bounding box.center)}] \begin{scope}[node distance=1] \coordinate (top) ; \coordinate [below= of top] (bottom); \draw (top) -- (bottom); \filldraw[fill=white, draw=black,circle] (top) node[fill,circle,draw,inner sep=1pt]{$1$}; \filldraw[fill=white, draw=black,circle] (bottom) node[fill,circle,draw,inner sep=1pt]{$0$}; \end{scope} \end{tikzpicture} } \otimes { \def \scale {4ex} \begin{tikzpicture}[x=\scale,y=\scale,baseline={([yshift=-.5ex]current bounding box.center)}] \begin{scope}[node distance=1] \coordinate (top) ; \coordinate [below left= of top] (v1); \coordinate [below right= of top] (v2); \coordinate [below left= of v2] (bottom); \draw (top) -- (v1); \draw (top) -- (v2); \draw (v1) -- (bottom); \draw (v2) -- (bottom); \filldraw[fill=white, draw=black,circle] (top) node[fill,circle,draw,inner sep=1pt]{$2$}; \filldraw[fill=white, draw=black,circle] (v1) node[fill,circle,draw,inner sep=1pt]{$1$}; \filldraw[fill=white, draw=black,circle] (v2) node[fill,circle,draw,inner sep=1pt]{$1$}; \filldraw[fill=white, draw=black,circle] (bottom) node[fill,circle,draw,inner sep=1pt]{$0$}; \end{scope} \end{tikzpicture} }. \end{align} This can be compared to the coproduct calculation in example \ref{expl:coproducthopffg},  \begin{align} \widetilde \Delta_4 { \def \scale {2ex} \begin{tikzpicture}[x=\scale,y=\scale,baseline={([yshift=-.5ex]current bounding box.center)}] \begin{scope}[node distance=1] \coordinate (v0); \coordinate[right=.5 of v0] (v4); \coordinate[above right= of v4] (v2); \coordinate[below right= of v4] (v3); \coordinate[below right= of v2] (v5); \coordinate[right=.5 of v5] (v1); \coordinate[above right= of v2] (o1); \coordinate[below right= of v2] (o2); \coordinate[below left=.5 of v0] (i1); \coordinate[above left=.5 of v0] (i2); \coordinate[below right=.5 of v1] (o1); \coordinate[above right=.5 of v1] (o2); \draw (v0) -- (i1); \draw (v0) -- (i2); \draw (v1) -- (o1); \draw (v1) -- (o2); \draw (v0) to[bend left=20] (v2); \draw (v0) to[bend right=20] (v3); \draw (v1) to[bend left=20] (v3); \draw (v1) to[bend right=20] (v2); \draw (v2) to[bend right=60] (v3); \draw (v2) to[bend left=60] (v3); \filldraw (v0) circle(1pt); \filldraw (v1) circle(1pt); \filldraw (v2) circle(1pt); \filldraw (v3) circle(1pt); \ifdefined\cvl \draw[line width=1.5pt] (v0) to[bend left=20] (v2); \draw[line width=1.5pt] (v0) to[bend right=20] (v3); \fi \ifdefined\cvr \draw[line width=1.5pt] (v1) to[bend left=20] (v3); \draw[line width=1.5pt] (v1) to[bend right=20] (v2); \fi \ifdefined\cvml \draw[line width=1.5pt] (v2) to[bend left=60] (v3); \fi \ifdefined\cvmr \draw[line width=1.5pt] (v2) to[bend right=60] (v3); \fi \end{scope} \end{tikzpicture} } &= 2 { \def \scale {2ex} \begin{tikzpicture}[x=\scale,y=\scale,baseline={([yshift=-.5ex]current bounding box.center)}] \begin{scope}[node distance=1] \coordinate (v0); \coordinate[right=.5 of v0] (v4); \coordinate[above right= of v4] (v2); \coordinate[below right= of v4] (v3); \coordinate[above right=.5 of v2] (o1); \coordinate[below right=.5 of v3] (o2); \coordinate[below left=.5 of v0] (i1); \coordinate[above left=.5 of v0] (i2); \draw (v0) -- (i1); \draw (v0) -- (i2); \draw (v2) -- (o1); \draw (v3) -- (o2); \draw (v0) to[bend left=20] (v2); \draw (v0) to[bend right=20] (v3); \draw (v2) to[bend right=60] (v3); \draw (v2) to[bend left=60] (v3); \filldraw (v0) circle(1pt); \filldraw (v2) circle(1pt); \filldraw (v3) circle(1pt); \end{scope} \end{tikzpicture} } \otimes { \def \scale {2ex} \begin{tikzpicture}[x=\scale,y=\scale,baseline={([yshift=-.5ex]current bounding box.center)}] \begin{scope}[node distance=1] \coordinate (v0); \coordinate [right=.5 of v0] (vm); \coordinate [right=.5 of vm] (v1); \coordinate [above left=.5 of v0] (i0); \coordinate [below left=.5 of v0] (i1); \coordinate [above right=.5 of v1] (o0); \coordinate [below right=.5 of v1] (o1); \draw (vm) circle(.5); \draw (i0) -- (v0); \draw (i1) -- (v0); \draw (o0) -- (v1); \draw (o1) -- (v1); \filldraw (v0) circle(1pt); \filldraw (v1) circle(1pt); \end{scope} \end{tikzpicture} } + { \def \scale {2ex} \begin{tikzpicture}[x=\scale,y=\scale,baseline={([yshift=-.5ex]current bounding box.center)}] \begin{scope}[node distance=1] \coordinate (v0); \coordinate [below=1 of v0] (v1); \coordinate [above left=.5 of v0] (i0); \coordinate [above right=.5 of v0] (i1); \coordinate [below left=.5 of v1] (o0); \coordinate [below right=.5 of v1] (o1); \coordinate [above=.5 of v1] (vm); \draw (vm) circle(.5); \draw (i0) -- (v0); \draw (i1) -- (v0); \draw (o0) -- (v1); \draw (o1) -- (v1); \filldraw (v0) circle(1pt); \filldraw (v1) circle(1pt); \end{scope} \end{tikzpicture} } \otimes { \def \scale {2ex} \begin{tikzpicture}[x=\scale,y=\scale,baseline={([yshift=-.5ex]current bounding box.center)}] \begin{scope}[node distance=1] \coordinate (v0); \coordinate [right=.5 of v0] (vm1); \coordinate [right=.5 of vm1] (v1); \coordinate [right=.5 of v1] (vm2); \coordinate [right=.5 of vm2] (v2); \coordinate [above left=.5 of v0] (i0); \coordinate [below left=.5 of v0] (i1); \coordinate [above right=.5 of v2] (o0); \coordinate [below right=.5 of v2] (o1); \draw (vm1) circle(.5); \draw (vm2) circle(.5); \draw (i0) -- (v0); \draw (i1) -- (v0); \draw (o0) -- (v2); \draw (o1) -- (v2); \filldraw (v0) circle(1pt); \filldraw (v1) circle(1pt); \filldraw (v2) circle(1pt); \end{scope} \end{tikzpicture} } \end{align} and identity eq. \eqref{eqn:comorphism} is verified after computing the decorated poset of each subdiagram of  $ {\def \scale {1.5ex} \begin{tikzpicture}[x=\scale,y=\scale,baseline={([yshift=-.5ex]current bounding box.center)}] \begin{scope}[node distance=1] \coordinate (v0); \coordinate[right=.5 of v0] (v4); \coordinate[above right= of v4] (v2); \coordinate[below right= of v4] (v3); \coordinate[below right= of v2] (v5); \coordinate[right=.5 of v5] (v1); \coordinate[above right= of v2] (o1); \coordinate[below right= of v2] (o2); \coordinate[below left=.5 of v0] (i1); \coordinate[above left=.5 of v0] (i2); \coordinate[below right=.5 of v1] (o1); \coordinate[above right=.5 of v1] (o2); \draw (v0) -- (i1); \draw (v0) -- (i2); \draw (v1) -- (o1); \draw (v1) -- (o2); \draw (v0) to[bend left=20] (v2); \draw (v0) to[bend right=20] (v3); \draw (v1) to[bend left=20] (v3); \draw (v1) to[bend right=20] (v2); \draw (v2) to[bend right=60] (v3); \draw (v2) to[bend left=60] (v3); \filldraw (v0) circle(1pt); \filldraw (v1) circle(1pt); \filldraw (v2) circle(1pt); \filldraw (v3) circle(1pt); \ifdefined\cvl \draw[line width=1.5pt] (v0) to[bend left=20] (v2); \draw[line width=1.5pt] (v0) to[bend right=20] (v3); \fi \ifdefined\cvr \draw[line width=1.5pt] (v1) to[bend left=20] (v3); \draw[line width=1.5pt] (v1) to[bend right=20] (v2); \fi \ifdefined\cvml \draw[line width=1.5pt] (v2) to[bend left=60] (v3); \fi \ifdefined\cvmr \draw[line width=1.5pt] (v2) to[bend right=60] (v3); \fi \end{scope} \end{tikzpicture} }$ and comparing the previous two equations: \begin{align} &\chi_4 \left( {\def \scale {4ex} \begin{tikzpicture}[x=\scale,y=\scale,baseline={([yshift=-.5ex]current bounding box.center)}] \begin{scope}[node distance=1] \coordinate (v0); \coordinate[right=.5 of v0] (v4); \coordinate[above right= of v4] (v2); \coordinate[below right= of v4] (v3); \coordinate[above right=.5 of v2] (o1); \coordinate[below right=.5 of v3] (o2); \coordinate[below left=.5 of v0] (i1); \coordinate[above left=.5 of v0] (i2); \draw (v0) -- (i1); \draw (v0) -- (i2); \draw (v2) -- (o1); \draw (v3) -- (o2); \draw (v0) to[bend left=20] (v2); \draw (v0) to[bend right=20] (v3); \draw (v2) to[bend right=60] (v3); \draw (v2) to[bend left=60] (v3); \filldraw (v0) circle(1pt); \filldraw (v2) circle(1pt); \filldraw (v3) circle(1pt); \end{scope} \end{tikzpicture} } \right) = {\def \scale {4ex} \begin{tikzpicture}[x=\scale,y=\scale,baseline={([yshift=-.5ex]current bounding box.center)}] \begin{scope}[node distance=1] \coordinate (top) ; \coordinate [below= of top] (v1); \coordinate [below= of v1] (bottom); \draw (top) -- (v1); \draw (v1) -- (bottom); \filldraw[fill=white, draw=black,circle] (top) node[fill,circle,draw,inner sep=1pt]{$2$}; \filldraw[fill=white, draw=black,circle] (v1) node[fill,circle,draw,inner sep=1pt]{$1$}; \filldraw[fill=white, draw=black,circle] (bottom) node[fill,circle,draw,inner sep=1pt]{$0$}; \end{scope} \end{tikzpicture} }& &\chi_4 \left( { \def \scale {4ex} \begin{tikzpicture}[x=\scale,y=\scale,baseline={([yshift=-.5ex]current bounding box.center)}] \begin{scope}[node distance=1] \coordinate (v0); \coordinate [right=.5 of v0] (vm); \coordinate [right=.5 of vm] (v1); \coordinate [above left=.5 of v0] (i0); \coordinate [below left=.5 of v0] (i1); \coordinate [above right=.5 of v1] (o0); \coordinate [below right=.5 of v1] (o1); \draw (vm) circle(.5); \draw (i0) -- (v0); \draw (i1) -- (v0); \draw (o0) -- (v1); \draw (o1) -- (v1); \filldraw (v0) circle(1pt); \filldraw (v1) circle(1pt); \end{scope} \end{tikzpicture} } \right) = { \def \scale {4ex} \begin{tikzpicture}[x=\scale,y=\scale,baseline={([yshift=-.5ex]current bounding box.center)}] \begin{scope}[node distance=1] \coordinate (top) ; \coordinate [below= of top] (bottom); \draw (top) -- (bottom); \filldraw[fill=white, draw=black,circle] (top) node[fill,circle,draw,inner sep=1pt]{$1$}; \filldraw[fill=white, draw=black,circle] (bottom) node[fill,circle,draw,inner sep=1pt]{$0$}; \end{scope} \end{tikzpicture} } & &\chi_4 \left( {\def \scale {4ex} \begin{tikzpicture}[x=\scale,y=\scale,baseline={([yshift=-.5ex]current bounding box.center)}] \begin{scope}[node distance=1] \coordinate (v0); \coordinate [right=.5 of v0] (vm1); \coordinate [right=.5 of vm1] (v1); \coordinate [right=.5 of v1] (vm2); \coordinate [right=.5 of vm2] (v2); \coordinate [above left=.5 of v0] (i0); \coordinate [below left=.5 of v0] (i1); \coordinate [above right=.5 of v2] (o0); \coordinate [below right=.5 of v2] (o1); \draw (vm1) circle(.5); \draw (vm2) circle(.5); \draw (i0) -- (v0); \draw (i1) -- (v0); \draw (o0) -- (v2); \draw (o1) -- (v2); \filldraw (v0) circle(1pt); \filldraw (v1) circle(1pt); \filldraw (v2) circle(1pt); \end{scope} \end{tikzpicture} } \right) = {\def \scale {4ex} \begin{tikzpicture}[x=\scale,y=\scale,baseline={([yshift=-.5ex]current bounding box.center)}] \begin{scope}[node distance=1] \coordinate (top) ; \coordinate [below left= of top] (v1); \coordinate [below right= of top] (v2); \coordinate [below left= of v2] (bottom); \draw (top) -- (v1); \draw (top) -- (v2); \draw (v1) -- (bottom); \draw (v2) -- (bottom); \filldraw[fill=white, draw=black,circle] (top) node[fill,circle,draw,inner sep=1pt]{$2$}; \filldraw[fill=white, draw=black,circle] (v1) node[fill,circle,draw,inner sep=1pt]{$1$}; \filldraw[fill=white, draw=black,circle] (v2) node[fill,circle,draw,inner sep=1pt]{$1$}; \filldraw[fill=white, draw=black,circle] (bottom) node[fill,circle,draw,inner sep=1pt]{$0$}; \end{scope} \end{tikzpicture} }. \end{align} \else

MISSING IN DRAFT MODE

\fi \end{expl} \section{Properties of the lattices of subdivergences} \label{sec:properties}

Although, the above Hopf algebra morphism can be applied in every renormalizable QFT, we shall restrict ourselves to join-meet-renormalizable QFTs, where $\chi_D$ maps to $\hopflat$, the Hopf algebra of decorated lattices, as a result of corollary \ref{crll:joinmeethopflat}. 

The decorated lattice, which is associated to a Feynman diagram, encodes the `overlappingness' of the diagrams' subdivergences. Different join-meet-renormalizable QFTs have quite distinguished properties in this respect. Interestingly, the types of the decorated lattices appearing depend on the residues or equivalently on the superficial degree of divergence of the diagrams under consideration. For instance, it was proven by Berghoff in the context of Wonderful models that every diagram with only logarithmically divergent subdivergences (i.e. $\forall \gamma \in \sdsubdiags_D(\Gamma): \omega_D(\gamma) = 0$) is \textit{distributive}: \begin{prop}{\cite[Prop. 3.22]{berghoff2014wonderful}} If $\Gamma$ has only logarithmically s.d.\ subdiagrams in $D$ dimensions, (i.e. $\forall \gamma \in \sdsubdiags_D(\Gamma) \Rightarrow \omega_D(\gamma) = 0$), then the distributivity identities, \begin{align} \gamma_1 \meet ( \gamma_2 \join \gamma_3) &= ( \gamma_1 \meet \gamma_2 ) \join ( \gamma_1 \meet \gamma_3 ) \\ \gamma_1 \join ( \gamma_2 \meet \gamma_3) &= ( \gamma_1 \join \gamma_2 ) \meet ( \gamma_1 \join \gamma_3 ), \end{align} hold for $\gamma_1, \gamma_2, \gamma_3 \in \sdsubdiags_D(\Gamma)$. \end{prop}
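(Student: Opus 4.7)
The plan is to realize $\sdsubdiags_D(\Gamma)$ as a sublattice of the Boolean lattice of all sub-hyper-graphs of $\Gamma$ (viewed as subsets of $E(\Gamma) \cup V(\Gamma)$), with join and meet given by set-theoretic union and intersection. Distributivity then transfers at once from the distributivity of $\cap$ and $\cup$ on subsets. The task therefore reduces to showing that, under the log-divergence hypothesis, $\sdsubdiags_D(\Gamma)$ is closed under both $\cup$ and $\cap$, so that the abstract meet and join established by the earlier theorem coincide with the set operations.

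For closure under unions, consider overlapping $\gamma_1, \gamma_2 \in \sdsubdiags_D(\Gamma)$ (disjoint and comparable pairs are trivial). Proposition \ref{prop:union_4ext} yields $|\Hext(\mu_i)| \geq 4$ for every connected component $\mu_i$ of $\gamma_1 \cap \gamma_2$; since in a renormalizable theory $\omega_D$ depends only on $\res$, this gives $\omega_D(\mu_i) \geq 0$. Inclusion-exclusion applied to \eqref{eqn:omega_D}, together with the hypothesis $\omega_D(\gamma_1) = \omega_D(\gamma_2) = 0$, then gives
\begin{align*}
\omega_D(\gamma_1 \cup \gamma_2) = -\sum_i \omega_D(\mu_i) \leq 0.
\end{align*}
A standard 1PI argument shows $\gamma_1 \cup \gamma_2$ is 1PI: any bridge $e$ would lie in some $\gamma_j$, and the 1PI-ness of that $\gamma_j$ supplies a path avoiding $e$. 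Hence $\gamma_1 \cup \gamma_2 \in \sdsubdiags_D(\Gamma)$, so by hypothesis $\omega_D(\gamma_1 \cup \gamma_2) = 0$; combined with the non-negativity above this pins each $\omega_D(\mu_i) = 0$.

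The main obstacle is promoting each $\mu_i$ to an actual element of $\sdsubdiags_D(\Gamma)$, i.e.\ verifying 1PI-ness of the intersection components, because an edge lying outside $\mu_i$ but inside $\gamma_1$ or $\gamma_2$ could a priori hide a bridge of $\mu_i$. My plan is to argue by contradiction using power counting: a bridge $e$ of $\mu_i$ would split $\mu_i$ into vertex-disjoint pieces $\mu_i', \mu_i''$ with $\omega_D(\mu_i) = \omega_D(\mu_i') + \omega_D(\mu_i'') + \omega(\res e)$; applying Corollary \ref{crll:1PIsdHext} to $\mu_i'$ and $\mu_i''$ inside $\gamma_1$ and inside $\gamma_2$ constrains how many of their external half-edges come from the bridge versus from $\Hint(\gamma_j)$, and combined with the half-edge bookkeeping of Lemma \ref{lmm:sgidentities} should yield $\omega_D(\mu_i') + \omega_D(\mu_i'') \geq 0$. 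Since the propagator weight $\omega(\res e) > 0$ (the kinetic term carries derivatives), this would force $\omega_D(\mu_i) > 0$, contradicting the conclusion of the previous paragraph.

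With both closures in place, the abstract join and meet on $\sdsubdiags_D(\Gamma)$ coincide with $\cup$ and $\cap$, and the two distributivity identities for $\gamma_1, \gamma_2, \gamma_3 \in \sdsubdiags_D(\Gamma)$ follow verbatim from the Boolean identities $A \cap (B \cup C) = (A \cap B) \cup (A \cap C)$ and $A \cup (B \cap C) = (A \cup B) \cap (A \cup C)$ applied to the underlying edge/vertex sets, completing the proof.
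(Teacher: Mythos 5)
First, note that the paper does not actually prove this proposition --- it is quoted from Berghoff \cite[Prop.~3.22]{berghoff2014wonderful} --- so there is no in-paper argument to compare against; your proposal has to stand on its own. The overall strategy is the right one: if $\sdsubdiags_D(\Gamma)$ is closed under both $\cup$ and $\cap$ then it is a ring of sets, meet and join coincide with $\cap$ and $\cup$, and distributivity is inherited from the Boolean lattice. The union half is essentially the paper's own join-meet argument and is fine (modulo the remark that inclusion--exclusion for $\omega_D$ acquires a correction $-D\,(h_0(\gamma_1\cap\gamma_2)-1)$ when the intersection is disconnected, since $h_1$ is not additive in that case; the correction has the favourable sign and in fact forces the intersection to be connected, but it should be stated).

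There are, however, two genuine gaps. First, the step ``$|\Hext(\mu_i)|\geq 4$ and $\omega_D$ depends only on $\res$, hence $\omega_D(\mu_i)\geq 0$'' is a non sequitur: renormalizability fixes $\omega_D$ as a function of the residue but says nothing about its sign on four-or-more-leg residues. The paper's own counterexample, $\phi^6$ in three dimensions, is a renormalizable theory in which four-leg diagrams have $\omega_3<0$; that is precisely why the earlier join-meet theorem carries the extra hypothesis you have silently dropped. Second, the 1PI-ness of the intersection components --- which you correctly identify as the main obstacle --- is left as a plan (``should yield''), and the tools you reach for (Corollary~\ref{crll:1PIsdHext} and the half-edge bookkeeping of Lemma~\ref{lmm:sgidentities}) aim at bounding $\omega_D$ by leg-counting, which is exactly what fails in general. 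Both gaps close simultaneously with one observation that uses the actual hypothesis of the proposition rather than renormalizability alone: any connected subdiagram $\mu$ decomposes into 1PI blocks joined by bridges, and $\omega_D(\mu)=\sum_j\omega_D(p_j)+\sum_b\omega(\res(b))$ since bridges create no loops. Each 1PI block $p_j$ is either superficially convergent ($\omega_D(p_j)>0$) or lies in $\sdsubdiags_D(\Gamma)$ and is logarithmic by hypothesis ($\omega_D(p_j)=0$), so $\omega_D(\mu)\geq 0$ always, with strict inequality as soon as a bridge is present (granting $\omega(\res(e))>0$ for propagators, which is physically standard but not among the paper's formal axioms and should be flagged). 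Feeding $\omega_D(\mu_i)\geq 0$ into your union computation forces $\omega_D(\mu_i)=0$ for every component of $\gamma_1\cap\gamma_2$, and the strict inequality in the bridged case then rules out bridges, i.e.\ each $\mu_i$ is 1PI and superficially (logarithmically) divergent, hence in $\sdsubdiags_D(\Gamma)$. With that repair the proof goes through.
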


This is a pleasant result for diagrams with only logarithmically divergent subdiagrams. Because distributive lattices are always graded, this implies that we have a bigrading on $\hopflat$ for these elements. One grading by the value of $\nu(\hat{1})$, corresponding to the loop number of the diagram, and one grading by the length of the maximal chains of the lattice, which coincides with the \textit{coradical degree} of the diagram in $\hopffg_D$. The coradical filtration of $\hopffg_D$, defined in eq. \eqref{eqn:coradfilt}, consequently becomes a grading for the subspaces generated by only logarithmically s.d.\ diagrams.

\subsection{Theories with only three-or-less-valent vertices} From the preceding result the question arises what structure is left, if we also allow subdiagrams which are not only logarithmically divergent. In renormalizable QFTs with only three-or-less-valent vertices, the lattices $\sdsubdiags_D(\Gamma)$ will turn out to be semimodular. This is a weaker property than distributivity, but it still guarantees that the lattices are graded. To capture this property of $\sdsubdiags_D(\Gamma)$, some additional terms of lattice theory will be repeated following \cite{stanley1997}.

\paragraph{Join-irreducible element} An element $x$ of a lattice $L$, $x\in L$ is called join-irreducible if $x = y \join z$ always implies $x=y$ or $x=z$.

\paragraph{Atoms and coatoms} An element $x$ of $L$ is an atom of $L$ if it covers $\hat{0}$. It is a coatom of $L$ if $\hat{1}$ covers $x$.

\paragraph{Semimodular lattice} A lattice $L$ is semimodular if for two elements $x,y\in L$ that cover $x \meet y$, $x$ and $y$ are covered by $x \join y$.

With these notions we can formulate \begin{lmm} \label{lmm:propagatorunion} If in a renormalizable QFT with only three-or-less-valent vertices $\gamma_1$ and $\gamma_2$ are overlapping, they must be of vertex-type and $\gamma_1\cup \gamma_2$ of propagator-type. \end{lmm}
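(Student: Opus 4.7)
The plan is to bound the number of external legs of $\gamma_1$, $\gamma_2$, their intersection, and their union from above and below, and then squeeze. The third identity of Lemma~\ref{lmm:sgidentities},
\[
|\Hext(\gamma_1)| + |\Hext(\gamma_2)| = |\Hext(\gamma_1 \cup \gamma_2)| + |\Hext(\gamma_1 \cap \gamma_2)|,
\]
is the main accounting device. Because $\gamma_1, \gamma_2$ are 1PI subdiagrams with residues in $\mathcal{R}_v$ and the QFT contains only vertices of valence two or three, each $\gamma_i$ must have either two or three external legs, so the left-hand side is at most $6$.

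Next, I would invoke Proposition~\ref{prop:union_4ext}: every connected component of $\gamma_1 \cap \gamma_2$ carries at least four external legs, and summing over components gives $|\Hext(\gamma_1 \cap \gamma_2)| \ge 4$. Inserting this into the identity forces $|\Hext(\gamma_1 \cup \gamma_2)| \le 2$. For the matching lower bound, I observe that $\gamma_1 \cup \gamma_2$ is connected (as the union of two intersecting connected subdiagrams) and is a subdiagram of the 1PI diagram $\Gamma$. If it is a proper subdiagram, Corollary~\ref{crll:1PIsdHext} supplies $|\Hext(\gamma_1 \cup \gamma_2)| \ge 2$; if it equals $\Gamma$ itself, the valence hypothesis on $\Gamma$ (which has residue in $\mathcal{R}_v$) yields the same bound. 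Hence $|\Hext(\gamma_1 \cup \gamma_2)| = 2$, and equality must hold throughout, so $|\Hext(\gamma_1)| = |\Hext(\gamma_2)| = 3$ and $|\Hext(\gamma_1 \cap \gamma_2)| = 4$.

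Translating back to residues, $|\Hext(\gamma_i)| = 3$ says $\gamma_1$ and $\gamma_2$ are vertex-type, while $|\Hext(\gamma_1 \cup \gamma_2)| = 2$ makes the residue of $\gamma_1 \cup \gamma_2$ a degree-two monomial, hence (by $\mathcal{R}_e \subset \mathcal{R}_v$ and the fact that all degree-two elements of $\mathcal{R}_v$ sit in $\mathcal{R}_e$) an element of $\mathcal{R}_e$, i.e.\ propagator-type. The only mildly delicate step is securing the lower bound $|\Hext(\gamma_1 \cup \gamma_2)| \ge 2$ in both the proper-subdiagram and full-diagram case; the rest is just arithmetic using the previously established identity, proposition, and corollary.
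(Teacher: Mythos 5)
Your proposal is correct and follows essentially the same route as the paper: the counting identity of Lemma~\ref{lmm:sgidentities} combined with the bound $|\Hext(\gamma_1\cap\gamma_2)|\ge 4$ from Proposition~\ref{prop:union_4ext} and the fact that subdivergences in a three-or-less-valent theory have two or three legs. The paper compresses the final squeeze into ``the statement follows immediately,'' whereas you spell out the lower bound $|\Hext(\gamma_1\cup\gamma_2)|\ge 2$ via Corollary~\ref{crll:1PIsdHext}; that is just a more explicit rendering of the same argument.
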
 \begin{proof} From lemma \ref{lmm:sgidentities} and proposition \ref{prop:union_4ext} we know $|\Hext(\gamma_1)| + |\Hext(\gamma_2)| = |\Hext(\gamma_1 \cup \gamma_2)| + |\Hext(\gamma_1 \cap \gamma_2)|$ and $|\Hext(\gamma_1 \cap \gamma_2)| \geq 4$. The statement follows immediately, because in a renormalizable QFT with three-or-less-valent vertices every subdivergence either has two or three external legs.  \end{proof}

\begin{crll} \label{crll:joinirreducible} In a QFT with only three-or-less-valent vertices, vertex-type s.d.\ diagrams $\Gamma$ ($|\Hext(\Gamma)|=3$) are always join-irreducible elements of $\sdsubdiags_D(\Gamma)$. \end{crll}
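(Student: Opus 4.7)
The plan is to argue that the top element $\hat{1} = \Gamma$ of the lattice $\sdsubdiags_D(\Gamma)$ cannot be written as $\gamma_1 \vee \gamma_2$ unless one of $\gamma_i$ equals $\Gamma$. Since the theory has only three-or-less-valent vertices, the previous corollary guarantees join-meet-renormalizability, and the proof of the corresponding theorem tells us that the join in $\sdsubdiags_D(\Gamma)$ is realized as set-theoretic union of subdiagrams. So the task reduces to showing: if $\gamma_1, \gamma_2 \in \sdsubdiags_D(\Gamma)$ with $\gamma_1, \gamma_2 \subsetneq \Gamma$, then $\gamma_1 \cup \gamma_2 \neq \Gamma$.

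I would then split into the three mutually exclusive cases based on the relationship between $\gamma_1$ and $\gamma_2$. If one is contained in the other, then $\gamma_1 \cup \gamma_2$ equals the larger, which is strictly contained in $\Gamma$ by hypothesis. If $\gamma_1$ and $\gamma_2$ are disjoint (and both nonempty), then $\gamma_1 \cup \gamma_2 = \Gamma$ would exhibit $\Gamma$ as the disjoint union of two nonempty subdiagrams, contradicting connectedness (and hence 1PI-ness) of the vertex-type diagram $\Gamma$. The case in which one of the $\gamma_i$ is empty is immediate, since the union is then the other subdiagram, still properly contained in $\Gamma$.

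The only remaining and essential case is when $\gamma_1$ and $\gamma_2$ are overlapping. Here Lemma \ref{lmm:propagatorunion} applies directly and pins down $\gamma_1 \cup \gamma_2$ as a propagator-type subdiagram, i.e. $|\Hext(\gamma_1 \cup \gamma_2)| = 2$. Comparing with $|\Hext(\Gamma)| = 3$ forces $\gamma_1 \cup \gamma_2 \neq \Gamma$, which concludes this case.

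The main (and essentially only) obstacle is handling the overlapping case, and that is already done by Lemma \ref{lmm:propagatorunion}: the external-leg parity alone decides the matter once overlap has been ruled into propagator type. The rest is a routine enumeration of how two subdiagrams can sit inside a connected ambient diagram.
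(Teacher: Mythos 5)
Your proof is correct and follows essentially the same route as the paper: the paper likewise reduces to the overlapping case and invokes Lemma \ref{lmm:propagatorunion} to force $\gamma_1\cup\gamma_2$ to be of propagator type, contradicting $|\Hext(\Gamma)|=3$. You merely spell out the contained/disjoint/empty cases that the paper passes over implicitly.
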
 \begin{proof} Suppose there were $\gamma_1, \gamma_2 \in \sdsubdiags_D(\Gamma)$ with $\gamma_1\neq \Gamma$, $\gamma_2 \neq \Gamma$ and $\gamma_1 \join \gamma_2 = \Gamma$. The subdivergences $\gamma_1$ and $\gamma_2$ are therefore overlapping. As lemma \ref{lmm:propagatorunion} requires $\Gamma$ to be of propagator type, we have a contradiction. \end{proof} 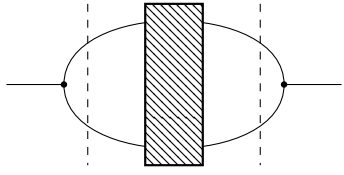
\begin{figure} {
\ifdefined\nodraft \begin{center} ${ \def\scale {10ex} \begin{tikzpicture}[x=\scale,y=\scale,baseline={([yshift=-.5ex]current bounding box.center)}] \begin{scope}[node distance=1] \coordinate (i); \coordinate[right=.5 of i] (vl); \coordinate[above right= of vl] (v1); \coordinate[below right= of vl] (v2); \coordinate[right=.5 of v1] (v3); \coordinate[right=.5 of v2] (v4); \coordinate[below right= of v3] (vr); \coordinate[right=.5 of vr] (o); \draw (i) -- (vl); \draw (o) -- (vr); \draw (vl) to[bend right=90] (vr); \draw (vl) to[bend left=90] (vr); \draw [dashed] ($(v1) - (.5,0)$) -- ($(v2) - (.5,0)$); \draw [dashed] ($(v3) + (.5,0)$) -- ($(v4) + (.5,0)$); \filldraw (vl) circle(1pt); \filldraw (vr) circle(1pt); \draw [fill=white] (v1) rectangle (v4); \draw [fill=white,thick,pattern=north west lines] (v1) rectangle (v4); \end{scope} \end{tikzpicture}}$ \end{center} \else

MISSING IN DRAFT MODE

\fi }
\caption{Structure of overlapping divergences in three-valent QFTs} \label{fig:blob33} \end{figure}

\begin{prop} \label{prop:semimodular} In a renormalizable QFT with only three-or-less-valent vertices, the lattice $\sdsubdiags_D(\Gamma)$ is semimodular for every Feynman diagram $\Gamma$.  \end{prop}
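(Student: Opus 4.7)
The plan is to verify the standard cover-based characterization of semimodularity directly: if $\gamma_1,\gamma_2\in\sdsubdiags_D(\Gamma)$ both cover $\gamma_1\wedge\gamma_2$, I will show that $\gamma_1\vee\gamma_2$ covers $\gamma_1$; the symmetric statement for $\gamma_2$ then follows by swapping roles. The cover hypothesis implicitly forces $\gamma_1$ and $\gamma_2$ to be incomparable, so either their underlying half-edge sets are disjoint or the two subdiagrams overlap in the sense defined earlier.

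In the disjoint case, $\gamma_1\wedge\gamma_2=\emptyset$ forces both $\gamma_i$ to be primitive 1PI s.d.\ subdivergences (atoms), and any strict intermediate $\mu\in[\gamma_1,\gamma_1\sqcup\gamma_2]$ would contain a nontrivial proper s.d.\ component inside $\gamma_2$, contradicting the primitivity of $\gamma_2$. In the overlapping case, Lemma~\ref{lmm:propagatorunion} forces both $\gamma_i$ to be vertex-type and $\gamma_1\vee\gamma_2=\gamma_1\cup\gamma_2$ to be propagator-type. I argue by contradiction: suppose an intermediate $\mu\in\sdsubdiags_D$ satisfies $\gamma_1\subsetneq\mu\subsetneq\gamma_1\cup\gamma_2$, chosen minimally so that $\mu$ covers $\gamma_1$. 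A direct computation yields $\mu\vee\gamma_2=\gamma_1\vee\gamma_2$ and places $\mu\wedge\gamma_2$ in the two-element interval $[\gamma_1\wedge\gamma_2,\gamma_2]$; the value $\gamma_2$ forces $\mu\supseteq\gamma_1\vee\gamma_2$, contradicting strict containment, so $\mu\wedge\gamma_2=\gamma_1\wedge\gamma_2$. I then decompose $\mu$ into its 1PI components: if the component $\mu_0\supseteq\gamma_1$ equals $\gamma_1$, the remaining components give a nonempty s.d.\ subdivergence $\sigma$ disjoint from $\gamma_1$ but contained in $\gamma_2$, and applying the cover hypothesis to $\sigma\vee(\gamma_1\wedge\gamma_2)\in[\gamma_1\wedge\gamma_2,\gamma_2]$ forces either $\sigma\subseteq\gamma_1\wedge\gamma_2\subseteq\gamma_1$ (contradicting disjointness) or $\gamma_2=\sigma\sqcup(\gamma_1\wedge\gamma_2)$ as a disjoint union (contradicting connectedness of $\gamma_2$). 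Otherwise minimality of $\mu$ implies $\mu=\mu_0$ is 1PI, and a second application of Lemma~\ref{lmm:propagatorunion} to the overlapping pair $(\mu,\gamma_2)$ rules out propagator-type $\mu$, leaving $\mu$ vertex-type.

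The remaining configuration has $\mu,\gamma_1,\gamma_2$ all 1PI vertex-type, with $\gamma_1\subsetneq\mu$, $\mu$ and $\gamma_2$ overlapping, $\mu\vee\gamma_2=\gamma_1\vee\gamma_2$, and $\mu\wedge\gamma_2=\gamma_1\wedge\gamma_2$. Here I invoke Corollary~\ref{crll:joinirreducible}: the vertex-type elements $\mu$ and $\gamma_2$ each have a unique predecessor in the lattice, namely $\gamma_1$ and $\gamma_1\wedge\gamma_2$ respectively. I expect the contradiction to follow by combining the external-leg identity $|\Hext(\mu\cap\gamma_2)|=|\Hext(\gamma_1\cap\gamma_2)|=4$ from Lemma~\ref{lmm:sgidentities} with Proposition~\ref{prop:union_4ext} and the loop-number bookkeeping for overlapping vertex-type pairs in the three-valent setting, which makes $\mu\cap\gamma_2\supsetneq\gamma_1\cap\gamma_2$ strictly as half-edge sets; this enlargement should produce an s.d.\ subdivergence strictly between $\gamma_1\wedge\gamma_2$ and $\gamma_2$ living inside the enlarged overlap region, contradicting the cover hypothesis on $\gamma_2$. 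Making this last structural step fully rigorous is the main technical obstacle; it may require either a careful analysis of how the extra half-edges of $\mu\cap\gamma_2$ compared to $\gamma_1\cap\gamma_2$ assemble into a new s.d.\ subdivergence of $\gamma_2$, or an induction on loop number that reduces to strictly smaller sublattices where the claim is already known.
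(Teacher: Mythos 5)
There is a genuine gap, and you have located it yourself: the final configuration (all of $\mu,\gamma_1,\gamma_2$ vertex-type and 1PI, with $\gamma_1\subsetneq\mu\subsetneq\gamma_1\cup\gamma_2$) is exactly where the content of the proposition lives, and your proposal offers only a hoped-for contradiction there, not an argument. Worse, the route you sketch is unlikely to close as stated: by Proposition~\ref{prop:union_4ext} every connected component of an overlap such as $\mu\cap\gamma_2$ has at least four external legs, and in a renormalizable theory with only three-or-less-valent vertices any diagram with four or more legs is superficially \emph{convergent}. So the strict enlargement $\mu\cap\gamma_2\supsetneq\gamma_1\cap\gamma_2$ (which you do establish correctly) cannot directly yield "an s.d.\ subdivergence strictly between $\gamma_1\wedge\gamma_2$ and $\gamma_2$" --- the overlap region itself is never superficially divergent in this setting, and you give no mechanism for extracting a genuinely s.d.\ subdiagram from it. The missing idea is the concrete structural fact the paper uses: by Corollary~\ref{crll:1PIsdHext} each $\gamma_i$ satisfies $|\Hext(\gamma_i)\cap\Hint(\gamma_1\cup\gamma_2)|\geq 2$, and since $\gamma_1\cup\gamma_2$ is propagator-type with only three-valent vertices available, each $\gamma_i$ contributes exactly one of the two external legs of the union; primitivity of both then forces them to share a single common four-leg kernel (fig.~\ref{fig:blob33}), so that $(\gamma_1\cup\gamma_2)/\gamma_i$ is a one-loop propagator and hence primitive. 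Without this (or an equivalent) analysis, the covering relation $\gamma_1\lessdot\gamma_1\vee\gamma_2$ is not established.

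Two smaller points. First, your reduction treats the covers $\gamma_1,\gamma_2$ of $\gamma_1\wedge\gamma_2$ as if they were connected, so that the dichotomy "disjoint or overlapping" applies; elements of $\sdsubdiags_D(\Gamma)$ are in general disjoint unions of 1PI pieces, and the paper's device of first contracting by $\mu_1\meet\mu_2$ (so that one only needs: for $\gamma_1,\gamma_2$ \emph{primitive}, the quotients $(\gamma_1\cup\gamma_2)/\gamma_i$ are primitive) handles this more cleanly than working with intermediate lattice elements $\mu$ above $\gamma_1$. Second, your disjoint case and the preliminary eliminations in the overlapping case (forcing $\mu$ to be 1PI and vertex-type via Lemma~\ref{lmm:propagatorunion} and Corollary~\ref{crll:joinirreducible}) are sound; the proof fails only at, but decisively at, the last step.
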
 \begin{proof} For two diagrams $\mu_1, \mu_2 \in \sdsubdiags_D(\Gamma)$ we can always form the contractions by $\mu_1 \meet \mu_2$: $\mu_1 / (\mu_1 \meet \mu_2)$ and $\mu_2 / (\mu_1 \meet \mu_2)$. Hence, the statement that $\mu_1,\mu_2$ cover $\mu_1 \meet \mu_2$ is equivalent to stating that $\mu_1 / (\mu_1 \meet \mu_2)$ and $\mu_2 / (\mu_1 \meet \mu_2)$ are primitive.

To prove that $\mu_1 \join \mu_2$ covers $\mu_1$ and $\mu_2$ if $\mu_1$ and $\mu_2$ cover $\mu_1 \meet \mu_2$, it is therefore sufficient to verify that for $\gamma_1,\gamma_2$ primitive $(\gamma_1\cup \gamma_2) / \gamma_1$ and $(\gamma_1\cup \gamma_2) / \gamma_2$ are primitive as well. This is obvious if $\gamma_1, \gamma_2$ are not overlapping. 

If $\gamma_1$ and $\gamma_2$ are overlapping, they must be of vertex-type and $\gamma_1\cup \gamma_2$ of propagator-type as proven in lemma \ref{lmm:propagatorunion}. Because only three-valent vertices are allowed and $|\Hext(\gamma_i) \cap \Hint(\gamma_1\cup \gamma_2)|\geq 2$ (corollary \ref{crll:1PIsdHext}), each $\gamma_1$ and $\gamma_2$ must provide one external edge for $\gamma_1 \cup \gamma_2$. The situation is depicted in fig. \ref{fig:blob33}. For both $\gamma_1$ and $\gamma_2$ to be primitive, they must share the same four-leg kernel, depicted as a striped box. Contraction with either $\gamma_1$ or $\gamma_2$ results in a one-loop propagator, which is primitive. \end{proof}

Semimodular lattices have a very rich structure, see for instance Stern's book \cite{stern1999semimodular}. Eventually, semimodularity implies that the lattices under consideration are graded: \begin{thm} \label{thm:gradthree} In a renormalizable QFT with only three-or-less-valent vertices: \begin{itemize} \item $\sdsubdiags_D(\Gamma)$ is a graded lattice for every propagator, vertex-type diagram or disjoint unions of both. \item $\hopflat$ is bigraded by $\nu(\hat{1})$ and the length of the maximal chains of the lattices, which coincides with the coradical degree in $\hopflat$. \item $\hopffg_D$ is bigraded by $h_1(\Gamma)$ and the coradical degree of $\Gamma$.  \item Every complete forest of $\Gamma$ has the same length. \end{itemize} \end{thm}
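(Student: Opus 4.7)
The heart of the argument is the classical fact that a finite semimodular lattice satisfies the Jordan--Dedekind chain condition and is therefore graded, with a well-defined rank function $r$. By Proposition~\ref{prop:semimodular}, $\sdsubdiags_D(\Gamma)$ is semimodular for every 1PI Feynman diagram $\Gamma$ in a QFT with only three-or-less-valent vertices, hence graded. For disjoint unions one has $\sdsubdiags_D(\Gamma_1 \Gamma_2) \simeq \sdsubdiags_D(\Gamma_1) \times \sdsubdiags_D(\Gamma_2)$, and a Cartesian product of graded lattices is graded with rank the sum of the ranks. This settles the first bullet.

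For the second bullet, I would equip $\hopflat$ with the bigrading $\hopflat = \bigoplus_{L,n} \hopflat^{(L,n)}$, placing $(P,\nu)$ in bidegree $(\nu(\hat{1}),\, r_P(\hat{1}))$. Additivity under $m_{\hopfpos}$ is immediate from $r_{P_1 \times P_2}(\hat 1) = r_{P_1}(\hat 1) + r_{P_2}(\hat 1)$ and the analogous splitting of $\nu$. Compatibility of $\Delta_{\hopfpos}$ with the bigrading reduces to the identity $r_{[\hat 0, x]}(x) + r_{[x, \hat 1]}(\hat 1) = r_P(\hat 1)$, which holds because intervals of a graded lattice are again graded with ranks summing correctly. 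The identification of the rank with the coradical degree in $\hopflat$ comes from unwinding iterated reduced coproducts: the $n$-fold reduced coproduct of $(P,\nu)$ is a sum indexed by strict chains $\hat 0 < x_1 < \cdots < x_{n-1} < \hat 1$ in $P$, and such a chain exists precisely when $n \leq r_P(\hat 1)$.

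For the third bullet I transfer via the Hopf algebra morphism $\chi_D$ of Theorem~\ref{thm:hopf_alg_morph}. Intertwining of coproducts makes $\widetilde{\Delta}_D^n \Gamma$ a sum indexed by strict chains in $\sdsubdiags_D(\Gamma)$, and semimodularity forces every maximal chain to have the common length $r(\sdsubdiags_D(\Gamma))$; hence $\widetilde{\Delta}_D^{n+1} \Gamma = 0$ precisely when $n \geq r(\sdsubdiags_D(\Gamma))$, so the coradical degree of $\Gamma$ equals this rank. The loop-number grading already exists by equations~\eqref{grading_loop_1}--\eqref{grading_loop_3}; additivity of the coradical degree under the product and its compatibility with $\Delta_D$ are inherited via $\chi_D$ from the corresponding statements in $\hopflat$. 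The fourth bullet is then immediate: by the earlier identification a complete forest of $\Gamma$ is a maximal chain in $\sdsubdiags_D(\Gamma)$, and Jordan--Dedekind forces all such chains to have the same length.

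The main obstacle I expect is making precise the identification of the rank of $\sdsubdiags_D(\Gamma)$ with the coradical degree of $\Gamma$. This rests on showing that the iterated reduced coproduct is a genuine (nonzero) sum over strict chains of the requested length, rather than a formal sum that might conspire to cancel; the projection onto the augmentation ideal must be checked to peel off exactly the $\hat 0$ and $\hat 1$ contributions, and distinct chains $\hat 0 < x_1 < \cdots < x_{n-1} < \hat 1$ must produce tensor monomials that do not collapse after passing to isomorphism classes in $\mathcal{P}/\!\sim$. Once this bookkeeping is in place, the remainder is standard Jordan--Dedekind manipulation.
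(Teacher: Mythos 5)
Your proposal is correct and takes essentially the same route as the paper, whose entire proof is the one-line observation that every semimodular lattice is graded (citing Stanley), with semimodularity supplied by Proposition~\ref{prop:semimodular}; you have simply spelled out the Jordan--Dedekind bookkeeping and the transfer along $\chi_D$ that the paper leaves implicit. The ``main obstacle'' you flag at the end is not actually one: every chain contributes to the iterated reduced coproduct with coefficient $+1$, so passing to isomorphism classes can only add positive coefficients and no cancellation is possible.
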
 \begin{proof} Every semimodular lattice is graded \cite[Proposition 3.3.2]{stanley1997}. \end{proof}

\subsection{Theories with only four-or-less-valent vertices}

{ \ifdefined\nodraft \begin{figure}   \subcaptionbox{Example of a diagram where $\sdsubdiags_4(\Gamma)$ forms a non-graded lattice.\label{fig:diagramnonranked}}   [.45\linewidth]{       $\Gamma = { \def \scale{2em} 
 } & & &\end{aligned}$       \\ with the complete forests $\emptyset \subset \delta_1 \subset \alpha_i \subset \Gamma$,  $\emptyset \subset \delta_2 \subset \beta_i \subset \Gamma$ and $\emptyset \subset \gamma_i \subset \Gamma$.   } \caption{Counter example of a lattice, which appears in join-meet-renormalizable QFTs with four-valent vertices and is not graded.} \label{fig:counterexplsemimod} \end{figure} \else

MISSING IN DRAFT MODE

\fi }

We have shown that every lattice associated to a s.d.\ diagram in a QFT with only three-or-less-valent vertices is semimodular. For join-meet-renormalizable QFTs which also have four-valent vertices the situation is more involved as the example in fig. \ref{fig:counterexplsemimod} exposes. The depicted lattice in fig. \ref{fig:diagramnonrankedlattice} associated to the $\phi^4$-diagram in fig. \ref{fig:diagramnonranked} is not semimodular, because it is not graded. This implies that not all complete forests are of the same length in theories, where this topology can appear. This includes $\phi^4$ and Yang-Mills theory in four dimensions. The s.d.\ subdiagrams of the counter example are illustrated in fig. \ref{fig:diagramnonrankeddetails}. It can be seen that there are six complete forests of length four and three complete forests of length three.

The pleasant property of semimodularity can be recovered by working in the Hopf algebra of Feynman diagrams without tadpoles or equivalently by setting all tadpole diagrams to zero. This is quite surprising, because the independence of loops in tadpoles from external momenta and the combinatorial structure of BPHZ, encoded by the Hopf algebra of Feynman diagrams, seem independent on the first sight. 

Formally, we can transfer the restriction to tadpole-free diagrams to $\hopflat$ by the following procedure: The Hopf algebra morphism $\psi: \hopffg_D \rightarrow \hopffgs_D$ defined in eq. \eqref{eqn:morphismpsi} gives rise to the ideal $\ker \psi \subset \hopffg_D$. Using the Hopf algebra morphism $\chi_D$ an ideal of $\hopflat$, $\chi_D(\ker \psi ) \subset \hopflat$, is obtained. This can be summarized in a commutative diagram:

\begin{center} \begin{tikzpicture} \node (tl) {$\hopffg$}; \node [right=of tl] (tr) {$\hopffgs$}; \node [below =of tl] (bl) {$\hopflat$}; \node [below =of tr] (br) {$\hopflats$}; \draw[->] (tl) -- node[above] {$\psi$} (tr); \draw[->] (tl) to node[auto] {$\chi_D$} (bl); \draw[->] (bl) -- node[auto] {$\psi'$} (br); \draw[->] (tr) -- node[right] {$\chi_D'$} (br); \end{tikzpicture} \end{center} where $\hopflats$ is the quotient $\hopflats:= \hopflat/\chi_D(\ker\psi)$ and $\psi'$ is just the projection to $\hopflats$. 

The interesting part is the morphism $\chi_D': \hopffgs_D \rightarrow \hopflats$, which maps from the Hopf algebra of Feynman diagrams without tadpoles to $\hopflats$. Such a map can be constructed explicitly and for theories with only four-or-less-valent vertices, it can be ensured that $\chi_D'$ maps Feynman diagrams to decorated semimodular lattices. 

\begin{prop} In a renormalizable QFT with only four-or-less-valent vertices, $\chi_D'$ maps elements from the Hopf algebra of Feynman diagrams without tadpoles to decorated lattices. \end{prop}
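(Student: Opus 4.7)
The plan is to exhibit $\chi_D'$ explicitly and verify that it takes values in decorated lattices. The natural candidate, mirroring Theorem \ref{thm:hopf_alg_morph}, is
\begin{align*}
\chi_D'(\Gamma) := (\sdsubdiagsn_D(\Gamma), \nu)
\end{align*}
for a tadpole-free 1PI generator $\Gamma \in \hopffgs_D$, with $\nu(\gamma) = h_1(\gamma)$ as before, and extended multiplicatively to all of $\hopffgs_D$. First I would check that $\chi_D'$ is a Hopf algebra morphism and that $\chi_D' \circ \psi = \psi' \circ \chi_D$, completing the commutative square. The algebra morphism property follows from the same Cartesian-product argument as in Theorem \ref{thm:hopf_alg_morph}. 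The coalgebra compatibility is the analogue of eq. \eqref{eqn:comorphism}, but now using the restricted set $\sdsubdiagsn_D(\Gamma)$ from definition \ref{def:snailfreehopf}; both sides match because intervals $[\hat 0, \gamma]$ and $[\gamma, \hat 1]$ in $\sdsubdiagsn_D(\Gamma)$ correspond precisely to the subdivergences and cographs appearing in the tadpole-free coproduct.

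The main content is then to show that $\sdsubdiagsn_D(\Gamma)$, ordered by inclusion, is a lattice whenever the QFT has only four-or-less-valent vertices and $\Gamma$ itself is tadpole-free. It has extremal elements $\hat 0 = \emptyset$ and $\hat 1 = \Gamma$, the latter since $\Gamma/\Gamma = \mathbb{I}$ is not a tadpole and the former since $\Gamma/\emptyset = \Gamma$ is tadpole-free by assumption. Following the strategy of the join-meet-renormalizability theorem, it suffices to verify closure under union. The superficial divergence of $\gamma_1 \cup \gamma_2$ is already guaranteed by the four-valence bound, so the entire task reduces to the implication: if neither $\Gamma/\gamma_1$ nor $\Gamma/\gamma_2$ is a tadpole, then $\Gamma/(\gamma_1 \cup \gamma_2)$ is not a tadpole either.

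This last implication is the main obstacle, and I would handle it by a case split on the relative position of $\gamma_1$ and $\gamma_2$. In the disjoint and nested cases, one has the iterated contraction identity $\Gamma/(\gamma_1 \cup \gamma_2) = (\Gamma/\gamma_1) / \widetilde{\gamma_2}$, where $\widetilde{\gamma_2}$ is the image of $\gamma_2$ in $\Gamma/\gamma_1$; here a tadpole structure in the final contraction can be pushed back into a tadpole structure in $\Gamma/\gamma_1$ (or directly in $\Gamma$), contradicting the hypothesis. The delicate case is when $\gamma_1$ and $\gamma_2$ overlap: by Proposition \ref{prop:union_4ext} one has $|\Hext(\gamma_1 \cap \gamma_2)| \geq 4$, and combined with the $\leq 4$-valence bound and lemma \ref{lmm:sgidentities}, this severely constrains the leg distribution, forcing $\gamma_1 \cup \gamma_2$ to be of propagator or vertex type with a specific leg accounting. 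A would-be snail subdiagram in $\Gamma/(\gamma_1 \cup \gamma_2)$ (fig.\ \ref{fig:snail}) corresponds to a cut-vertex separating a legless component; tracing this cut-vertex back through the contraction, it must already be present in $\Gamma$, either as a snail directly or as a snail on $\Gamma/\gamma_i$, in each case contradicting one of the three tadpole-freeness hypotheses. Assembling these cases yields the closure of $\sdsubdiagsn_D(\Gamma)$ under union and hence the lattice property.
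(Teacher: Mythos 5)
Your reduction of the problem to ``closure under union'' is exactly the step that fails, and the paper is explicit about this point: for $\gamma_1,\gamma_2\in\sdsubdiagsn_D(\Gamma)$ it is \emph{not} generally true that $\gamma_1\cup\gamma_2\in\sdsubdiagsn_D(\Gamma)$; that is, the implication ``$\Gamma/\gamma_1$ and $\Gamma/\gamma_2$ are not tadpoles $\Rightarrow$ $\Gamma/(\gamma_1\cup\gamma_2)$ is not a tadpole'' is false. A concrete counterexample in $\phi^4$, $D=4$: let $\Gamma$ have vertices $a,b,c$, a double edge between $a$ and $b$, a double edge between $b$ and $c$, a single edge between $a$ and $c$, and one external leg at each of $a$ and $c$. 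This $\Gamma$ is a tadpole-free three-loop propagator diagram, and the two one-loop bubbles $\gamma_1$ (on $a,b$) and $\gamma_2$ (on $b,c$) are overlapping four-leg s.d.\ subdiagrams whose individual contractions are sunset graphs, hence not tadpoles; but $\gamma_1\cup\gamma_2$ is a four-leg s.d.\ subdiagram whose contraction turns the $a$--$c$ edge into a self-loop, so $\Gamma/(\gamma_1\cup\gamma_2)$ \emph{is} a tadpole. The mechanism your ``tracing the cut-vertex back through the contraction'' step misses is that the offending self-loop is created by an edge of $\Gamma$ joining an external leg of $\gamma_1$ outside $\gamma_2$ to an external leg of $\gamma_2$ outside $\gamma_1$; such an edge is a self-loop neither in $\Gamma/\gamma_1$, nor in $\Gamma/\gamma_2$, nor in $\Gamma$, so none of your three tadpole-freeness hypotheses is contradicted.

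Consequently the join in $\sdsubdiagsn_D(\Gamma)$ cannot be taken to be the union, and the real content of the proposition is to show that a least upper bound exists anyway. The paper does this by showing that every $\mu\in\sdsubdiags_D(\Gamma)$ with $\mu\notin\sdsubdiagsn_D(\Gamma)$ lies below a \emph{unique} minimal element of $\sdsubdiagsn_D(\Gamma)$: if $\delta/\gamma$ and $\delta'/\gamma$ were both primitive tadpoles for distinct $\delta,\delta'\supset\gamma$, then $\gamma$ would have to be a four-leg diagram and $\delta\cup\delta'$ a vacuum diagram with no external legs, which is excluded --- this uniqueness argument is where the four-or-less-valent hypothesis actually enters. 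Your proposal never raises this uniqueness question, so even after repairing the overlapping case the existence of the join would remain unproven. (Your preliminary remarks on $\chi_D'$ being a Hopf algebra morphism and on the commutative square are reasonable but are not what this proposition asserts; the paper only needs the lattice property here.)
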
 \begin{proof} Explicitly, $\chi_D'$ is the map, \begin{align} \chi_D': \Gamma \mapsto (\sdsubdiagsn_D (\Gamma), \nu), \end{align} where the decoration $\nu$ is the same as above.

We need to show that $\sdsubdiagsn_D(\Gamma)$ ordered by inclusion is a lattice. This is not as simple as before, because $\gamma_1, \gamma_2 \in \sdsubdiagsn_D(\Gamma)$ does not necessarily imply $\gamma_1 \cup \gamma_2 \in \sdsubdiagsn_D(\Gamma)$. From definition \ref{def:snailfreehopf} of $\sdsubdiagsn_D(\Gamma)$, we can deduce that if $\gamma_1, \gamma_2 \in \sdsubdiagsn_D(\Gamma)$, then $\gamma_1 \cup \gamma_2 \notin \sdsubdiagsn_D(\Gamma)$ iff $\Gamma / \gamma_1 \cup \gamma_2$ is a tadpole. 

To prove that there still exists a least upper bound for every pair $\gamma_1$, $\gamma_2$ we must ensure that every element $\mu \in \sdsubdiags_D(\Gamma)$ and $\mu \notin \sdsubdiagsn_D(\Gamma)$ is only covered by one element in $\sdsubdiagsn_D(\Gamma)$. This is equivalent to stating that if $\gamma \subset \delta \subset \Gamma$ and $\delta / \gamma$ is a primitive tadpole (i.e. a self-loop with one vertex), then there is no $\delta' \subset \Gamma$ such that $\delta' / \gamma$ is a primitive tadpole. There cannot be such a second subdiagram $\delta'$. Suppose there were such $\delta$ and $\delta'$. $\delta$ and $\delta'$ are obtained from $\gamma$ by joining two of its external legs to an new edge. As only four-or-less-valent vertices are allowed, such a configuration can only be achieved if $\gamma$ is a diagram with four external legs. $\delta$ and $\delta'$ are the diagrams obtained by closing either pair of legs of $\gamma$. This would imply that $\delta_1 \cup \delta_2$ is a vacuum diagram without external legs, which is excluded. \end{proof}

\begin{expl} The map $\chi_D'$ can be applied to the example in fig. \ref{fig:counterexplsemimod} where the lattice obtained by $\chi_D$ is not semimodular. It can be seen from fig. \ref{fig:diagramnonrankeddetails} that the only diagrams, which do not result in tadpole diagrams upon contraction are $\delta_1$ and $\delta_2$. Accordingly,  \begin{align} \chi_4'\left({ \def \scale{1.5em} \begin{tikzpicture}[x=\scale,y=\scale,baseline={([yshift=-.5ex]current bounding box.center)}] \begin{scope}[node distance=1] \coordinate (i1); \coordinate[right=.5 of i1] (v0); \coordinate[right=.5 of v0] (v2); \coordinate[right=.5 of v2] (vm); \coordinate[above=of vm] (v3); \coordinate[below=of vm] (v4); \coordinate[right=.5 of vm] (v2d); \coordinate[right=.5 of v2d] (v1); \coordinate[right=.5 of v1] (o1); \ifdefined \cvlt \draw[line width=1.5pt] (v0) to[bend left=45] (v3); \else \draw (v0) to[bend left=45] (v3); \fi \ifdefined \cvlb \draw[line width=1.5pt] (v0) to[bend right=45] (v4); \else \draw (v0) to[bend right=45] (v4); \fi \ifdefined \cvlm \draw[line width=1.5pt] (v0) -- (v2); \else \draw (v0) -- (v2); \fi \ifdefined \cvrt \draw[line width=1.5pt] (v1) to[bend right=45] (v3); \else \draw (v1) to[bend right=45] (v3); \fi \ifdefined \cvrb \draw[line width=1.5pt] (v1) to[bend left=45] (v4); \else \draw (v1) to[bend left=45] (v4); \fi \ifdefined \cvmt \draw[line width=1.5pt] (v3) to[bend right=20] (v2); \else \draw (v3) to[bend right=20] (v2); \fi \ifdefined \cvmb \draw[line width=1.5pt] (v4) to[bend left=20] (v2); \else \draw (v4) to[bend left=20] (v2); \fi \ifdefined \cvmm \draw[line width=1.5pt] (v3) to[bend left=20] (v2d); \draw[line width=1.5pt] (v4) to[bend right=20] (v2d); \else \draw (v3) to[bend left=20] (v2d); \draw (v4) to[bend right=20] (v2d); \fi \filldraw[color=white] (v2d) circle(.2); \ifdefined \cvrm \draw[line width=1.5pt] (v1) -- (v2); \else \draw (v1) -- (v2); \fi \draw (v0) -- (i1); \draw (v1) -- (o1); \filldraw (v0) circle(1pt); \filldraw (v1) circle(1pt); \filldraw (v2) circle(1pt); \filldraw (v3) circle(1pt); \filldraw (v4) circle(1pt); \end{scope} \end{tikzpicture} } \right) = { \def \scale{1.5em} \begin{tikzpicture}[x=\scale,y=\scale,baseline={([yshift=-.5ex]current bounding box.center)}] \begin{scope}[node distance=1] \coordinate (top) ; \coordinate [below left= of top] (v1); \coordinate [below right= of top] (v2); \coordinate [below left= of v2] (bottom); \draw (top) -- (v1); \draw (top) -- (v2); \draw (v1) -- (bottom); \draw (v2) -- (bottom); \filldraw[fill=white, draw=black,circle] (top) node[fill,circle,draw,inner sep=1pt]{$5$}; \filldraw[fill=white, draw=black,circle] (v1) node[fill,circle,draw,inner sep=1pt]{$3$}; \filldraw[fill=white, draw=black,circle] (v2) node[fill,circle,draw,inner sep=1pt]{$3$}; \filldraw[fill=white, draw=black,circle] (bottom) node[fill,circle,draw,inner sep=1pt]{$0$}; \end{scope} \end{tikzpicture} }, \end{align} which is a graded lattice. \end{expl}

\begin{prop} In a QFT with only four-or-less-valent vertices  $\chi_D'$ maps elements from the Hopf algebra of Feynman diagrams without tadpoles to decorated semimodular lattices. \end{prop}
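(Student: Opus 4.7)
The plan is to adapt the proof of Proposition \ref{prop:semimodular} to the four-valent setting, exploiting the fact that the structural obstructions to semimodularity in $\sdsubdiags_D(\Gamma)$ are precisely the ones removed when passing to $\sdsubdiagsn_D(\Gamma)$. I would begin by taking $\mu_1, \mu_2 \in \sdsubdiagsn_D(\Gamma)$ that cover $\mu_1 \meet \mu_2$ in $\sdsubdiagsn_D$. As in the three-valent case, by contracting $\mu_1 \meet \mu_2$ and working inside the interval $[\mu_1 \meet \mu_2, \hat{1}]$, the problem reduces to showing that for two primitive (in $\hopffgs_D$) overlapping subdiagrams $\gamma_1, \gamma_2$, the quotients $(\gamma_1 \cup \gamma_2)/\gamma_i$ are primitive in $\hopffgs_D$. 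The non-overlapping case is routine.

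For the overlapping case, I would invoke Lemma \ref{lmm:sgidentities} and Proposition \ref{prop:union_4ext} to obtain the numerical relation $|\Hext(\gamma_1)| + |\Hext(\gamma_2)| = |\Hext(\gamma_1 \cup \gamma_2)| + |\Hext(\gamma_1 \cap \gamma_2)|$ together with $|\Hext(\gamma_1 \cap \gamma_2)| \geq 4$ on each connected component. With only four-or-less-valent vertices each $\gamma_i$ has $2$ or $4$ external legs, so both $\gamma_1, \gamma_2$ must be of vertex type, $\gamma_1 \cap \gamma_2$ must be connected with four external legs, and $|\Hext(\gamma_1 \cup \gamma_2)| \in \{2,4\}$.

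When $|\Hext(\gamma_1 \cup \gamma_2)| = 4$, the shared four-legged kernel forces a structure analogous to fig. \ref{fig:blob33} (with $\gamma_1 \cap \gamma_2$ playing the role of the striped box), and the same contraction argument as in Proposition \ref{prop:semimodular} shows that $(\gamma_1 \cup \gamma_2)/\gamma_i$ is a primitive one-loop propagator. In particular, $\gamma_1 \cup \gamma_2 \in \sdsubdiagsn_D(\Gamma)$ and covers each $\gamma_i$.

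The main obstacle, and the heart of the proof, is the case $|\Hext(\gamma_1 \cup \gamma_2)| = 2$. Here I would argue that the propagator-type union, combined with the four-valent vertex bound and the configuration forced by overlapping primitive vertex-type $\gamma_i$, implies that contracting $\gamma_1 \cup \gamma_2$ collapses its two external half-edges onto a single vertex in the ambient diagram, producing a subdiagram of the characteristic shape of fig.\ \ref{fig:snail}. Hence $\Gamma/(\gamma_1 \cup \gamma_2)$ is a tadpole, so $\gamma_1 \cup \gamma_2 \notin \sdsubdiagsn_D(\Gamma)$. Invoking the uniqueness argument established in the proof of the previous proposition (that each tadpole-inducing element in $\sdsubdiags_D$ lies beneath a unique covering element of $\sdsubdiagsn_D$), the join $\gamma_1 \join \gamma_2$ in $\sdsubdiagsn_D$ is precisely this unique element, and the remaining verification that it covers $\gamma_1, \gamma_2$ reduces to the same primitive-quotient analysis. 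The delicate point is confirming the tadpole claim in full generality from the combinatorial data of the overlap; I expect this to follow by a clean half-edge count using Corollary \ref{crll:1PIsdHext} applied to both $\gamma_i$ viewed inside $\gamma_1 \cup \gamma_2$.
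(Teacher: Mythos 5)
Your reduction to the overlapping--primitive case and your use of Lemma \ref{lmm:sgidentities} and Proposition \ref{prop:union_4ext} match the setup of the paper, but the heart of your argument contains a false step and an incomplete case split. The false step is the claim that $|\Hext(\gamma_1 \cup \gamma_2)| = 2$ forces $\Gamma/(\gamma_1 \cup \gamma_2)$ to be a tadpole. Contracting a propagator-type subdiagram produces a two-valent vertex whose two half-edges attach to the ambient diagram exactly where the two external legs of the union did; no self-loop of the shape in fig.~\ref{fig:snail} is created. Indeed, the configuration of fig.~\ref{fig:blob33} (two overlapping vertex-type primitives with propagator-type union) sits inside perfectly ordinary non-tadpole diagrams, and in the counterexample of fig.~\ref{fig:counterexplsemimod} the tadpole-producing elements are the $\alpha_i,\beta_i,\gamma_i$, not the propagator-type union itself. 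So the case you identify as ``the heart of the proof'' is resolved by an incorrect mechanism, and your appeal to the uniqueness-of-cover argument from the preceding proposition has nothing to attach to. Separately, your case analysis assumes every s.d.\ subdiagram has $2$ or $4$ external legs; the proposition covers all four-or-less-valent theories, including those (such as Yang--Mills) with three-valent vertices and hence superficially divergent three-point subdiagrams, so the pairs $(|\Hext(\gamma_1)|,|\Hext(\gamma_2)|)$ can also be $(3,3)$, $(3,4)$, $(2,4)$, etc., and the intersection need not have exactly four legs.

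The paper's proof avoids all of this with a single observation you do not use: because one works in $\sdsubdiagsn_D$, neither $\Hext(\gamma_1 \join \gamma_2) \subset \Hext(\gamma_1)$ nor $\Hext(\gamma_1 \join \gamma_2) \subset \Hext(\gamma_2)$ can hold, since otherwise the corresponding contraction $(\gamma_1 \join \gamma_2)/\gamma_i$ would itself be a tadpole. Thus each overlapping primitive is characterized by the proper, nonempty subset of $\Hext(\gamma_1 \join \gamma_2)$ it contains, and if $(\gamma_1 \join \gamma_2)/\gamma_1$ had a subdivergence one could delete the half-edges $\Hext(\gamma_1) \cap \Hext(\gamma_1 \join \gamma_2)$ together with their adjacent edges and vertices to produce an s.d.\ subdiagram of $\gamma_2$, contradicting its primitivity. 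This argument is uniform in the number of external legs and requires no enumeration of cases; I would encourage you to rebuild your proof around that observation rather than around the leg-count dichotomy.
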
 \begin{proof} As above we only need to prove that if $\gamma_1$ and $\gamma_2$ are overlapping and primitive, then $(\gamma_1\join \gamma_2) / \gamma_1$ and $(\gamma_1\join \gamma_2) / \gamma_2$ are primitive as well. The key is to notice that $\Hext(\gamma_1 \join \gamma_2) \not \subset \Hext(\gamma_1)$ and $\Hext(\gamma_1 \join \gamma_2) \not \subset \Hext(\gamma_2)$. Otherwise, one of the contracted diagrams would be a tadpole. For this reason, we can characterize the overlapping divergences of a diagram by the proper subset of external half-edges of the full diagram it contains.  If $\gamma_1 \join \gamma_2/\gamma_1$ was not primitive, we could remove the half-edges $\Hext(\gamma_1) \cap \Hext(\gamma_1 \join \gamma_2)$ and the adjacent vertices and edges. The result would be a s.d.\ subdiagram of $\gamma_2$ in contradiction with the requirement. \end{proof}

It is interesting how important taking the quotient by the tadpole diagrams is, to obtain the property of semimodularity for the lattices of Feynman diagrams. \begin{thm} \label{thm:gradfour} In a renormalizable QFT with only four-or-less-valent vertices: \begin{itemize} \item $\sdsubdiagsn_D(\Gamma)$ is a graded lattice for every propagator, vertex-type diagram or disjoint unions of both. \item $\hopflats$ is bigraded by $\nu(\hat{1})$ and the length of the maximal chains of the lattices, which coincides with the coradical degree in $\hopflat$. \item $\hopffgs_D$ is bigraded by $h_1(\Gamma)$ and the coradical degree of $\Gamma$.  \item Every complete forest of $\Gamma$, which does not result in a tadpole upon contraction, has the same length. \end{itemize} \end{thm}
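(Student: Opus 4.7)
The plan is to reduce all four bullet points to the semimodularity of $\sdsubdiagsn_D(\Gamma)$ already established in the preceding proposition, and then to mimic the argument of Theorem \ref{thm:gradthree} essentially verbatim on the tadpole-free side.

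First, I would invoke the preceding proposition: for any $\Gamma$ in a renormalizable QFT with only four-or-less-valent vertices, $\chi_D'(\Gamma) = (\sdsubdiagsn_D(\Gamma),\nu)$ is a decorated semimodular lattice. The standard fact from \cite[Prop. 3.3.2]{stanley1997} that every semimodular lattice is graded then gives the first bullet immediately. It also gives the last one: each cover relation $\gamma \subset \gamma'$ in $\sdsubdiagsn_D(\Gamma)$ corresponds to a primitive quotient $\gamma'/\gamma$, so complete forests whose contractions avoid tadpoles are in bijection with maximal chains of $\sdsubdiagsn_D(\Gamma)$; by gradedness they all share the rank of $\hat{1}$.

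Next, I would establish the bigrading on $\hopflats$. The grading by $\nu(\hat{1})$ is inherited from $\hopflat$ because $\chi_D(\ker\psi)$ is spanned by elements that are $\nu$-homogeneous (tadpole diagrams have a well-defined loop number). The second grading, by the common length of a maximal chain, is well-defined on lattice generators by the previous paragraph and extends additively to products via the Cartesian product, since the maximal chain length of $L_1 \times L_2$ equals the sum of those of $L_1$ and $L_2$. Compatibility with the coproduct of eq.~\eqref{eqn:poset_cop} follows as in the proof of Theorem \ref{thm:gradthree}: for any $x \in P$ the intervals $[\hat{0},x]$ and $[x,\hat{1}]$ are themselves semimodular lattices whose maximal chain lengths sum to that of $P$, and the decorations split additively by the very definition of the coproduct.

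Finally, I would transport both gradings to $\hopffgs_D$ along the Hopf algebra morphism $\chi_D'$. The loop grading of eqs.~\eqref{grading_loop_1}--\eqref{grading_loop_3} is inherited from $\hopffg_D$ since $\ker\psi$ is spanned by loop-homogeneous generators. For the coradical grading I would verify that the coradical degree of $\Gamma \in \hopffgs_D$ equals the length of a maximal chain in $\sdsubdiagsn_D(\Gamma)$: using the compatibility $\chi_D' \circ \Delta_{\hopffgs_D} = \Delta_{\hopflats} \circ \chi_D'$ obtained by transporting Theorem \ref{thm:hopf_alg_morph} through the quotient, the iterated reduced coproduct $\widetilde{\Delta}_D^{n}$ vanishes on $\Gamma$ precisely when $\sdsubdiagsn_D(\Gamma)$ admits no chain of length $n+1$. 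The main obstacle I anticipate is the bookkeeping around the quotient: one must verify that the replacement of $\sdsubdiags_D$ by $\sdsubdiagsn_D$ in Definition \ref{def:snailfreehopf} is exactly what the commutative square defining $\chi_D'$ produces, so that the morphism used to transport the grading is genuinely well-defined on the quotient $\hopffgs_D$ rather than merely on $\hopffg_D$, and hence that passing to the tadpole-free Hopf algebra does not split any cover relation or introduce spurious chains in $\sdsubdiagsn_D(\Gamma)$.
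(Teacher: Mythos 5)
Your proposal is correct and follows the same route as the paper: the paper's entire proof is the single citation that every semimodular lattice is graded, relying on the preceding proposition exactly as you do. The additional bookkeeping you supply (transporting the gradings through $\chi_D'$ and the quotient) is left implicit in the paper but is consistent with its argument.
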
 \begin{proof} Every semimodular lattice is graded \cite[Proposition 3.3.2]{stanley1997}. \end{proof} The overlapping diagrams in $\sdsubdiagsn_D(\Gamma)$ are characterized by the external legs of $\Gamma$ they contain. As a consequence, there is a limited number of possibilities for primitive diagrams to be overlapping. A two-leg diagram can only be the join of at most two primitive overlapping diagrams and a three-leg diagram can only be the join of at most three primitive divergent overlapping diagrams. For four-leg diagrams in theories with only four-or-less-valent vertices the restriction is even more serve: In these cases, a four-leg diagram can only by the join of at most two primitive overlapping diagrams. \section{Applications to Zero-Dimensional QFT and diagram counting} \label{sec:applications} As an application of the lattice structure, the enumeration of some classes of primitive diagrams using techniques from zero-dimensional quantum field theories is presented. Enumerating diagrams using zero-dimensional QFT is an well-established procedure with a long history \cite{hurst1952enumeration,bender1976statistical,cvitanovic1978number,bessis1980quantum,argyres2001zero} and wide-ranging applications in mathematics \cite{kontsevich1992intersection,lando2013graphs}.  The characteristic property of zero-dimensional QFT is that every diagram in the perturbation expansion has the amplitude $1$. On the Hopf algebra of Feynman diagrams such a prescription can be formulated by the character or \textit{Feynman rule}:  \begin{align} &\phi:& &\hopffg_D& &\rightarrow& &\mathbb{Q}[\hbar],& & \\ && &\Gamma& &\mapsto& &\hbar^{h_1(\Gamma)},& && \end{align} which maps every Feynman diagram to $\hbar$ to the power of its number of loops in the ring of powerseries in $\hbar$. Clearly, $\phi$ is in $G^{{\hopffg_D}}_{\mathbb{Q}[\hbar]}$, the group of characters of $\hopffg_D$ to $\mathbb{Q}[\hbar]$.  Note, that we are not setting $D=0$ even though $\phi$ are the Feynman rules for zero-dimensional QFT. Every diagram would be `convergent' and the Hopf algebra $\hopffg_0$ trivial. It might be more enlightening to think about $\phi$ as toy Feynman rules which assign $1$ to every Feynman diagram without any respect to kinematics. This way we can still study the effects of renormalization on the amplitudes in an arbitrary dimension of spacetime. Moreover, $\phi$ `counts' the number of diagrams weighted by their symmetry factor.

We define the sum of all 1PI diagrams with a certain residue $r$ weighted by their symmetry factor as, \begin{align} X^r:= \sum \limits_{\substack{\Gamma \in \mathcal{T} \\ \res(\Gamma) = r}} \frac{\Gamma}{|\Aut(\Gamma)|}, \end{align} such that $\phi(X^r)$ is the generating function of these weighted diagrams with $\hbar$ as a counting variable. This generating function is the perturbation expansion of the Green's function for the residue $r$. 

The \textit{counter term map} \cite{connes2001renormalization} is defined as, \begin{align} S^R_D := R \circ \phi \circ S_D, \end{align} in a \textit{multiplicative renormalization scheme} $R$ with the antipode $S_D$ of $\hopffg_D$. $S^R_D$ is called the counter term map, because it maps the sum of all 1PI diagrams with a certain residue $r$ to the corresponding counter term, which when substituted into the Lagrangian renormalizes the QFT appropriately. The renormalized Feynman rules are given by the convolution product $\phi^R_D := S^R_D * \phi$. In zero-dimensional QFTs, they vanish on all generators of $\hopffg_D$ except on $\mathbb{I}$. This can be used to obtain differential equations for the $Z$-factors and other interesting quantities as was done in \cite{cvitanovic1978number,argyres2001zero}. 

For the toy Feynman rules $\phi$, there are no kinematics to choose a multiplicative renormalization scheme from. The renormalization will be modeled as usual in the scope of zero-dimensional-QFTs by setting $R=\id$. Consequently, $S^R_D = \phi \circ S_D$. 

The antipode in this formula is the point where the Hopf algebra structure enters the game. The lattice structure can be used to clarify the picture even more and to obtain quantitative results.

We define $\phi' \in G^{\hopflat}_{\mathbb{Q}[\hbar]}$, a Feynman rule on the Hopf algebra of decorated lattices, analogous to $\phi \in G^{{\hopffg_D}}_{\mathbb{Q}[\hbar]}$: \begin{align} &\phi':& &\hopflat& &\rightarrow& &\mathbb{Q}[\hbar],& & \\ && &(P, \nu)& &\mapsto& &\hbar^{\nu(\hat{1})},& && \end{align} which maps a decorated lattice to the value of the decoration of the largest element. Immediately, we can see that $\phi = \phi' \circ \chi_D$. For the counter term map, \begin{align} S^R_D &= \phi' \circ \chi_D \circ S_D, \end{align} is obtained. Using theorem \ref{thm:hopf_alg_morph}, we can commute $\chi_D$ and $S_D$, \begin{align} S^R_D &= \phi' \circ S \circ \chi_D. \end{align} For this reason, the evaluation of $S^R_D$ can be performed entirely in $\hopflat$. $S^R_D$ reduces to a combinatorial calculation on the lattice which is obtained by the Hopf algebra morphism $\chi_D$. The morphism $\phi' \circ S$ maps decorated lattices into the ring of powerseries in $\hbar$. Because $S$ respects the grading in $\nu(\hat{1})$, we can write \begin{align} \phi' \circ S (L, \nu) = \hbar^{\nu(\hat{1})} \zeta \circ S (L, \nu), \end{align} where $\zeta$ is the characteristic function $(L, \nu)\mapsto 1$. The map $\zeta \circ S$ is the \textit{Moebius function}, $\mu(\hat{0},\hat{1})$, on the lattice \cite{ehrenborg1996posets}. It is defined recursively as,  \begin{defn}[Moebius function] \begin{align} \label{eqn:moebius} \mu_P(x,y) &= \begin{cases} 1,&\text{if } x=y \\ - \sum \limits_{x \leq z < y} \mu_P(x,z) & \text{if } x < y. \end{cases} \end{align} for a poset $P$ and $x,y \in P$. \end{defn}

We summarize these observations in  \begin{thm} For zero-dimensional-QFT Feynman rules as $\phi$ above the counter term map takes the form \begin{align} {S^R}'(L,\nu) = \hbar^{\nu(\hat{1})} \mu_L( \hat{0}, \hat{1} ) \end{align} on the Hopf algebra of lattices, where $S^R_D = {S^R}'\circ \chi_D$ and with $\hat{0}$ and $\hat{1}$ the lower and upper bound of $L$. \end{thm}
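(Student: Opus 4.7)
The plan is to chain together the three equalities already motivated in the paragraphs immediately before the statement, and then verify that the one ingredient left unchecked, namely the evaluation of $\zeta\circ S$ on a decorated lattice, satisfies the defining recursion of the Moebius function. First I would unfold $S^R_D = \phi\circ S_D$ (since $R=\id$) and use $\phi = \phi'\circ\chi_D$ together with theorem \ref{thm:hopf_alg_morph} to commute $\chi_D$ past $S_D$: because Hopf algebra morphisms always intertwine antipodes (the antipode being uniquely determined by the bialgebra structure via $m\circ(S\otimes\id)\circ\Delta = \unit\circ\counit$), one obtains $S^R_D = \phi'\circ S_{\hopflat}\circ\chi_D$. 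Setting ${S^R}' := \phi'\circ S_{\hopflat}$ gives the identity $S^R_D = {S^R}'\circ\chi_D$ asserted in the statement.

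Next I would exploit the $\nu(\hat{1})$-grading of $\hopfpos$. Since $S_{\hopflat}$ preserves this grading, $S_{\hopflat}(L,\nu)$ is a $\mathbb{Q}$-linear combination of decorated lattices each with top decoration $\nu(\hat{1})$. Therefore $\phi'$ pulls out a single factor $\hbar^{\nu(\hat{1})}$, leaving
\begin{equation*}
{S^R}'(L,\nu) \;=\; \hbar^{\nu(\hat{1})}\,\zeta\circ S_{\hopflat}(L,\nu),
\end{equation*}
where $\zeta$ is the characteristic function sending every generator to $1$. The remaining task is to identify $\zeta\circ S_{\hopflat}(L,\nu)$ with $\mu_L(\hat{0},\hat{1})$.

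For this, I would feed $(L,\nu)$ with $|L|\ge 2$ into the defining identity $m\circ(S_{\hopflat}\otimes\id)\circ\Delta_{\hopfpos} = \unit\circ\counit$, which forces the right-hand side to vanish. Using the explicit coproduct from eq. \eqref{eqn:poset_cop} this becomes
\begin{equation*}
0 \;=\; \sum_{x\in L} S_{\hopflat}\bigl([\hat{0},x],\nu\bigr)\cdot\bigl([x,\hat{1}],\nu-\nu(x)\bigr).
\end{equation*}
Now I apply $\zeta$: since $\zeta$ is an algebra morphism sending every generator to $1$, and since the $x=\hat{1}$ term contributes $\zeta\circ S_{\hopflat}(L,\nu)$, I get the recursion
\begin{equation*}
\zeta\circ S_{\hopflat}(L,\nu) \;=\; -\sum_{x<\hat{1}} \zeta\circ S_{\hopflat}\bigl([\hat{0},x],\nu\bigr),
\end{equation*}
together with the base case $\zeta\circ S_{\hopflat}(\mathbb{I}_{\hopfpos})=1$ on the singleton poset. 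This is exactly the recursion \eqref{eqn:moebius} defining $\mu_L(\hat{0},\hat{1})$, so by induction on $|L|$ the two quantities coincide and the theorem follows.

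The proof is essentially bookkeeping, and the only subtle point is ensuring that the decoration $\nu$ does not interfere with the recursion: this is handled cleanly because the coproduct in $\hopfpos$ acts on the underlying poset independently of $\nu$ up to the additive shift $\nu\mapsto\nu-\nu(x)$ on the upper interval, and that shift is precisely what got absorbed into the $\hbar^{\nu(\hat{1})}$ prefactor. Thus the anticipated obstacle is not any deep computation but simply checking that the combinatorial identification $\zeta\circ S=\mu$ from the theory of incidence Hopf algebras (as developed by Schmitt) survives the passage to the decorated version of $\hopfpos$ used here.
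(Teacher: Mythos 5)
Your proposal is correct and follows essentially the same route as the paper: unfold $S^R_D=\phi\circ S_D$, use $\phi=\phi'\circ\chi_D$ and the fact that the Hopf algebra morphism $\chi_D$ intertwines antipodes to get $S^R_D=\phi'\circ S\circ\chi_D$, extract $\hbar^{\nu(\hat{1})}$ via the grading, and identify $\zeta\circ S$ with $\mu_L(\hat{0},\hat{1})$. The only difference is that where the paper cites the incidence-Hopf-algebra literature for $\zeta\circ S=\mu$, you verify it directly from the antipode identity and the recursion \eqref{eqn:moebius}, which is a correct and self-contained substitute.
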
 \begin{crll} \begin{align} \label{eqn:phimoebius} {S^R_D}(\Gamma) = \hbar^{h_1(\Gamma)} \mu_{\sdsubdiags_D(\Gamma)}( \hat{0}, \hat{1} ) \end{align} on the Hopf algebra of Feynman diagrams with $\hat{0}=\emptyset$ and $\hat{1}=\Gamma$, the lower and upper bound of $\sdsubdiags_D(\Gamma)$.  \end{crll}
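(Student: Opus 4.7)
The proposal is to derive the corollary as a direct consequence of the preceding theorem by unfolding the definition of $\chi_D$. First I would specialize the formula ${S^R}'(L,\nu) = \hbar^{\nu(\hat{1})}\mu_L(\hat{0},\hat{1})$ to the specific decorated lattice associated with a single 1PI Feynman diagram, namely $\chi_D(\Gamma) = (\sdsubdiags_D(\Gamma), \nu)$ with $\nu(\gamma) = h_1(\gamma)$ as defined in theorem \ref{thm:hopf_alg_morph}. The bounds of this lattice are $\hat{0} = \emptyset$ and $\hat{1} = \Gamma$, and the decoration of the top element is $\nu(\Gamma) = h_1(\Gamma)$.

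Next I would apply the factorization $S^R_D = {S^R}' \circ \chi_D$ established in the theorem. Evaluating both sides on a generator $\Gamma \in \hopffg_D$ yields
\begin{align}
S^R_D(\Gamma) = {S^R}'\bigl(\chi_D(\Gamma)\bigr) = {S^R}'\bigl(\sdsubdiags_D(\Gamma),\nu\bigr) = \hbar^{\nu(\hat{1})} \mu_{\sdsubdiags_D(\Gamma)}(\hat{0},\hat{1}),
\end{align}
and substituting $\nu(\hat{1}) = h_1(\Gamma)$ gives exactly eq. \eqref{eqn:phimoebius}. For products of 1PI diagrams the extension is immediate since both $S^R_D$ and the right-hand side are characters (multiplicative) with respect to the disjoint-union product, and Cartesian products of lattices multiply their Moebius functions.

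I do not anticipate a genuine obstacle here: the content has already been absorbed into the theorem, whose proof in turn rests on the identification of $\zeta \circ S$ with the Moebius function via its defining recursion eq. \eqref{eqn:moebius} together with the recursive definition of the antipode in a connected graded Hopf algebra. The only point to double-check is the bookkeeping of the grading factor $\hbar^{\nu(\hat{1})}$: because the antipode $S_D$ preserves the loop-number grading and $\phi$ records precisely this grading as a power of $\hbar$, the prefactor $\hbar^{h_1(\Gamma)}$ can be pulled out of the sum over subdiagrams, leaving the purely combinatorial quantity $\mu_{\sdsubdiags_D(\Gamma)}(\hat{0},\hat{1})$. This justifies writing the corollary as a clean separation of a loop-counting weight and a lattice-theoretic invariant.
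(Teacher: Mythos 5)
Your proposal is correct and matches the paper's (implicit) argument: the corollary is obtained exactly by composing the theorem's formula ${S^R}'(L,\nu)=\hbar^{\nu(\hat{1})}\mu_L(\hat{0},\hat{1})$ with $\chi_D(\Gamma)=(\sdsubdiags_D(\Gamma),\nu)$ and substituting $\nu(\hat{1})=h_1(\Gamma)$. The paper offers no separate proof because this is immediate, and your additional remarks on multiplicativity and the grading bookkeeping are consistent with the framework.
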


On these grounds, the counter terms in zero-dimensional QFT can be calculated only by computing the Moebius function on the lattice $\sdsubdiags_D(\Gamma)$. The Moebius function is a well studied object in combinatorics. There are especially sophisticated techniques to calculate the Moebius functions on lattices (see \cite{stanley1997,stern1999semimodular}).  For instance \begin{thm}{(Rota's crosscut theorem for atoms and coatoms (special case of \cite[cor. 3.9.4]{stanley1997}))} Let L be a finite lattice and $X$ its set of atoms and $Y$ its set of coatoms, then \begin{align} \mu_L(\hat{0}, \hat{1}) = \sum \limits_k (-1)^k N_k = \sum \limits_k (-1)^k M_k, \end{align} where $N_k$ is the number of $k$-subsets of $X$ whose join is $\hat{1}$ and  $M_k$ is the number of $k$-subsets of $Y$ whose meet is $\hat{0}$. \end{thm}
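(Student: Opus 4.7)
My plan is to prove the atom version $\mu_L(\hat{0},\hat{1}) = \sum_k (-1)^k N_k$ directly; the coatom version then follows by applying the same theorem to the opposite lattice $L^{\mathrm{op}}$, in which atoms and coatoms interchange, joins and meets swap, while the Möbius function is unchanged.

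The idea is to recognise the alternating sum as a function on $L$ that satisfies the defining recursion of the Möbius function. Define $f : L \to \mathbb{Z}$ by
\begin{align*}
f(x) := \sum_{\substack{A \subseteq X \\ \bigvee A = x}} (-1)^{|A|},
\end{align*}
with the convention $\bigvee \emptyset = \hat{0}$. Grouping subsets by cardinality gives $f(\hat{1}) = \sum_k (-1)^k N_k$, so it suffices to show $f(x) = \mu_L(\hat{0}, x)$ for every $x \in L$. Now $x \mapsto \mu_L(\hat{0}, x)$ is uniquely characterised among $\mathbb{Z}$-valued functions on $L$ by the identity $\sum_{x \leq y} \mu_L(\hat{0}, x) = \delta_{y,\hat{0}}$, which is immediate from the recursion in eq. \eqref{eqn:moebius}, so it is enough to verify the same identity for $f$.

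The key manipulation is an exchange of summation:
\begin{align*}
\sum_{x \leq y} f(x) = \sum_{\substack{A \subseteq X \\ \bigvee A \leq y}} (-1)^{|A|} = \sum_{A \subseteq X_y} (-1)^{|A|},
\end{align*}
where $X_y := \{a \in X : a \leq y\}$. The second equality uses that $\bigvee A \leq y$ iff every $a \in A$ is $\leq y$, which holds because the join is a least upper bound. The final alternating sum equals $1$ when $X_y = \emptyset$ and $0$ otherwise, by elementary inclusion--exclusion. In a finite lattice every $y > \hat{0}$ dominates at least one atom, so $X_y = \emptyset$ iff $y = \hat{0}$; this exactly matches the Möbius recursion, giving $f = \mu_L(\hat{0}, -)$ and in particular $f(\hat{1}) = \mu_L(\hat{0}, \hat{1})$.

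The main subtlety, rather than an obstacle, is bookkeeping on elements $x \in L$ that are \emph{not} expressible as a join of atoms: for such $x$ the sum defining $f(x)$ is empty and contributes $f(x) = 0$, which is consistent with the argument but must be noted explicitly, since the identification $f = \mu_L(\hat{0}, -)$ is asserted over all of $L$ and not only over the join-closure of $X$. Once the atom version is established, the coatom statement follows formally by order-reversal applied to the same argument.
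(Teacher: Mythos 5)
Your proof is correct. Note that the paper does not prove this statement at all---it is imported verbatim from Stanley \cite[cor. 3.9.4]{stanley1997} as a tool---so there is no in-paper argument to compare against; what you have supplied is the standard self-contained derivation of the atom crosscut corollary. The mechanism is sound at every step: defining $f(x)=\sum_{A\subseteq X,\,\bigvee A=x}(-1)^{|A|}$, the interchange of summation gives $\sum_{x\leq y}f(x)=\sum_{A\subseteq X_y}(-1)^{|A|}$, which is $1$ iff $X_y=\emptyset$ iff $y=\hat{0}$ (every $y>\hat{0}$ in a finite lattice lies above a minimal element of $(\hat{0},y]$, which is an atom), and this is exactly the recursion \eqref{eqn:moebius} specialised to $\mu_L(\hat{0},\cdot)$, whose solution is unique by induction along any linear extension. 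You also correctly handle the bookkeeping that others often gloss over: the empty set contributes the $k=0$ term via $\bigvee\emptyset=\hat{0}$, and elements of $L$ that are not joins of atoms receive $f(x)=0$, which is consistent with the identification $f=\mu_L(\hat{0},-)$. The one point I would spell out rather than assert is the dualization step for the coatom half: you need $\mu_{L^{\mathrm{op}}}(\hat{0}_{\mathrm{op}},\hat{1}_{\mathrm{op}})=\mu_L(\hat{0},\hat{1})$, i.e.\ $\mu_{P^{\mathrm{op}}}(y,x)=\mu_P(x,y)$, which follows in one line because the zeta matrix of $L^{\mathrm{op}}$ is the transpose of that of $L$ and matrix inversion commutes with transposition; as written, ``the M\"obius function is unchanged'' is stated without justification. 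This is a cosmetic omission, not a gap.
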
 With this theorem the Moebius functions of all the lattices appearing in this article can be calculated very efficiently.

In many cases, an even simpler theorem, which is a special case of the previous one, applies: \begin{thm}{(Hall's theorem \cite[cor. 4.1.7.]{stern1999semimodular})} If in a lattice $\hat{1}$ is not a join of atoms or $\hat{0}$ is not a meet of coatoms, then $\mu(\hat{0}, \hat{1}) = 0$. \end{thm}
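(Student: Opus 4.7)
The plan is to derive Hall's theorem as an essentially immediate corollary of Rota's crosscut theorem stated just above. By that theorem, for a finite lattice $L$ with atom set $X$ and coatom set $Y$ one has
\begin{align}
\mu_L(\hat{0},\hat{1}) = \sum_k (-1)^k N_k = \sum_k (-1)^k M_k,
\end{align}
where $N_k$ counts $k$-subsets $S \subseteq X$ with $\bigvee S = \hat{1}$ and $M_k$ counts $k$-subsets $T \subseteq Y$ with $\bigwedge T = \hat{0}$. Both expressions are available and we will use whichever matches the hypothesis.

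First I would handle the case that $\hat{1}$ is not a join of atoms: by assumption there is no $S \subseteq X$ with $\bigvee S = \hat{1}$, so every $N_k$ is zero and the first alternating sum collapses to $\mu_L(\hat{0},\hat{1}) = 0$. Dually, if $\hat{0}$ is not a meet of coatoms, every $M_k$ vanishes and the second alternating sum gives $\mu_L(\hat{0},\hat{1}) = 0$. The two cases are related by order-reversal, so it suffices to write out one and invoke duality for the other.

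The one subtlety worth checking, rather than a real obstacle, is the $k=0$ term: the empty join is conventionally $\hat{0}$ and the empty meet is $\hat{1}$. Since the statement of the theorem implicitly requires $\hat{0} \neq \hat{1}$ (otherwise the hypothesis ``$\hat{1}$ is not a join of atoms'' fails vacuously, as $\hat{1} = \hat{0}$ equals the empty join), we have $N_0 = 0$ in the first case and $M_0 = 0$ in the second, so the alternating sums truly are identically zero under the hypothesis. No further argument is needed.
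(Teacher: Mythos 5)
Your derivation is correct and matches the paper's (implicit) approach exactly: the paper gives no proof, merely citing Stern and noting that Hall's theorem ``is a special case of the previous one,'' i.e.\ of Rota's crosscut theorem, which is precisely the deduction you carry out. Your remark on the $k=0$ term and the empty join/meet convention is a sensible extra check and does not change anything.
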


In corollary \ref{crll:joinirreducible}, we proved that every vertex-type subdiagram in a QFT with only three-valent vertices is join-irreducible. Hence, it is also not a join of atoms except if it is an atom itself.

\begin{thm} \label{thm:threeQFTmoebvert} In a renormalizable QFT with only three-or-less-valent vertices and $\Gamma$ a vertex-type s.d.\ diagram (i.e. $\Hext(\Gamma)=3$): \begin{align} {S^R_D}(\Gamma)&= \begin{cases} - \hbar^{h_1(\Gamma)}& \text{ if } \Gamma \text{ is primitive} \\ 0& \text{ if } \Gamma \text{ is not primitive.} \end{cases} \end{align} \end{thm}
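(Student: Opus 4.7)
The plan is to compute the Möbius function on $\sdsubdiags_D(\Gamma)$ directly using the two cases, and then apply equation \eqref{eqn:phimoebius}.

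First I would dispose of the primitive case. If $\Gamma$ is primitive, then $\sdsubdiags_D(\Gamma) = \{\emptyset, \Gamma\}$ consists of only two elements, and the recursive definition in eq. \eqref{eqn:moebius} immediately gives $\mu(\hat{0}, \hat{1}) = -\mu(\hat{0}, \hat{0}) = -1$. Substituting into eq. \eqref{eqn:phimoebius} yields $S^R_D(\Gamma) = -\hbar^{h_1(\Gamma)}$.

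The non-primitive case is where the vertex-type hypothesis does the real work, and here I would invoke Hall's theorem: it suffices to show that $\hat{1} = \Gamma$ is not a join of atoms of $\sdsubdiags_D(\Gamma)$. Since $\Gamma$ is a vertex-type s.d.\ diagram in a three-valent theory, corollary \ref{crll:joinirreducible} tells us that $\Gamma$ is join-irreducible in $\sdsubdiags_D(\Gamma)$. Thus $\Gamma$ cannot equal $a_1 \join \cdots \join a_k$ with $k \geq 2$ atoms, because iterating the join-irreducibility definition would force $\Gamma$ to coincide with one of the $a_i$, contradicting $k \geq 2$. The remaining possibility would be that $\Gamma$ itself is an atom, i.e.\ that $\Gamma$ covers $\hat{0} = \emptyset$; but then the interval $[\emptyset, \Gamma]$ would contain only $\emptyset$ and $\Gamma$, making $\Gamma$ primitive, against assumption. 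Hence $\Gamma$ is not a join of atoms, Hall's theorem gives $\mu(\hat{0}, \hat{1}) = 0$, and eq. \eqref{eqn:phimoebius} yields $S^R_D(\Gamma) = 0$.

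The only subtle step is the exclusion of the single-atom scenario in the non-primitive case, since Hall's theorem quantifies over joins of arbitrary subsets of atoms (including singletons). I expect this to be the main point to articulate carefully, but it reduces immediately to the observation that ``$\Gamma$ is an atom of $\sdsubdiags_D(\Gamma)$'' is equivalent to ``$\Gamma$ is primitive,'' so the dichotomy between the two cases of the theorem is exhaustive and the argument closes cleanly.
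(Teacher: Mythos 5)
Your proof is correct and follows essentially the same route as the paper: corollary \ref{crll:joinirreducible} gives join-irreducibility of $\Gamma$, non-primitivity rules out $\Gamma$ being an atom, and Hall's theorem then kills the M\"obius function, while the primitive case is a direct two-element-lattice computation. Your explicit handling of the singleton-atom subtlety is a slightly more careful articulation of the paper's one-line remark that a non-primitive $\Gamma$ ``does not cover $\hat{0}$,'' but the argument is the same.
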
 \begin{proof} In both cases the element $\hat{1}=\Gamma$ in the lattice $\sdsubdiags_D(\Gamma)$ is join-irreducible (corollary \ref{crll:joinirreducible}). If $\Gamma$ is primitive $\phi \circ S(\Gamma) = -\phi (\Gamma) = -\hbar^{h_1(\Gamma)}$. If $\Gamma$ is not primitive, it does not cover $\hat{0}$. This implies that $\hat{1}$ is not a join of atoms. Therefore, $\mu_{\sdsubdiags_D(\Gamma)}( \hat{0}, \hat{1} )$ vanishes and so does $S^R_D(\Gamma)$ in accordance to eq. \eqref{eqn:phimoebius}.  \end{proof}

\begin{crll} In a renormalizable QFT with only three-or-less-valent vertices and $r\in \mathcal{R}_v$ a vertex-type residue: \begin{align} {S^R_D}(X^r) = - \phi \circ P_{\text{Prim}(\hopffg_D)}(X^r), \end{align} where $P_{\text{Prim}(\hopffg_D)}$ projects onto the primitive elements of $\hopffg_D$. \end{crll}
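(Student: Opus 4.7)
The plan is to evaluate $S^R_D$ term by term on the expansion of $X^r$ and invoke Theorem \ref{thm:threeQFTmoebvert}. First I would unfold the definition: $X^r = \sum_{\Gamma} \Gamma/|\Aut(\Gamma)|$ is a formal $\mathbb{Q}$-linear combination of 1PI diagrams $\Gamma \in \mathcal{T}$ with $\res(\Gamma)=r$. Since $S^R_D = \phi \circ S_D$ is linear, it suffices to work one generator at a time:
\begin{align*}
S^R_D(X^r) \;=\; \sum_{\substack{\Gamma \in \mathcal{T}\\ \res(\Gamma)=r}} \frac{S^R_D(\Gamma)}{|\Aut(\Gamma)|}.
\end{align*}

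Next, I would check that the hypotheses of Theorem \ref{thm:threeQFTmoebvert} apply to every summand: the ambient QFT has only three-or-less-valent vertices, and $r \in \mathcal{R}_v$ is vertex-type, so $|\Hext(\Gamma)|=3$ for every $\Gamma$ appearing in $X^r$. The theorem then yields the dichotomy $S^R_D(\Gamma) = -\hbar^{h_1(\Gamma)}$ if $\Gamma$ is primitive and $S^R_D(\Gamma) = 0$ otherwise. Only primitive diagrams contribute, so
\begin{align*}
S^R_D(X^r) \;=\; -\sum_{\substack{\Gamma \in \text{Prim}(\hopffg_D)\\ \res(\Gamma)=r}} \frac{\hbar^{h_1(\Gamma)}}{|\Aut(\Gamma)|}.
\end{align*}

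Finally, I would match the right-hand side. The projector $P_{\text{Prim}(\hopffg_D)}$ acts on the basis of connected 1PI generators by retaining exactly those lying in $\ker\widetilde{\Delta}_D$, so $P_{\text{Prim}(\hopffg_D)}(X^r)$ equals the restriction of the sum defining $X^r$ to primitive generators. Since $\phi(\Gamma) = \hbar^{h_1(\Gamma)}$ by definition, the displayed expression is precisely $-\phi \circ P_{\text{Prim}(\hopffg_D)}(X^r)$, completing the identification.

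There is no genuine obstacle: all the real work is already packaged into Theorem \ref{thm:threeQFTmoebvert} (and ultimately into Corollary \ref{crll:joinirreducible} together with Hall's theorem, which force the Moebius function to vanish whenever $\Gamma$ is vertex-type but not primitive). The only thing to be careful about is the bookkeeping in the last step, namely that each generator of $X^r$ is a single connected 1PI diagram, so the projection onto $\text{Prim}(\hopffg_D)$ commutes with the symmetry-weighted sum term-by-term and no product generators need to be considered.
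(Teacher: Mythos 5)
Your proposal is correct and is essentially the paper's (implicit) argument: the corollary is stated without proof as an immediate consequence of Theorem \ref{thm:threeQFTmoebvert}, and your term-by-term application of that theorem to the symmetry-weighted expansion of $X^r$, followed by identifying the surviving primitive part with $-\phi \circ P_{\text{Prim}(\hopffg_D)}(X^r)$, is exactly the intended reasoning.
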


Summarizing, we established that in a theory with only three-or-less-valent vertices the counter term ${S^R_D}(X^r)$ counts the number of primitive diagrams if $r\in \mathcal{R}_v$. This fact has been used indirectly in \cite{cvitanovic1978number} to obtain the generating functions for primitive vertex diagrams in $\phi^3$. 

The conventional $Z$-factor for the respective vertex is $Z^r = 1 + r! {S^R_D}(X^r)$, where the factorial of the residue $r=\prod \limits_{\varphi \in \Phi} \varphi^{n_\varphi}$ is $r! = \prod \limits_{\varphi \in \Phi} n_{\varphi}!$. This factorial is necessary to return to the leg-fixed case.

Further exploitation of the lattice structure leads to a statement on propagator-type diagrams in such theories: \begin{thm} In a renormalizable QFT with three-or-less-valent vertices the propagator-type diagrams $\Gamma$, for which $S^R_D(\Gamma) \neq 0$, must have the lattice structure  ${ \def \scale{2ex} \begin{tikzpicture}[x=\scale,y=\scale,baseline={([yshift=-.5ex]current bounding box.center)}] \begin{scope}[node distance=1] \coordinate (top) ; \coordinate [below= of top] (bottom); \draw (top) -- (bottom); \filldraw[fill=white, draw=black] (top) circle(2pt); \filldraw[fill=white, draw=black] (bottom) circle(2pt); \end{scope} \end{tikzpicture} } $ or  ${ \def \scale{2ex} \begin{tikzpicture}[x=\scale,y=\scale,baseline={([yshift=-.5ex]current bounding box.center)}] \begin{scope}[node distance=1] \coordinate (top) ; \coordinate [below left= of top] (v1); \coordinate [below right= of top] (v2); \coordinate [below left= of v2] (bottom); \draw (top) -- (v1); \draw (top) -- (v2); \draw (v1) -- (bottom); \draw (v2) -- (bottom); \filldraw[fill=white, draw=black] (top) circle(2pt); \filldraw[fill=white, draw=black] (v1) circle(2pt); \filldraw[fill=white, draw=black] (v2) circle(2pt); \filldraw[fill=white, draw=black] (bottom) circle(2pt); \end{scope} \end{tikzpicture} } $. \end{thm}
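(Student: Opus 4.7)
The plan is to translate the claim into one about the Moebius function: by eq.~\eqref{eqn:phimoebius}, ${S^R_D}(\Gamma)\neq 0$ is equivalent to $\mu_L(\hat 0,\hat 1)\neq 0$ where $L := \sdsubdiags_D(\Gamma)$. If $\Gamma$ is primitive then $L=\{\hat 0,\hat 1\}$ is the two-element chain and we are done, so only the non-primitive case needs work. In that case I plan to combine Hall's theorem with the semimodularity of $L$ from proposition~\ref{prop:semimodular} and the external-leg bookkeeping from corollary~\ref{crll:1PIsdHext} and lemma~\ref{lmm:propagatorunion} to force the diamond.

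First, Hall's theorem gives that $\hat 1=\Gamma$ is a join of atoms, and since $\Gamma$ itself is not an atom there must exist two incomparable primitive atoms $\gamma_1,\gamma_2$ with $\gamma_1\join\gamma_2=\Gamma$. These atoms cannot be disjoint because $\Gamma$ is 1PI and connected, so they overlap; lemma~\ref{lmm:propagatorunion} then gives that both are vertex-type (consistent with $\Gamma$ being propagator-type). Primitivity implies that $\gamma_1$ and $\gamma_2$ share no proper subdivergence, so $\gamma_1\meet\gamma_2=\hat 0$ in $L$. The submodular rank inequality, valid in any semimodular lattice, yields $r(\Gamma)\leq r(\gamma_1)+r(\gamma_2)-r(\gamma_1\meet\gamma_2)=2$, and non-primitivity forces equality. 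Hence $L$ has height exactly two, so its elements are $\hat 0$, a collection of atoms (which in this situation are also coatoms), and $\hat 1$.

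The final step is to show that there are exactly two atoms. Corollary~\ref{crll:1PIsdHext} gives $|\Hext(\gamma)\cap\Hint(\Gamma)|\geq 2$ for every vertex-type atom $\gamma$, and since $|\Hext(\gamma)|=3$ this means at most one external half-edge of $\gamma$ lies in $\Hext(\Gamma)$. In a height-two lattice every pair of distinct atoms joins to $\hat 1$, so for any two atoms $\gamma,\gamma'$ the union $\gamma\cup\gamma'$ must cover every half-edge of $\Gamma$; in particular the two half-edges of $\Hext(\Gamma)$ must appear among $\Hext(\gamma)\cup\Hext(\gamma')$, which forces the two atoms to split $\Hext(\Gamma)$ with exactly one half-edge each. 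A hypothetical third atom $\gamma_3$ would then have to be complementary in $\Hext(\Gamma)$ to both $\gamma_1$ and $\gamma_2$ simultaneously, forcing it to contain both external half-edges of $\Gamma$ and violating the at-most-one bound. So $L$ has precisely two atoms and is the diamond.

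The main obstacle is precisely this atom count. A naive Moebius computation in a height-two lattice with $n$ atoms all pairwise joining to $\hat 1$ gives $\mu=n-1$, which is nonzero for every $n\geq 2$; lattice theory alone therefore does not single out the diamond among all possible ``fan'' shapes. The actual restriction comes from the graph-theoretic input about how many external half-edges of $\Gamma$ a given primitive vertex-type subdivergence can inherit, and organising this bookkeeping cleanly is the delicate point of the argument.
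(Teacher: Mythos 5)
Your overall strategy is sound and genuinely different from the paper's: the paper deduces from $\mu_L(\hat 0,\hat 1)\neq 0$ that $L$ is complemented (citing \cite[Cor. 4.1.11]{stern1999semimodular}), so that every join-irreducible element --- in particular every vertex-type subdiagram, by corollary \ref{crll:joinirreducible} --- must be an atom, and then invokes the structure of fig. \ref{fig:blob33}; you instead try to pin the height at two via the semimodular rank inequality and then do the external-leg counting explicitly. The second half of your argument (each atom carries at most one of the two external half-edges of $\Gamma$, any two atoms joining to $\hat 1$ must split them, hence there are at most two atoms) is correct and is essentially the content of fig. \ref{fig:blob33}, worked out more carefully than the paper does.

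There is, however, a genuine gap at your very first deduction. Hall's theorem only gives that $\hat 1$ is a join of \emph{some} set of atoms, and ``$\Gamma$ is not an atom'' only tells you that this set has at least two elements --- it does not produce two atoms whose \emph{pairwise} join is already $\hat 1$. A Boolean lattice with three atoms has $\mu=-1\neq 0$ and $\hat 1$ a join of three atoms, no two of which join to $\hat 1$; your closing paragraph worries about ``fans'' of atoms, but misses that the same phenomenon undermines the step on which your height-two computation rests. The gap is closable with the tools you already use, but the external-leg bookkeeping has to come \emph{before} the rank argument, not after. Concretely: among any set of atoms whose union is $\Gamma$, some atom must carry the external half-edge $h_1$ of $\Gamma$ and some other atom must carry $h_2$ (no single atom can carry both, by corollary \ref{crll:1PIsdHext}). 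An atom carrying an external half-edge of $\Gamma$ cannot overlap an atom carrying none, nor one carrying the \emph{same} half-edge: by lemma \ref{lmm:propagatorunion} the union of an overlapping pair is a propagator-type 1PI subdiagram, which would then have at most one external half-edge internal to $\Gamma$, contradicting corollary \ref{crll:1PIsdHext} unless it equals $\Gamma$ --- impossible, since it misses the other external leg. Connectedness of $\Gamma$ then forces an $h_1$-carrier $\gamma_1$ to overlap an $h_2$-carrier $\gamma_2$, and $\gamma_1\cup\gamma_2$ is a propagator-type subdiagram containing both external legs of $\Gamma$, hence equal to $\Gamma$ by corollary \ref{crll:1PIsdHext} once more. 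Only with this lemma in hand does your rank-two computation, and thus the rest of your proof, go through.
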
 \begin{proof} A propagator-type diagram $\Gamma$ either has a maximal forest which is the union of propagator diagrams, has at least two vertex-type subdiagrams or it is the primitive diagram of the topology $\oneloopprop$. In the first case $\Gamma$ is join-irreducible and $S^R_D(\Gamma) = 0$. In the third case the corresponding lattice is ${ \def \scale{2ex} \begin{tikzpicture}[x=\scale,y=\scale,baseline={([yshift=-.5ex]current bounding box.center)}] \begin{scope}[node distance=1] \coordinate (top) ; \coordinate [below= of top] (bottom); \draw (top) -- (bottom); \filldraw[fill=white, draw=black] (top) circle(2pt); \filldraw[fill=white, draw=black] (bottom) circle(2pt); \end{scope} \end{tikzpicture} }$. In the second case, $\Gamma$ covers at least one vertex diagram $\gamma$ which is join-irreducible. Every lattice $L$ with $\mu_L(\hat{0}, \hat{1})\neq0$ is complemented \cite[Cor. 4.1.11]{stern1999semimodular}. In a complemented lattice $L$, there is a $y \in L$ for every $x \in L$ such that $x \join y = \hat{1}$ and $y \meet x = \hat{0}$. For this reason, all the join-irreducible elements of $L$ must be atoms if $\mu_L(\hat{0}, \hat{1})\neq0$. As was shown in the proof of proposition \ref{prop:semimodular}, a propagator cannot be the join of more than two primitive diagrams. Accordingly, ${ \def \scale{2ex} \begin{tikzpicture}[x=\scale,y=\scale,baseline={([yshift=-.5ex]current bounding box.center)}] \begin{scope}[node distance=1] \coordinate (top) ; \coordinate [below left= of top] (v1); \coordinate [below right= of top] (v2); \coordinate [below left= of v2] (bottom); \draw (top) -- (v1); \draw (top) -- (v2); \draw (v1) -- (bottom); \draw (v2) -- (bottom); \filldraw[fill=white, draw=black] (top) circle(2pt); \filldraw[fill=white, draw=black] (v1) circle(2pt); \filldraw[fill=white, draw=black] (v2) circle(2pt); \filldraw[fill=white, draw=black] (bottom) circle(2pt); \end{scope} \end{tikzpicture} } $ is the only possible lattice if $\Gamma$ is not primitive. \end{proof}

The $Z$-factors for the propagators can also be obtained using the last theorem.  To do this, the Moebius function for each propagator diagram must be calculated using the form of the lattices and eq. \eqref{eqn:moebius}. The Moebius functions for the vertex-type diagrams are known from theorem \ref{thm:threeQFTmoebvert}.  \begin{expl} In a renormalizable QFT with only three-or-less-valent vertices and $r\in \mathcal{R}_e$, a propagator-type residue: \begin{align} \begin{split} {S^R_D}(X^r) &= \hbar \sum \limits_{\substack{\res \Gamma_P = r \\ \left\{v_1,v_2\right\} = V(\Gamma_P)}} \frac{1}{\Aut \Gamma_P} \left( -1 + \frac{ \res(v_1)!}{2} \phi \circ P_{\text{Prim}(\hopffg_D)}(X^{\res(v_1)}) \right. + \\ &+ \left. \frac{\res(v_2)!}{2}\phi \circ P_{\text{Prim}(\hopffg_D)}(X^{\res(v_2)}) \right) \end{split} \end{align} where the sum is over all primitive propagator diagrams $\Gamma_P$ with a topology as $\oneloopprop$ and exactly two vertices $v_1,v_2$. Of course, this sum is finite.

The factorials of the residues must be included to fix the external legs of the vertex-type subdiagrams. The factor of $\frac12$ is necessary, because every non-primitive diagram, which contributes to the counter term, has exactly two maximal forests. This is an example of a simple Dyson-Schwinger equation in the style of \cite{Kreimer2006}.

The conventional $Z$-factor for the propagator is $Z^r = 1 + r! {S^R_D}(X^r)$. \end{expl}

Although the counter term map for renormalizable QFTs with only three-or-less-valent vertices enumerates primitive diagrams, we cannot assume that the situation is similar in a more general setting with also four-valent vertices. A negative result in this direction was obtained in \cite[p. 27]{argyres2001zero} by Argyres et al. They observed that the vertex counter term in zero-dimensional $\phi^4$ does not count primitive diagrams. 

\begin{figure} \ifdefined\nodraft \subcaptionbox{Structure of overlapping divergences of four-leg diagrams in theories with four-or-less-valent vertices.\label{fig:fourvalblob}}[.45\linewidth]{ \def \scale {5ex} \begin{tikzpicture}[x=\scale,y=\scale,baseline={([yshift=-.5ex]current bounding box.center)}] \begin{scope}[node distance=1] \coordinate (i0); \coordinate[below=1 of i0] (i1); \coordinate[right= of i0] (vlt); \coordinate[right= of i1] (vlb); \coordinate[right=2 of vlt] (vmt); \coordinate[right=2 of vlb] (vmb); \coordinate[right=2 of vmt] (vrt); \coordinate[right=2 of vmb] (vrb); \coordinate[right= of vrt] (o0); \coordinate[right= of vrb] (o1); \draw (i0) -- (vlt); \draw (i1) -- (vlb); \draw (vmt) -- (vlt); \draw (vmb) -- (vlb); \draw (vmt) -- (vrt); \draw (vmb) -- (vrb); \draw (o0) -- (vlt); \draw (o1) -- (vlb); \draw [dashed] ($(vlt) + (1,.5)$) -- ($(vlb) + (1,-.5)$); \draw [dashed] ($(vmt) + (1,.5)$) -- ($(vmb) + (1,-.5)$); \draw [fill=white] ($(vlt) + (.5,.5)$) rectangle ($(vlb) - (.5,.5)$) ; \draw [fill=white,thick,pattern=north west lines] ($(vlt) + (.5,.5)$) rectangle ($(vlb) - (.5,.5)$) ; \draw [fill=white] ($(vmt) + (.5,.5)$) rectangle ($(vmb) - (.5,.5)$) ; \draw [fill=white,thick,pattern=north east lines] ($(vmt) + (.5,.5)$) rectangle ($(vmb) - (.5,.5)$) ; \draw [fill=white,thick,pattern=north west lines] ($(vmt) + (.5,.5)$) rectangle ($(vmb) - (.5,.5)$) ; \draw [fill=white] ($(vrt) + (.5,.5)$) rectangle ($(vrb) - (.5,.5)$) ; \draw [fill=white,thick,pattern=north west lines] ($(vrt) + (.5,.5)$) rectangle ($(vrb) - (.5,.5)$) ; \end{scope} \end{tikzpicture} }
\subcaptionbox{Chain of overlapping divergences of the type in fig. \ref{fig:fourvalblob}, which will evaluate to a non-zero Moebius function.\label{fig:fourvalblobchain}}[.45\linewidth]{ \def \scale {5ex} \begin{tikzpicture}[x=\scale,y=\scale,baseline={([yshift=-.5ex]current bounding box.center)}] \begin{scope}[node distance=1] \coordinate (i0); \coordinate[below=1 of i0] (i1); \coordinate[right= of i0] (vlt); \coordinate[right= of i1] (vlb); \coordinate[right=2 of vlt] (vmt); \coordinate[right=2 of vlb] (vmb); \coordinate[right=2 of vmt] (vmmt); \node[below=.5 of vmmt] (vmm) {$\ldots$}; \coordinate[right=2 of vmb] (vmmb); \coordinate[right=2 of vmmt] (vrt); \coordinate[right=2 of vmmb] (vrb); \coordinate[right= of vrt] (o0); \coordinate[right= of vrb] (o1); \draw (i0) -- (vlt); \draw (i1) -- (vlb); \draw (vmt) -- (vlt); \draw (vmb) -- (vlb); \draw[dotted] (vmt) -- (vrt); \draw[dotted] (vmb) -- (vrb); \draw (vmt) -- ($(vmmt) - (1,0)$); \draw (vmb) -- ($(vmmb) - (1,0)$); \draw (vrt) -- ($(vmmt) + (1,0)$); \draw (vrb) -- ($(vmmb) + (1,0)$); \draw (o0) -- (vrt); \draw (o1) -- (vrb); \draw [fill=white] ($(vlt) + (.5,.5)$) rectangle ($(vlb) - (.5,.5)$) ; \draw [fill=white,thick,pattern=north west lines] ($(vlt) + (.5,.5)$) rectangle ($(vlb) - (.5,.5)$) ; \draw [fill=white] ($(vmt) + (.5,.5)$) rectangle ($(vmb) - (.5,.5)$) ; \draw [fill=white,thick,pattern=north west lines] ($(vmt) + (.5,.5)$) rectangle ($(vmb) - (.5,.5)$) ; \draw [fill=white] ($(vrt) + (.5,.5)$) rectangle ($(vrb) - (.5,.5)$) ; \draw [fill=white,thick,pattern=north west lines] ($(vrt) + (.5,.5)$) rectangle ($(vrb) - (.5,.5)$) ; \end{scope} \end{tikzpicture} }
\else

MISSING IN DRAFT MODE

\fi \caption{Overlapping divergences for diagrams with four legs in theories with only four-or-less-valent vertices.} \end{figure}
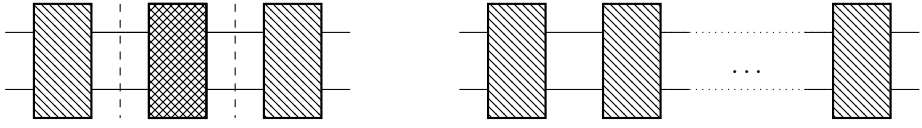

From the perspective of lattice theory this result can be explained. The only way in which overlapping divergences can appear in a diagram with four legs in a QFT with only four-or-less-valent vertices is depicted in fig. \ref{fig:fourvalblob}. The dashed lines indicate the possible cuts to separate one overlapping divergence from the other. To obtain a Feynman diagram $\Gamma$ with $\mu_{\sdsubdiags_D(\Gamma)}(\hat{0},\hat{1})\neq 0$, the blob in the middle must be either of the same overlapping type as fig. \ref{fig:fourvalblob} or superficially convergent. Otherwise, a join-irreducible element would be generated, which would imply $\mu_{\sdsubdiags_D(\Gamma)}(\hat{0},\hat{1}) = 0$.  The possible non-primitive diagrams with four legs, which give a non-vanishing Moebius function are consequently of the form depicted in fig. \ref{fig:fourvalblobchain}, where each blob must be replaced by a superficially convergent four-leg diagram such that the diagram remains 1PI. The lattice corresponding to this structure of overlapping divergences is a boolean lattice. Every superficially divergent subdiagram can be characterized by the particular set partition of `blocks' it contains. This gives a bijection from $\{0,1\}^n$ to all possible subdivergences. The Moebius function of boolean lattices evaluates to $(-1)^n$, where $n$ is the number of atoms \cite[Ex. 3.8.4.]{stanley1997}.  Accordingly, the structure of overlapping four-leg diagrams depends on the number of superficially convergent four-leg diagrams. The situation is especially simple in $\phi^4$-theory: \begin{expl}[Overlapping vertex-type diagrams in $\phi^4$-theory] The only superficially convergent four-leg diagram in $\phi^4$-theory is the single four-leg vertex $\fourvtx$ and so the only vertex-type s.d.\ diagrams $\Gamma$ which give $\mu_{\sdsubdiags_D(\Gamma)}(\hat{0},\hat{1})\neq 0$ are  \ifdefined\nodraft ${ \def \scale {1ex} \begin{tikzpicture}[x=2ex,y=2ex,baseline={([yshift=-.5ex]current bounding box.center)}] \coordinate (v0); \coordinate [right=.5 of v0] (vm1); \coordinate [right=.5 of vm1] (v1); \node [right=.5 of v1] (v2) {$\ldots$}; \coordinate [right=.5 of v1] (v1m); \coordinate [right=.01 of v2] (v2m); \coordinate [right=.5 of v2m] (v3); \coordinate [right=.5 of v3] (vm2); \coordinate [right=.5 of vm2] (v4); \coordinate [above left=.5 of v0] (i0); \coordinate [below left=.5 of v0] (i1); \coordinate [above right=.5 of v4] (o0); \coordinate [below right=.5 of v4] (o1); \draw (vm1) circle(.5); \draw (vm2) circle(.5); \draw (i0) -- (v0); \draw (i1) -- (v0); \draw (o0) -- (v4); \draw (o1) -- (v4); \draw ([shift=(90:.5)]v1m) arc (90:270:.5); \draw ([shift=(-90:.5)]v2m) arc (-90:90:.5); \filldraw (v0) circle(1pt); \filldraw (v1) circle(1pt); \filldraw (v3) circle(1pt); \filldraw (v4) circle(1pt); \end{tikzpicture} }$ \else

MISSING IN DRAFT MODE

\fi , the chains of one-loop diagrams. Their generating function is $\frac18 \sum \limits_{L\geq 0} \frac{\hbar^L }{2^{L}}$ (every diagram is weighted by its symmetry factor) and the counter term map in $\phi^4$ evaluates to, \begin{align} \label{eqn:phird} {S^R_D}(X^\fourvtx) = - \phi \circ P_{\text{Prim}(\hopffg)}(X^{\fourvtx}) + \frac18 \sum \limits_{L\geq 2} (-1)^L \left( \frac{\hbar }{2} \right)^L. \end{align} Note again, that in this setup the legs of the diagrams are not fixed as explained in section \ref{sec:feynmandiagrams}. To reobtain the numbers for the case with fixed legs, the generating function must be multiplied with the typical value $4!=24$.  The formula for the usual vertex $Z$-factor is $Z^\fourvtx = 1+ 4! {S^R_D}(X^\fourvtx)$.

Using this result, we can indeed use the counter terms of zero-dimensional $\phi^4$-theory to calculate the number of primitive diagrams. We merely must include the correction term on the right-hand side of eq. \eqref{eqn:phird}. \end{expl}

\begin{expl}[Overlapping four-leg-vertex diagrams in pure Yang-Mills theory] In pure Yang-Mills theory there can be either the single four-valent vertex $\fourvtxgluon$ or two three-valent vertices joined by a propagator $\twothreevtxgluon$ as superficially convergent four-leg diagrams. Only chains of diagrams as in fig. \ref{fig:fourvalblobchain} or primitive diagrams give a non-zero ${S^R_D}(\Gamma)$. At two loop for instance, the non-primitive diagrams  \ifdefined\nodraft \begin{align*} { \def \scale{3ex} \begin{tikzpicture}[x=\scale,y=\scale,baseline={([yshift=-.5ex]current bounding box.center)}] \begin{scope}[node distance=1] \coordinate (v0); \coordinate [right=.707 of v0] (vm); \coordinate [above=.5 of vm] (vt1); \coordinate [below=.5 of vm] (vb1); \coordinate [right=1 of vt1] (vt2); \coordinate [right=1 of vb1] (vb2); \coordinate [above left=.5 of v0] (i0); \coordinate [below left=.5 of v0] (i1); \coordinate [above right=.5 of vt2] (o0); \coordinate [below right=.5 of vb2] (o1); \draw[gluon] (v0) -- (vt1); \draw[gluon] (v0) -- (vb1); \draw[gluon] (vt1) -- (vb1); \draw[gluon] (vt2) -- (vb2); \draw[gluon] (vt2) -- (vt1); \draw[gluon] (vb2) -- (vb1); \draw[gluon] (i0) -- (v0); \draw[gluon] (i1) -- (v0); \draw[gluon] (o0) -- (vt2); \draw[gluon] (o1) -- (vb2); \filldraw (v0) circle(1pt); \filldraw (vt1) circle(1pt); \filldraw (vb1) circle(1pt); \filldraw (vt2) circle(1pt); \filldraw (vb2) circle(1pt); \end{scope} \end{tikzpicture} }, { \def \scale{3ex} \begin{tikzpicture}[x=\scale,y=\scale,baseline={([yshift=-.5ex]current bounding box.center)}] \begin{scope}[node distance=1] \coordinate (v0); \coordinate [right=.5 of v0] (vm); \coordinate [right=.5 of vm] (v1); \coordinate [above left=.5 of v0] (i0); \coordinate [below left=.5 of v0] (i1); \coordinate [right=.707 of v1] (vm2); \coordinate [above=.5 of vm2] (vt); \coordinate [below=.5 of vm2] (vb); \coordinate [above right=.5 of vt] (o0); \coordinate [below right=.5 of vb] (o1); \draw[gluon] (v1) -- (vt); \draw[gluon] (v1) -- (vb); \draw[gluon] (vt) -- (vb); \draw[gluon] (v0) to[bend left=90] (v1); \draw[gluon] (v0) to[bend right=90] (v1); \draw[gluon] (i0) -- (v0); \draw[gluon] (i1) -- (v0); \draw[gluon] (o0) -- (vt); \draw[gluon] (o1) -- (vb); \filldraw (v0) circle(1pt); \filldraw (v1) circle(1pt); \filldraw (vt) circle(1pt); \filldraw (vb) circle(1pt); \end{scope} \end{tikzpicture} }, { \def \scale{3ex} \begin{tikzpicture}[x=\scale,y=\scale,baseline={([yshift=-.5ex]current bounding box.center)}] \begin{scope}[node distance=1] \coordinate (v0); \coordinate [right=.707 of v0] (vm); \coordinate [above=.5 of vm] (vt); \coordinate [below=.5 of vm] (vb); \coordinate [right=.707 of vm] (v1); \coordinate [above left=.5 of v0] (i0); \coordinate [below left=.5 of v0] (i1); \coordinate [above right=.5 of v1] (o0); \coordinate [below right=.5 of v1] (o1); \draw[gluon] (v0) -- (vt); \draw[gluon] (v0) -- (vb); \draw[gluon] (vt) -- (vb); \draw[gluon] (v1) -- (vt); \draw[gluon] (v1) -- (vb); \draw[gluon] (i0) -- (v0); \draw[gluon] (i1) -- (v0); \draw[gluon] (o0) -- (v1); \draw[gluon] (o1) -- (v1); \filldraw (v0) circle(1pt); \filldraw (v1) circle(1pt); \filldraw (vt) circle(1pt); \filldraw (vb) circle(1pt); \end{scope} \end{tikzpicture} }, { \def \scale{3ex} \begin{tikzpicture}[x=\scale,y=\scale,baseline={([yshift=-.5ex]current bounding box.center)}] \begin{scope}[node distance=1] \coordinate (v0); \coordinate [above=.5 of v0] (vt1); \coordinate [below=.5 of v0] (vb1); \coordinate [right=.707 of v0] (vm); \coordinate [right=.707 of vm] (v1); \coordinate [above=.5 of v1] (vt2); \coordinate [below=.5 of v1] (vb2); \coordinate [above left=.5 of vt1] (i0); \coordinate [below left=.5 of vb1] (i1); \coordinate [above right=.5 of vt2] (o0); \coordinate [below right=.5 of vb2] (o1); \draw[gluon] (vt1) -- (vm); \draw[gluon] (vb1) -- (vm); \draw[gluon] (vt2) -- (vm); \draw[gluon] (vb2) -- (vm); \draw[gluon] (vt1) -- (vb1); \draw[gluon] (vt2) -- (vb2); \draw[gluon] (i0) -- (vt1); \draw[gluon] (i1) -- (vb1); \draw[gluon] (o0) -- (vt2); \draw[gluon] (o1) -- (vb2); \filldraw (vt1) circle(1pt); \filldraw (vt2) circle(1pt); \filldraw (vm) circle(1pt); \filldraw (vb1) circle(1pt); \filldraw (vb2) circle(1pt); \end{scope} \end{tikzpicture} }, { \def \scale{3ex} \begin{tikzpicture}[x=\scale,y=\scale,baseline={([yshift=-.5ex]current bounding box.center)}] \begin{scope}[node distance=1] \coordinate (v0); \coordinate [right=.5 of v0] (vm1); \coordinate [right=.5 of vm1] (v1); \coordinate [right=.5 of v1] (vm2); \coordinate [right=.5 of vm2] (v2); \coordinate [above left=.5 of v0] (i0); \coordinate [below left=.5 of v0] (i1); \coordinate [above right=.5 of v2] (o0); \coordinate [below right=.5 of v2] (o1); \draw[gluon] (v0) to[bend left=90] (v1); \draw[gluon] (v0) to[bend right=90] (v1); \draw[gluon] (v1) to[bend left=90] (v2); \draw[gluon] (v1) to[bend right=90] (v2); \draw[gluon] (i0) -- (v0); \draw[gluon] (i1) -- (v0); \draw[gluon] (o0) -- (v2); \draw[gluon] (o1) -- (v2); \filldraw (v0) circle(1pt); \filldraw (v1) circle(1pt); \filldraw (v2) circle(1pt); \end{scope} \end{tikzpicture} } \text{ and } { \def \scale{3ex} \begin{tikzpicture}[x=\scale,y=\scale,baseline={([yshift=-.5ex]current bounding box.center)}] \begin{scope}[node distance=1] \coordinate (i0); \coordinate[below=1 of i0] (i1); \coordinate[right=.5 of i0] (vlt); \coordinate[right=.5 of i1] (vlb); \coordinate[right=1 of vlt] (vmt); \coordinate[right=1 of vlb] (vmb); \coordinate[right=1 of vmt] (vrt); \coordinate[right=1 of vmb] (vrb); \coordinate[right=.5 of vrt] (o0); \coordinate[right=.5 of vrb] (o1); \draw[gluon] (i0) -- (vlt); \draw[gluon] (i1) -- (vlb); \draw[gluon] (vmt) -- (vlt); \draw[gluon] (vmb) -- (vlb); \draw[gluon] (vmt) -- (vrt); \draw[gluon] (vmb) -- (vrb); \draw[gluon] (vlt) -- (vlb); \draw[gluon] (vmt) -- (vmb); \draw[gluon] (vrt) -- (vrb); \draw[gluon] (o0) -- (vrt); \draw[gluon] (o1) -- (vrb); \filldraw (vlt) circle(1pt); \filldraw (vlb) circle(1pt); \filldraw (vmt) circle(1pt); \filldraw (vmb) circle(1pt); \filldraw (vrt) circle(1pt); \filldraw (vrb) circle(1pt); \end{scope} \end{tikzpicture} } \end{align*} \else MISSING IN DRAFT MODE \fi contribute non-trivially to ${S^R_D}(X^\fourvtxgluon)$. These are the only four-leg diagrams which can be formed as the union of two primitive diagrams in this theory. The generating function for $L \geq 2$ of these diagrams is $\frac38 \sum \limits_{L\geq 2} \left(\frac{3 \hbar }{2}\right)^{L}$. Hence, the counter term map in pure Yang-Mills theory for the four-gluon amplitude in zero-dimensional QFT evaluates to, \begin{align} {S^R_D}(X^\fourvtxgluon) = - \phi \circ P_{\text{Prim}(\hopffg)}(X^{\fourvtxgluon}) + \frac38 \sum \limits_{L\geq 2} (-1)^L \left( \frac{3\hbar}{2} \right)^{L}. \end{align} To reobtain the numbers for the case with fixed legs this generating function needs to by multiplied with the value $4!=24$ as in the example for $\phi^4$-theory.  The formula for the $Z$-factor is $Z^\fourvtxgluon = 1 + 4! {S^R_D}(X^\fourvtxgluon)$. \end{expl} \section{Outlook}

It was shown that the Hopf algebra morphism $\chi_D$ can be used to disentangle the combinatorial and the analytic part of Feynman diagram calculations. The Hopf algebra of Feynman diagrams is graded by the coradical degree of its elements as demonstrated in theorem \ref{thm:gradthree} for the case of three-or-less-valent theories and theorem \ref{thm:gradfour} for four-or-less-valent theories with the tadpole diagrams set to zero. 

This grading could also be used to reorganize diagrams which `renormalize in the same way'. The preimage $\chi_D^{-1}$ of some a decorated lattice in $\hopflats$ is a space of such diagrams. With methods from \cite{brown2013angles} this could be used to the express the log-expansion of Green functions systematically. Primitive diagrams of coradical degree one contribute to the first power in the log-expansion, diagrams of coradical degree two to the second and diagrams with a coradical degree equal to the loop number contribute to the leading-log \cite{kruger2015filtrations}. 

Furthermore, this framework can be used to make more statements and explicit calculations on the weighted numbers of primitive diagrams in different QFTs and their asymptotic behavior. These aspects will be analyzed from a combinatorial perspective in a subsequent article \cite{borinskyunpub}. 

\section*{Acknowledgements} I thank Dirk Kreimer for his great supervision, full support and encouragement; David Broadhurst for sparking my interest in the subject of zero-dimensional QFT, for many helpful conversations and motivation to write these results down; Erik Panzer for discussions and for finding the diagram in fig. \ref{fig:nolattice}, with a set of subdivergences which does not form a lattice; as well as Dominique Manchon and Marko Berghoff for fruitful discussions on the subject.

\bibliography{literature}

\begin{thebibliography}{10}

\bibitem{argyres2001zero}
EN~Argyres, AFW van Hameren, RHP Kleiss, and CG~Papadopoulos.
\newblock Zero-dimensional field theory.
\newblock {\em The European Physical Journal C-Particles and Fields},
  19(3):567--582, 2001.

\bibitem{bender1976statistical}
Carl~M Bender and Tai~Tsun Wy.
\newblock Statistical analysis of {Feynman} diagrams.
\newblock {\em Physical Review Letters}, 37(3):117, 1976.

\bibitem{bergeron1999hopf}
Nantel Bergeron and Frank Sottile.
\newblock Hopf algebras and edge-labeled posets.
\newblock {\em Journal of Algebra}, 216(2):641--651, 1999.

\bibitem{berghoff2014wonderful}
Marko Berghoff.
\newblock Wonderful compactifications in quantum field theory.
\newblock {\em arXiv preprint arXiv:1411.5583}, 2014.

\bibitem{bessis1980quantum}
Daniel Bessis, Claude Itzykson, and Jean-Bernard Zuber.
\newblock Quantum field theory techniques in graphical enumeration.
\newblock {\em Advances in Applied Mathematics}, 1(2):109--157, 1980.

\bibitem{borinsky2014feynman}
Michael Borinsky.
\newblock Feynman graph generation and calculations in the {Hopf} algebra of
  {Feynman} graphs.
\newblock {\em Computer Physics Communications}, 185(12):3317--3330, 2014.

\bibitem{borinskyunpub}
Michael Borinsky.
\newblock Asymptotic enumeration of {Feynman} diagrams.
\newblock In preparation, 2015.

\bibitem{brown2013angles}
Francis Brown and Dirk Kreimer.
\newblock Angles, scales and parametric renormalization.
\newblock {\em Letters in Mathematical Physics}, 103(9):933--1007, 2013.

\bibitem{ConnesKreimer2000}
Alain Connes and Dirk Kreimer.
\newblock Renormalization in quantum field theory and the {Riemann--Hilbert}
  problem i: The {Hopf} algebra structure of graphs and the main theorem.
\newblock {\em Communications in Mathematical Physics}, 210(1):249--273, 2000.

\bibitem{connes2001renormalization}
Alain Connes and Dirk Kreimer.
\newblock Renormalization in quantum field theory and the {Riemann--Hilbert
  Problem} ii: The $\beta$-function, diffeomorphisms and the renormalization
  group.
\newblock {\em Communications in Mathematical Physics}, 216(1):215--241, 2001.

\bibitem{cvitanovic1978number}
Predrag Cvitanovi{\'c}, B~Lautrup, and Robert~B Pearson.
\newblock Number and weights of {Feynman} diagrams.
\newblock {\em Physical Review D}, 18(6):1939, 1978.

\bibitem{dyson1952divergence}
Freeman~J Dyson.
\newblock Divergence of perturbation theory in quantum electrodynamics.
\newblock {\em Physical Review}, 85(4):631, 1952.

\bibitem{ehrenborg1996posets}
Richard Ehrenborg.
\newblock On posets and {Hopf} algebras.
\newblock {\em advances in mathematics}, 119(1):1--25, 1996.

\bibitem{figueroa2005combinatorial}
H{\'e}ctor Figueroa and Jos{\'e}~M Gracia-Bondia.
\newblock Combinatorial {H}opf algebras in quantum field theory i.
\newblock {\em Reviews in Mathematical Physics}, 17(08):881--976, 2005.

\bibitem{gurau2014renormalization}
Razvan Gurau, Vincent Rivasseau, and Alessandro Sfondrini.
\newblock Renormalization: an advanced overview.
\newblock {\em arXiv preprint arXiv:1401.5003}, 2014.

\bibitem{hurst1952enumeration}
CA~Hurst.
\newblock The enumeration of graphs in the {Feynman-Dyson} technique.
\newblock In {\em Proceedings of the Royal Society of London A: Mathematical,
  Physical and Engineering Sciences}, volume 214, pages 44--61. The Royal
  Society, 1952.

\bibitem{joni1979coalgebras}
SA~Joni and G-C Rota.
\newblock Coalgebras and bialgebras in combinatorics.
\newblock {\em Studies in Applied Mathematics}, 61(2):93--139, 1979.

\bibitem{kontsevich1992intersection}
Maxim Kontsevich.
\newblock Intersection theory on the moduli space of curves and the matrix
  {A}iry function.
\newblock {\em Communications in Mathematical Physics}, 147(1):1--23, 1992.

\bibitem{Kreimer2006}
Dirk Kreimer.
\newblock Anatomy of a gauge theory.
\newblock {\em Annals of Physics}, 321(12):2757 -- 2781, 2006.

\bibitem{kruger2015filtrations}
Olaf Kr{\"u}ger and Dirk Kreimer.
\newblock Filtrations in {Dyson--Schwinger} equations: Next-to j-leading log
  expansions systematically.
\newblock {\em Annals of Physics}, 2015.

\bibitem{lando2013graphs}
Sergei~K Lando and Alexander~K Zvonkin.
\newblock {\em Graphs on surfaces and their applications}, volume 141.
\newblock Springer Science \& Business Media, 2013.

\bibitem{lautrup1977high}
B~Lautrup.
\newblock On high order estimates in {QED}.
\newblock {\em Physics letters B}, 69(1):109--111, 1977.

\bibitem{manchon2004}
Dominique Manchon.
\newblock Hopf algebras, from basics to applications to renormalization.
\newblock {\em arXiv preprint math/0408405}, 2004.

\bibitem{Schmitt1994}
William~R. Schmitt.
\newblock Incidence {Hopf} algebras.
\newblock {\em Journal of Pure and Applied Algebra}, 96(3):299 -- 330, 1994.

\bibitem{stanley1997}
Richard~P Stanley.
\newblock {\em Enumerative Combinatorics. Vol. 1, vol. 49 of {Cambridge}
  Studies in Advanced Mathematics}.
\newblock Cambridge University Press, Cambridge, 1997.

\bibitem{stern1999semimodular}
M.~Stern.
\newblock {\em Semimodular Lattices: Theory and Applications}.
\newblock Encyclopedia of Mathematics and its Applications. Cambridge
  University Press, 1999.

\bibitem{Weinberg1960}
Steven Weinberg.
\newblock High-energy behavior in quantum field theory.
\newblock {\em Phys. Rev.}, 118:838--849, May 1960.

\bibitem{Yeats2008}
K.~{Yeats}.
\newblock {\em {Growth estimates for {Dyson--Schwinger} equations}}.
\newblock PhD thesis, Boston University, 2008.

\end{thebibliography}

\end{document}